\pdfoutput=1

\documentclass[ecta,draft]{econsocart}
\usepackage[a4paper,margin=1.25in]{geometry}
\usepackage[utf8]{inputenc}
\usepackage{needspace,float,placeins,mathtools,booktabs,longtable,array,tabularx,etoolbox}
\usepackage[hidelinks,unicode]{hyperref}
\makeatletter
\def\section@size{\normalsize\MakeTextUppercase}
\def\put@numberlines@box{}
\DeclareFontShape{T1}{ptm}{m}{scit}{<->ssub*ptm/m/it}{}
\DeclareFontShape{T1}{ptm}{m}{sc}{<->ssub*lmr/m/sc}{}
\renewcommand\normalsize{\@setfontsize\normalsize{12}{14.4}%
  \abovedisplayskip=6pt plus 2pt minus 1pt
  \belowdisplayskip=\abovedisplayskip
  \abovedisplayshortskip=4pt plus 1pt
  \belowdisplayshortskip=4pt plus 1pt}
\def\baselinestretch{1.304}
\normalsize
\let\small\normalsize
\let\footnotesize\normalsize
\let\scriptsize\normalsize
\let\tiny\normalsize

\newcommand{\submissionTableStyle}{%
  \def\baselinestretch{1.304}%
  \def\rmdefault{ptm}\fontfamily{ptm}\selectfont\normalsize}
\AtBeginEnvironment{table}{\submissionTableStyle}
\AtBeginEnvironment{table*}{\submissionTableStyle}
\patchcmd{\LT@makecaption}{\reset@font}{\reset@font\normalsize}{}{}
\def\table@size{\normalsize}
\def\tabnote@size{\normalsize}
\def\quote@size{\normalsize}
\def\quotation@size{\normalsize}
\renewcommand{\shortstack}[2][c]{\begin{tabular}[c]{@{}#1@{}}#2\end{tabular}}
\def\tablecaption@size{\normalsize}
\def\figurecaption@size{\def\baselinestretch{1.304}\fontfamily{ptm}\normalsize}

\def\abstract@width{\textwidth}
\def\keyword@width{\textwidth}
\def\abstract@indent{0pt}
\def\keyword@indent{0pt}
\def\runninghead@size{\normalsize\upshape}
\def\copyright@size{\normalsize}
\def\frontmatter@cmd{\vskip12pt}

\makeatother
\startlocaldefs
\newtheorem{theorem}{Theorem}
\newtheorem{proposition}[theorem]{Proposition}
\newtheorem{lemma}[theorem]{Lemma}
\newtheorem{corollary}[theorem]{Corollary}
\newtheorem*{corollary*}{Corollary}
\theoremstyle{definition}
\newtheorem{assumption}{Assumption}

\newcommand{\R}{\mathbb R}
\newcommand{\E}{\mathbb E}

\newcommand{\Var}{\operatorname{Var}}
\newcommand{\Cov}{\operatorname{Cov}}

\newcommand{\ind}{\mathbf 1}

\newcommand{\note}[1]{\par\vspace{2pt}{\def\baselinestretch{1.304}\fontfamily{ptm}\normalsize\raggedright\noindent Note: #1\par}}
\endlocaldefs

\newcommand{\cmpPreOwnReports}{997,473}
\newcommand{\cmpPreOwnOwn}{92.243 (0.027)}
\newcommand{\cmpPreOwnComplete}{95.040 (0.022)}

\hypersetup{pdftitle={Valid Inference after Control Selection in Difference-in-Differences},pdfauthor={Takahiro Hoshino and Ryoya Nakano},pdfsubject={Polynomial selection, estimated covariance, and higher-order selective inference},pdfkeywords={Selective inference, average treatment effect on the treated, pre-treatment fit, covariance estimation, staggered adoption, polynomial selection events, higher-order asymptotics}}
\makeatletter
\AtBeginDocument{\typeout{MAIN-LAYOUT v46; font=\f@size;
  baseline=\the\baselineskip; stretch=\baselinestretch;
  textwidth=\the\textwidth; textheight=\the\textheight;
  hoffset=\the\hoffset; voffset=\the\voffset}}
\makeatother
\newcommand{\MainRef}[1]{\textup{\ref{#1}}}
\newcommand{\SuppRef}[1]{\textup{\ref{#1}}}
\newcommand{\MainEqRef}[1]{\textup{(\ref{#1})}}

\makeatletter
\let\arxivArticleSection\section
\let\arxivArticleSubsection\subsection
\let\arxivArticleSubsubsection\subsubsection
\let\arxivArticleAddcontentsline\addcontentsline
\let\arxivArticleSectionPrefix\section@prefix
\let\arxivArticleSectionNumbersep\section@numbersep
\let\arxivArticleSectionCntSize\section@cnt@size
\let\arxivArticleThesection\@thesection
\makeatother
\begin{document}

\begin{frontmatter}
\title{Valid Inference after Control Selection in Difference-in-Differences}
\runtitle{Selective inference in difference-in-differences}
\begin{aug}
\author[add1,add2]{\fnms{Takahiro}~\snm{Hoshino}\ead[label=e1]{hoshino@econ.keio.ac.jp}}
\author[add3]{\fnms{Ryoya}~\snm{Nakano}}
\address[add1]{\orgdiv{Faculty of Economics}, \orgname{Keio University}, Tokyo, Japan}
\address[add2]{\orgname{RIKEN Center for Advanced Intelligence Project}, Tokyo, Japan}
\address[add3]{\orgdiv{Graduate School of Economics}, \orgname{Keio University}, Tokyo, Japan}
\end{aug}
\begin{abstract}
Selecting difference-in-differences controls from a candidate pool by pre-treatment fit can alter the estimator's distribution and cause undercoverage despite parallel trends for every candidate. We develop the first empirically implementable selective inference for the average treatment effect on the treated. We estimate the full covariance matrix from the same observations and condition inference on the full selection event. Joint Gaussian approximation of outcome means yields uniform asymptotic validity including perfect fit and population ties. Endpoints approach known-covariance counterparts in standard-error units; widths shrink with effective sample size despite vanishing regional sample shares. We cover staggered adoption, reused observations and finite fitting-rule searches favoring larger estimates. At fixed smoothing, skewness and mean--covariance corrections give randomized tests second-order accuracy for joint selection and rejection probabilities. The adjusted interval for police deployment's effect on car theft includes zero; intervals for minimum-wage employment effects admit larger losses at $\$5$ to below $\$14$ per hour.
\end{abstract}
\begin{keyword}
\kwd{Selective inference}
\kwd{average treatment effect on the treated}
\kwd{pre-treatment fit}
\kwd{covariance estimation}
\kwd{staggered adoption}
\kwd{polynomial selection events}
\kwd{higher-order asymptotics}
\end{keyword}
\end{frontmatter}
\thispagestyle{plain}
\noindent JEL classification: C12, C21, C23.
\par\medskip

\section{Introduction}\label{sec:intro}
Consider a policy adopted in one region and evaluated by difference-in-differences (DiD). A researcher selects the untreated region whose pre-treatment path best matches the treated region. The same pre-treatment sampling fluctuations affect this choice and the estimate of the average treatment effect on the treated (ATT). Conditional on selecting a control, the ATT estimator can therefore have a different distribution from that for the same control fixed in advance.

This problem arises even when the fitting rule is prespecified and uses only pre-treatment outcomes. In our three-period example in Section~\ref{cmpv2:failure}, both candidates identify the same ATT under parallel trends, yet selection reduces the coverage of conventional 95\% confidence intervals to 90.617\%, even as sample sizes grow. We adjust inference while retaining the selected comparison and point estimate.

Researchers use selective inference to account for data-dependent choices about which estimates to report \citep{Lee2016,Loftus2015,Tian2017,TRTW2018}. In econometrics, Andrews, Kitagawa, and McCloskey (\citeyear{AKM2024}, AKM) develop inference for a selected winner, with applications to job-training programs and neighborhoods.

In DiD, \citet{Roth2022} examines pretest distortions; \citet[Propositions~4--5]{Roth2018Pretest} constructs intervals conditional on a fixed comparison passing its pretest. \citet{MikhaeilHarshaw2026} construct pretest-conditional ATT intervals under conditional extrapolation and separation from the pretest threshold, while \citet{RR2023} provide inference robust to trend departures. These methods do not incorporate the additional fit-based choice of controls from a candidate pool. Roth's pretest correction can therefore undercover conditional on both control selection and pretest acceptance (Supplemental Proposition~\SuppRef{prop:pool_pretest_nontransfer}). As our simulation shows, this also occurs when each candidate's pretest statistic and DiD contrast are correlated (Table~\ref{cmpv3:table-evidence}, panel~C).

AKM do not study selective ATT inference after fit-based DiD control selection. Applying their conditional winner method here requires joint Gaussian approximation of fit scores and target estimates. At perfect fit, squared-fit scores have non-Gaussian quadratic limits; Gaussian-score substitution remains miscalibrated even with true score moments (Supplemental Proposition~\SuppRef{prop:oracle_score_nonGaussian}). Applying AKM's simultaneous projection to our problem also fails to deliver ATT coverage conditional on the selected control: in our simulation, nominal 95\% intervals cover 91.69\% when candidate 2 is selected, versus 94.91\% for our procedure (Table~\ref{cmpv3:table-evidence}, panel~B). Table~\ref{tab:related-methods} summarizes methods; Supplemental Section~\SuppRef{cmpv2:supp} gives mathematical comparisons.

We develop the first empirically implementable selective ATT inference after choosing DiD controls from a candidate pool by pre-treatment fit. We estimate the unknown full covariance matrix from the same observations used for selection and effect estimation, allowing perfect fit and ties in population pre-treatment fit. \citet{NH2026} established exact control-selection inference with Gaussian sampling and known covariance. We reconstruct conditional distributions and full selection regions with estimated covariance, proving uniform asymptotic validity.

In Section~\ref{sec:applications}, our reanalysis of the police data of \citet{DS2004} gives a selection-adjusted 95\% interval for police deployment's effect on car thefts per block per recorded month that includes zero, whereas the conventional interval excludes it. In our Washington minimum-wage reanalysis \citep{CDLZ2019}, adjustment permits larger declines in the adult-population share employed at hourly wages from $\$5$ to below $\$14$. Within each application, both intervals retain the control, point estimate and covariance. Different rules selecting that control can require different intervals.

In Sections~\ref{sec:setup}--\ref{sec:asymptotics}, we construct feasible selective intervals for polynomial and piecewise-polynomial rules, including squared, absolute, and Huber fit. Using estimated full covariance, we recompute the conditioning direction, residual, and complete set of contrast values that would yield the observed control choice. We prove uniform asymptotic validity under joint Gaussian approximation of outcome means and consistent standardized covariance estimation, allowing perfect fit and ties in population fit. Conditional calibration extends to vanishing selection probabilities when the joint approximation error is negligible relative to those probabilities; coverage for the target actually selected and reported requires no lower selection-probability bound.

We also prove that feasible interval endpoints approach their known-covariance counterparts in standard-error units and that the ratio of interval lengths converges to one. For independent regional samples, interval widths contract at the inverse square-root rate of effective sample size as regional sample sizes grow, even when regional sample shares vanish. The effective sample size reflects uncertainty in both the treated region and its selected control. Covariance estimation therefore becomes negligible on the standard-error scale, while adjustment for control selection can persist on that scale.

In Section~\ref{sd:section}, we construct inference for cohort--period ATTs, aggregates, and event-time effects under staggered adoption and heterogeneous treatment effects. The full covariance accounts for overlapping controls and reused observations. Joint conditioning can make a later-adopting cohort's control choice consequential for inference about an earlier cohort. We also show how to combine selection paths that produce the same coefficient vector for every reported contrast, constructing intervals from the union of those paths. This permits inference conditional on the reported comparison without requiring a separate conditioning event for every complete path of cohort choices.

Section~\ref{sec:simulation} evaluates coverage conditional on the selected control, the accuracy of feasible interval endpoints, and searches over fitting rules that favor larger estimated effects. For a finite menu of rules, we incorporate both the choice of rule and the control it selects into the selection event. Our simulation shows that conditioning only on the chosen rule's control selection can leave substantial undercoverage when the rule itself was chosen after examining effect estimates.

In Section~\ref{ho:section}, we improve a smoothed randomized selective test by correcting mean skewness and the dependence between the sample mean and covariance estimated from the same observations. The correction incorporates how covariance estimation affects the conditioning direction, residual, selection mass, and critical values. With fixed smoothing and regularization and bounded eighth moments, we obtain uniform errors of $O(n^{-1})$ in the joint probability of selection and rejection for iid observations and $O(\sum_j n_j^{-1})$ for independent regional samples. These bounds allow discrete observations, perfect fit, and population fit ties. Stabilization keeps corrected rejection probabilities in $[0,1]$ while preserving these rates. We also give symmetry conditions under which stabilization leaves the joint rejection probabilities of the smoothed baseline using estimated covariance unchanged. Under separate conditions allowing smoothing to decrease, the iid joint error is $o(n^{-1/2})$; when evaluated at the true ATT, the randomized rejection decision agrees asymptotically with the nonrandomized full-selection plug-in decision for selection probabilities bounded away from zero.


\section{Control selection and the assessment of policy effects}\label{sec:applications}
We add explicit control-selection rules to the police data of \citet{DS2004} and the Washington example in Online Appendix~C of \citet{CDLZ2019}. Within each comparison, conventional and selective intervals use the same control, estimate and covariance. We condition on the realized regional calendar paths and treat auxiliary administrative inputs as fixed. Sampling across blocks or linked household groups supplies uncertainty about the selected DiD contrast, interpreted as the ATT under Section~\ref{sec:setup}. Inference across repeated regional paths additionally requires covariance contributions from their shared fluctuations.

Our primary police covariance model treats block vectors as independent across blocks, with separate distributions for the four disjoint groups; the nine observations within each block may be dependent. Applying the fixed pooling map to group means retains covariance induced by overlapping control pools. For Washington, using the Current Population Survey (CPS), our primary covariance model treats validated groups of linked household observations as independent within and across jurisdictions. We aggregate returning households before constructing joint influence contributions for employment counts, estimated adult-population denominators and normalization. Quarterly Census of Employment and Wages (QCEW) totals and administrative counts are fixed. Supplemental Section~\SuppRef{app:empirical-details} gives the influence expansions, conditions and diagnostics for Assumption~\ref{ass:array}.

\subsection{Police protection and car theft}\label{appv3:police}
After the July 18, 1994 terrorist attack in Buenos Aires, authorities protected Jewish institutions. The panel has 876 street blocks, including 37 protected blocks, observed from April through December. Each observational block is a street segment between two intersections. April--July supplies pre-treatment observations, with July covering July 1--17; August--December supplies post-treatment observations. Among the 839 unprotected blocks, pool A includes all blocks, pool B the 678 at least two blocks from the nearest Jewish institution, and pool C the 452 more than two blocks away, using the original study's block-distance classification. The nearest institution may lie outside the sampled neighborhoods. Squared pre-treatment fit selects A.

Table~\ref{tab:applications} reports theft changes per block per recorded month. Under L2 squared-fit selection, holding the estimate $-0.07753$ and standard error $0.02379$ fixed, adjustment admits both reductions and modest increases; the two-sided $p$-value moves from $0.00112$ to $0.18098$. We interpret the contrast as the ATT under parallel trends and no spillovers, including crime displacement.

\begin{table}[tbp]\centering\small\setlength{\tabcolsep}{3pt}
\caption{Policy effects with the selected comparison held fixed}\label{tab:applications}
\begin{tabular}{llrll}\toprule
Rule & Control & Estimate & Conventional 95\% & Selective 95\%\\\midrule
\multicolumn{5}{l}{\textit{Police: thefts per block per recorded month}}\\
L2 & A & $-0.07753$ & $[-0.12416,-0.03090]$ & $[-0.12267,0.04266]$\\
L1 & A & $-0.07753$ & $[-0.12416,-0.03090]$ & $[-0.11669,0.21152]$\\
\addlinespace
\multicolumn{5}{l}{\shortstack[l]{\textit{Washington: employment at hourly wages from $\$5$ to below $\$14$,}\\\textit{percentage points of adult population}}}\\
Admin./CPS L2 & PA & $-0.680$ & $[-1.494,0.134]$ & $[-1.581,0.186]$\\
Admin./CPS L1 & PA & $-0.680$ & $[-1.494,0.134]$ & $[-1.544,0.165]$\\
All CPS L2 & UT & $0.903$ & $[-0.807,2.613]$ & $[-7.234,2.424]$\\
All CPS L1 & UT & $0.903$ & $[-0.807,2.613]$ & $[-5.617,2.482]$\\\bottomrule
\end{tabular}
\note{Intervals use identical estimates and covariance. L2: squared loss; L1: absolute loss. PA: Pennsylvania; UT: Utah.}
\end{table}

\subsection{Minimum wages and the range of employment effects}\label{appv3:cps}
For Washington, we consider 39 candidate control jurisdictions, including the District of Columbia. We use 1996--1998 before treatment and 2000--2004 after, omitting 1999. We sum employment counts in the original wage bins from $\$5$ to below $\$14$ per hour and divide by the estimated adult population. For administrative/CPS, we use Washington administrative employment counts and CPS counts adjusted to QCEW totals elsewhere; for all CPS, we use CPS counts for Washington as well.

All recorded rules select Pennsylvania in administrative/CPS and Utah in all CPS. Interval changes are small in the former. In all CPS, the estimate remains $0.903$ percentage points and standard error $0.873$, but the lower endpoint moves from $-0.807$ to $-7.234$. Both intervals contain zero. For any loss threshold $b\in(.80717,7.23448)$ percentage points, the conventional interval rules out an employment decline of $b$ or more, whereas the selective interval remains compatible with such a decline. These concern wage-band employment, not total employment. The all-CPS L2 selective interval under strict household linkage is $[-7.211,2.423]$. Supplemental Section~\SuppRef{appv3:cps-full} details linkage and denominator specifications, which affect the conditioning direction and standard error.

\subsection{The selection rule is part of the inference specification}\label{appv3:protocol}
Let $Y_{jt}$ be the outcome mean for region $j$ in period $t$, and let $\mathcal P$ be the set of pre-treatment periods. For candidate $s$, center the pre-treatment gap as
\[
 d_{st}=Y_{0t}-Y_{st}-|\mathcal P|^{-1}\sum_{u\in\mathcal P}(Y_{0u}-Y_{su}),
 \qquad z_{st}=d_{st}/a_0.
\]
Losses average $z_{st}^2$, $|z_{st}|$, or Huber losses with thresholds $.25,.5,1$; $a_0=.1$ thefts per block per recorded month for police and $.01$, or one percentage point, in Washington proportion units. Priority is A/B/C or jurisdiction-code order. After obtaining the initial L2 results, we recorded this five-loss menu before computing the remaining results. Every rule selects the same observed control; primary Washington Huber and L2 endpoints agree at reported precision.

Using the estimated direction and residual of Section~\ref{sec:pivot}, the set of contrast values for which rule $r$ selects the observed control $s$ is
\[
 \mathcal T_{r,s}=\{t:S_r(\widehat R_s+\widehat b_st)=s\}.
\]
Each set retains all effect estimates along the conditioning line that yield this control. Figure~\ref{fig:cps-rule-fibers} displays the complete sets. L2 and L1 differ by up to 1.854 standard errors in all-CPS endpoints; the police upper endpoint moves by 7.098 standard errors, with its conventional interval unchanged.

\begin{figure}[tbp]\centering
\includegraphics[width=\linewidth]{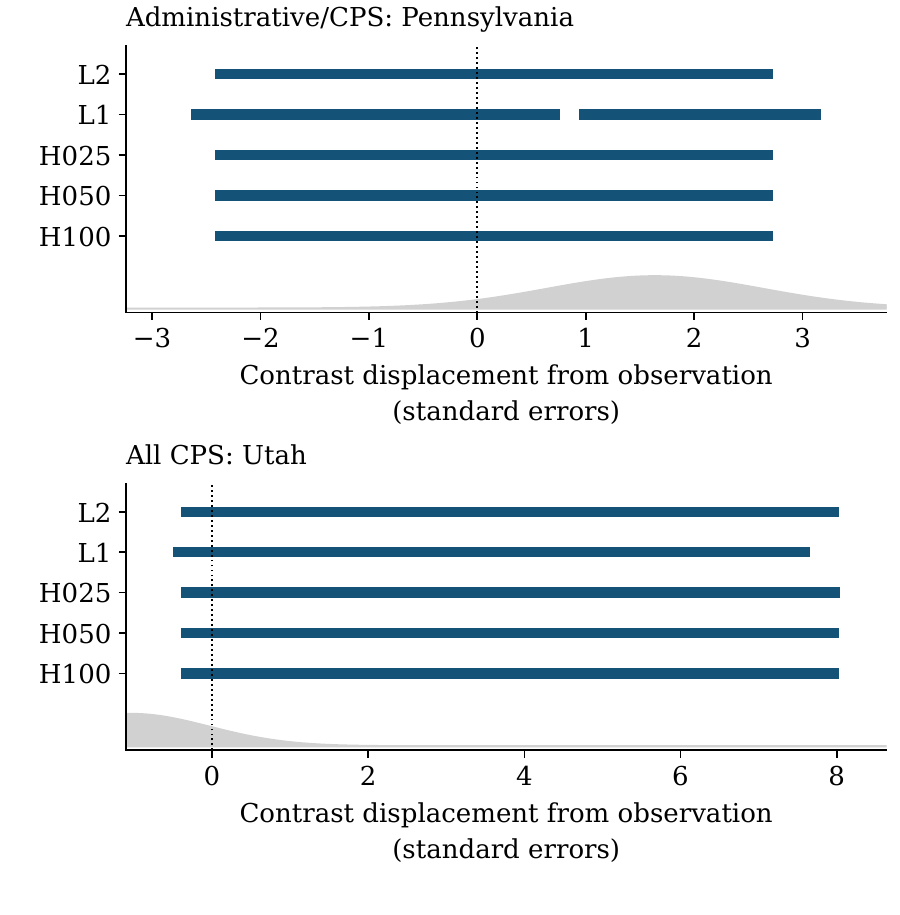}
\caption{Washington: complete sets of contrast values yielding the observed control under five losses. Displacement from the observed contrast is in standard errors; grey density is the common normal reference under zero ATT.}\label{fig:cps-rule-fibers}
\end{figure}

For police L2, conventional intervals exclude zero and selective intervals include it under independent blocks and spatial kernels with 100, 200 and 400 metre bandwidths. The L1/200m exception remains negative, $[-0.11825,-0.02050]$ ($p=.0046$). Supplemental Section~\SuppRef{appv3:spatial-details} reports all specifications and the geometry conventions.

Fixed-rule intervals condition on control choice. When researchers choose a fitting rule after examining estimates, inference conditions on both the chosen rule and its selected control (Section~\ref{simv3:search}). We detail these reporting events and the recorded application rules in Supplemental Section~\SuppRef{appv3:nested}.


\subsection{Selection error when every candidate satisfies parallel trends}\label{cmpv2:failure}
We isolate the inferential consequence of choosing a control even when both candidates identify the same ATT. Let $Z_1,Z_2,e_1,e_2$ be independent standard normal variables and set, for $0<\rho<1$,
\begin{equation}\label{cmpv2:latent}
 V_j=Z_j,\quad R_1=\rho Z_2+\sqrt{1-\rho^2}e_1,\quad
 R_2=\rho Z_1+\sqrt{1-\rho^2}e_2,\quad T_j=\tau+R_j.
\end{equation}
These quantities admit a three-period DiD representation with two controls, parallel trends for both, common ATT $\tau$, and positive definite full covariance; Appendix~\ref{app:selection-example} gives the construction. Select $S=\arg\min_jV_j^2$, the centered pre-treatment squared-fit rule. For either fixed candidate, $T_j\pm z_{.975}$ covers $\tau$ with probability $.95$. As $\rho$ approaches one, the absolute estimation error of the selected comparison approaches $\max(|Z_1|,|Z_2|)$, so coverage approaches $.95^2=.9025$. With the nonsingular choice $\rho=.9$, direct normal integration gives $.9061725185$. Scaling pre-treatment discrepancies, estimation errors and interval widths by $n^{-1/2}$ leaves this ranking and coverage unchanged.

Using known covariance in this Gaussian example, the selective interval from the same vector and point estimate has exact coverage $.95$ conditional on either selected control and also marginally. The fit ranking has changed the distribution relevant to the reported estimate, although the identifying restriction and the accuracy of each fixed-comparison interval both hold. Researchers who select a control to improve pre-treatment fit therefore need to account for that choice when interpreting the reported ATT interval.


\subsection{Feasible inference and the guarantee for a selected candidate}\label{cmpv3:comparisons}
Table~\ref{tab:related-methods} separates the reporting events and guarantees in
related work. Our feasible intervals account for the actual candidate selection;
Supplemental Section~\SuppRef{cmpv2:supp} gives analytical comparisons and proofs.

\begin{table}[tbp]\centering\setlength{\tabcolsep}{3pt}
\caption{Related methods and control-selection inference}\label{tab:related-methods}
\begin{tabular}{@{}p{.19\linewidth}p{.31\linewidth}p{\dimexpr.50\linewidth-4\tabcolsep\relax}@{}}
\toprule
Study & Problem and guarantee & Difference from this paper\\\midrule
Roth (2018/2022) & Fixed-comparison pretest inference and diagnostics & We condition on selection from a candidate pool.\\
Mikhaeil and Harshaw (2026) & Pretest-conditional ATT intervals under separation & We condition on control ranking, which remains random at perfect fit and ties in population fit.\\
Rambachan and Roth (2023) & ATT inference robust to trend departures & We adjust the distribution after control selection.\\
AKM (2024) & Gaussian-score winner inference; simultaneous projection & In our application: non-Gaussian fit scores at perfect fit; projection lacks control-conditional coverage.\\\midrule
This paper & \multicolumn{2}{p{\dimexpr.81\linewidth-2\tabcolsep\relax}@{}}{First empirically implementable ATT inference after fit-based DiD control selection. We estimate the unknown full covariance from the same observations and allow perfect fit and ties in population pre-treatment fit.}\\\bottomrule
\end{tabular}
\end{table}

For the numerical comparisons, we use independent standard normals and
\[
 V_j=Z_j,\quad R_1=.3Z_1+.8Z_2+\sqrt{.27}e_1,\quad
 R_2=.8Z_1+.3Z_2+\sqrt{.27}e_2,\quad T_j=\tau+R_j.
\]
Here $V_j$ is candidate $j$'s pretest statistic, and we select the control by minimizing $V_j^2$.
Both candidates identify the same ATT. Each contrast has variance one and
covariance $.3$ with its own pretest statistic. For panel~A of
Table~\ref{cmpv3:table-evidence}, we estimate the full covariance from the same
normal sample. Increasing sample size improves feasible full-selection coverage;
conventional coverage remains below nominal.

Roth's pretest correction need not remove distortion from candidate selection
(Supplemental Proposition~\SuppRef{prop:pool_pretest_nontransfer}). Panel~C applies this correction after control selection
and pretest acceptance, illustrating remaining undercoverage when each candidate's
pretest statistic and DiD contrast are correlated.

When we apply AKM's conditional winner procedure to this control-selection problem,
Supplemental Proposition~\SuppRef{prop:oracle_score_nonGaussian} shows that non-Gaussian squared-fit limits at perfect fit
invalidate Gaussian-score substitution even with the true score means and covariance.

For panel~B, we apply AKM's simultaneous projection to our control-selection problem with
$V_1=.1Z_1$ in \eqref{cmpv2:latent},
$\rho=.99$, and simultaneous critical value $c_P=\Phi^{-1}((1+\sqrt{.95})/2)$ for independent target errors.
Supplemental Proposition~\SuppRef{cmpv2:projection-counterexample} proves why marginal coverage does not imply the
candidate-conditional coverage evaluated here.

\begin{table}[tbp]\centering\small\setlength{\tabcolsep}{4pt}
\caption{Covariance estimation and the reporting event}\label{cmpv3:table-evidence}\label{tab:pretest_pool_comparison}
\begin{tabular}{p{.39\linewidth}rrr}\toprule
Specification & \shortstack{Evaluated\\reports} & Comparator & Full event\\\midrule
\multicolumn{4}{l}{\textit{A: Same-sample covariance, nonzero own covariance}}\\
$n=50$ & 100,000 & 91.26 (0.09) & 94.50 (0.07)\\
$n=200$ & 100,000 & 91.68 (0.09) & 94.78 (0.07)\\
$n=1{,}000$ & 100,000 & 91.72 (0.09) & 94.89 (0.07)\\
\addlinespace
\multicolumn{4}{l}{\textit{B: AKM-based simultaneous projection, $\varepsilon=.1$, $\rho=.99$}}\\
All selections & 500,000 & 96.94 (0.02) & ---\\
Candidate 2 selected & 31,937 & 91.69 (0.15) & 94.91 (0.12)\\
\addlinespace
\multicolumn{4}{l}{\textit{C: Selected control passes its pretest}}\\
Nonzero own covariance & \cmpPreOwnReports & \cmpPreOwnOwn & \cmpPreOwnComplete\\\bottomrule
\end{tabular}
\note{Coverage percentages, with Monte Carlo standard errors in percentage points; nominal coverage is 95\%. Comparators: conventional intervals (A), AKM-based simultaneous projection (B), and Roth's truncated-normal interval conditional only on the selected candidate's own pretest, with known covariance (C). Evaluated reports include all replications in A, all selections or the candidate-2 subset in B, and the 997,473 selected-candidate pretest acceptances out of 1,000,000 draws in C. A uses same-sample normal/Wishart statistics; B uses known covariance. Paired procedures share control, contrast, covariance and denominator; every candidate identifies the same ATT.}
\end{table}


\section{ATT, control selection, and the feasible interval}\label{sec:setup}
\subsection{Observations and the causal comparison}
Region $0$ is treated; $1,\ldots,J$ form a prespecified candidate pool. The index $n$ labels a sequence of sampling problems. Region $j$ has $n_j$ observational units, each observed over $T$ periods. Observe individual vectors $X_{nji}=(X_{nji1},\ldots,X_{njiT})'$ and define
\begin{align}
 \bar X_{nj}&=n_j^{-1}\sum_iX_{nji},\quad
 Y_n=(\bar X_{n0}',\ldots,\bar X_{nJ}')',\quad
 \mu_n=\E Y_n,\quad\mu_{nj}=\E X_{nji},\label{eq:regional-means}\\
 L_n&=\operatorname{blockdiag}_j(n_j^{-1/2}I_T),\quad
 V_n=\operatorname{blockdiag}_j(\Gamma_{nj}),\quad
 \Gamma_{nj}=\Var(X_{nji}).\label{eq:regional-scale}
\end{align}
For this basic implementation, we assume independent regions with i.i.d. vectors within each region and unrestricted temporal dependence. Regions and periods are fixed; sample sizes may differ. The general theory in Section~\ref{sec:asymptotics} permits full joint covariance under Assumption~\ref{ass:array}. Cluster and spatial specifications, including those in Section~\ref{sec:applications}, enter through the joint normal approximation and consistent covariance estimation required there.

For this independent-region implementation, take $n_j\ge2$ and compute the unbiased within-region covariance from the same observations used for selection and effect estimation:
\begin{equation}\label{eq:regional-cov-est}
 \widehat\Gamma_{nj}=\frac{\sum_i(X_{nji}-\bar X_{nj})(X_{nji}-\bar X_{nj})'}{n_j-1},
 \qquad \widehat\Sigma_n=L_n\widehat V_nL_n',\quad
 \widehat V_n=\operatorname{blockdiag}_j(\widehat\Gamma_{nj}).
\end{equation}
The joint vector $Y_n$ includes all coordinates used for selection and effect estimation.

Fix disjoint sets of pre-treatment periods $P$ and post-treatment periods $A$, $|P|\ge2$, $|A|\ge1$, with candidates untreated through $\max A$. Write $D_{njt}=\bar X_{n0,t}-\bar X_{nj,t}$, $\bar D_{nj,P}=|P|^{-1}\sum_{t\in P}D_{njt}$. Centered squared fit gives
\begin{align}
 Q_j(Y_n)&=\sum_{t\in P}(D_{njt}-\bar D_{nj,P})^2
       =\|B_jY_n\|^2=Y_n'M_jY_n,\quad M_j=B_j'B_j,
       \label{eq:selection}\\
 S_n&=\arg\min_j Q_j(Y_n),\qquad
 T_{n,s}=|A|^{-1}\sum_{t\in A}D_{nst}-\bar D_{ns,P}
          =c_s'Y_n,\quad \theta_{n,s}=c_s'\mu_n.\label{eq:target}
\end{align}
Here $B_jY_n=(D_{njt}-\bar D_{nj,P})_{t\in P}$ removes the average pre-treatment level gap. In this basic rule, outcome, periods, pool, weights and tie priority are fixed. Outside ties, nonidentical scores select $s$ on
\begin{equation}\label{eq:joint}
 \widetilde E_{n,s}=\bigcap_{j\ne s}\{Y_n'A_{j,s}Y_n\ge0\},
 \qquad A_{j,s}=M_j-M_s.
\end{equation}

Let $\mu^0_{nj,t}$ denote the untreated mean and $\mu^1_{n0,t}$ the treated region's mean under its actual regime. We assume consistency, no spillovers, and no anticipation in $P$. Define
\begin{align}
 \tau_n&=|A|^{-1}\sum_{t\in A}(\mu^1_{n0,t}-\mu^0_{n0,t}),
       \label{sd:single-att}\\
 \Delta_{n,s}&=|A|^{-1}\sum_{t\in A}(\mu^0_{n0,t}-\mu^0_{ns,t})
       -|P|^{-1}\sum_{t\in P}(\mu^0_{n0,t}-\mu^0_{ns,t}),
       \label{sd:single-pt}\\
 \theta_{n,s}&=\tau_n+\Delta_{n,s}.\label{sd:single-decomposition}
\end{align}
Parallel trends, $\Delta_{n,s}=0$, identifies $\tau_n$; if it holds for every selectable candidate, $\theta_{n,S_n}$ is the same ATT. We specify the candidate pool and adjust inference for subsequent selection by pre-treatment fit. Section~\ref{sd:section} bounds coverage if an invalid comparison can be selected.

\subsection{Polynomial and piecewise-polynomial rules}\label{poly:setup}
\paragraph*{Selection structure (S).}
Fix dimension $d$, finite labels $\mathcal S$, and integers $M_0,D\ge1$. Nonzero known polynomials $p_{n1},\ldots,p_{nM_n}$ have degree at most $D$, $M_n\le M_0$. A predetermined map $\lambda_n:\{-1,0,1\}^{M_n}\to\mathcal S$ specifies
\begin{equation}\label{poly:selector}
 S_n(y)=\lambda_n\{\operatorname{sgn}p_{n1}(y),\ldots,
                         \operatorname{sgn}p_{nM_n}(y)\},
 \qquad E_{n,s}=\{S_n(Y_n)=s\}.
\end{equation}
Coefficients, map and deterministic nonzero contrasts $c_{n,s}$ may vary with $n$; dimension, degree, maximum number $M_0$ of defining polynomials and label count stay fixed. Constants and $M_n=0$ are allowed. On each loss region, resolve identically tied scores by the recorded priority and omit the resulting identically zero comparisons. This permits scores equal on open loss regions; the remaining defining polynomials are nonzero even when population fit scores tie or equal zero.
\begin{align}
 \mathcal B_n&=\bigcup_{k=1}^{M_n}\{p_{nk}(Y_n)=0\},\label{poly:boundary}\\
 \mathcal E_{n,s}^{\circ}&=
   \bigcup_{\substack{\sigma\in\{-1,1\}^{M_n}:\lambda_n(\sigma)=s}}
   \bigcap_k\{y:\sigma_kp_{nk}(y)>0\}.\label{poly:signcells}
\end{align}
Write $\mathcal A_{n,s}=\{\sigma\in\{-1,1\}^{M_n}:\lambda_n(\sigma)=s\}$. Outside $\mathcal B_n$, the exact event is the union of all assigned open sign cells; inference retains them all.

For fixed affine discrepancies $r_s(y)=B_sy-d_s$, weights $W_s\succeq0$, $\omega_t>0$, and threshold $\kappa_H>0$, examples are
\begin{align}
 Q_s^{(2)}(y)&=r_s(y)'W_sr_s(y),\qquad
 Q_s^{(1)}(y)=\sum_t\omega_t|r_{st}(y)|,\label{poly:losses}\\
 Q_s^{(H)}(y)&=\sum_t\omega_t\rho_{\kappa_H}(r_{st}(y)),\quad
 \rho_{\kappa_H}(u)=
 \begin{cases}u^2/2,&|u|\le\kappa_H,\\
 \kappa_H|u|-\kappa_H^2/2,&|u|>\kappa_H.
 \end{cases}\label{poly:huber}
\end{align}
Signs of $r_{st}$ define absolute-loss regions; Huber also uses $r_{st}=\pm\kappa_H$. Within each loss region, score differences have degree at most one for absolute loss and at most two for squared and Huber loss, with the same ATT contrast. Section~\SuppRef{poly:piecewise-supp} gives the exact representation.

The full selection event is specific to the declared procedure and retains every
component yielding the reported choice. Implementation requires the candidate pool, periods, loss weights
and thresholds, and tie priorities. A recorded finite fitting-rule search is covered
when its composed event satisfies (S); Section~\ref{simv3:search} conditions jointly on
the rule and control.

\subsection{The conditional distribution and interval}\label{sec:pivot}
For the Gaussian reference model $Y\sim N_d(\mu,\Sigma)$ with known covariance, define
\begin{equation}\label{eq:decomp}
 T_s=c_s'Y,\quad v_s=c_s'\Sigma c_s>0,\quad
 b_s=\Sigma c_s/v_s,\quad R_s=Y-b_sT_s.
\end{equation}
Gaussian projection makes $T_s$ independent of $R_s$. Conditioning on $R_s$ reduces the selection restriction to the scalar effect estimate $T_s$. Given $R_s=r$, the full set of estimates yielding the reported control is
\begin{equation}\label{eq:fiber}
 \mathcal T_s(r;\Sigma)=\{t:S(r+b_st)=s\}.
\end{equation}
Restrict every defining polynomial to this line and evaluate the selector on every component of $\R$ after removing all such real roots, including the unbounded intervals and the whole line when there are no roots. Retain all intervals labelled $s$. Under (S) and $\Sigma\succ0$, at most $M_0D+1$ components occur outside a Gaussian null set. Repeated roots need not change signs; structural ties follow priority.

For $v>0$ and a measurable set $\mathcal T$ of positive normal mass, the cumulative distribution function of $N(\theta,v)$ restricted to $\mathcal T$ is
\begin{equation}\label{eq:pivot}
 F_\theta(t;v,\mathcal T)=
 \frac{\int_{\mathcal T\cap(-\infty,t]}e^{-(x-\theta)^2/(2v)}\,dx}
      {\int_{\mathcal T}e^{-(x-\theta)^2/(2v)}\,dx}.
\end{equation}
\begin{lemma}[Gaussian reference distribution]\label{lem:gaussian}
In the Gaussian reference model with known $\Sigma$, fix $0<\alpha<1$. For measurable selection with $P(E_s)>0$ and $v_s>0$, $T_s\mid(E_s,R_s)$ is $N(c_s'\mu,v_s)$ restricted to $\mathcal T_s(R_s;\Sigma)$ for almost every selected residual. Consequently
\begin{equation}\label{eq:selci}
 C^{\mathrm{sel}}_\alpha(Y;\Sigma)=
 \{\theta:\alpha/2\le F_\theta(T_s;v_s,\mathcal T_s(R_s;\Sigma))
                      \le1-\alpha/2\}
\end{equation}
has conditional coverage $1-\alpha$ given $(E_s,R_s)$ and given $E_s$:
\begin{equation}\label{eq:coverage}
 P\{c_s'\mu\in C^{\mathrm{sel}}_\alpha(Y;\Sigma)\mid E_s\}=1-\alpha.
\end{equation}
Its endpoints are finite and unique almost surely. Under (S) and $\Sigma\succ0$, the observation is an interior point of the full fiber almost surely.
\end{lemma}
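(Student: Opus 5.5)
The plan is the standard polyhedral/fiber argument of \citet{Lee2016}, adapted to the full selection event. First, since $(T_s,R_s)$ is a linear image of a Gaussian vector and
\[
\Cov(R_s,T_s)=\Sigma c_s-b_sv_s=0,
\]
$T_s$ is independent of $R_s$ with $T_s\sim N(c_s'\mu,v_s)$. Because $Y=R_s+b_sT_s$ identically, we have
\[
E_s=\{S(Y)=s\}=\{T_s\in\mathcal T_s(R_s;\Sigma)\}.
\]
Measurability of $\{(r,t):S(r+b_st)=s\}$ follows from measurability of $S$. For any Borel $B$ and any measurable function $g$ of $R_s$, I would write
\[
E[\ind\{T_s\in B\}\ind_{E_s}g(R_s)]=E\Big[g(R_s)\int_{B\cap\mathcal T_s(R_s)}\phi_{v_s}(x-c_s'\mu)\,dx\Big],
\]
using independence and Fubini. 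This identifies the conditional law of $T_s$ given $(E_s,R_s)$ as the truncated normal, on the set of residuals where $\mathcal T_s(R_s)$ has positive Lebesgue mass. That set carries full conditional probability given $E_s$, because the event that the fiber is null has zero probability and $P(E_s)>0$.

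Second, I would prove coverage by the probability integral transform. Conditionally on $(E_s,R_s=r)$, $T_s$ has a continuous distribution (a truncation of a normal to a positive-mass set), so $F_{c_s'\mu}(T_s;v_s,\mathcal T_s(r))$ is exactly Uniform$(0,1)$. Hence the event $\alpha/2\le F\le1-\alpha/2$ has probability $1-\alpha$ conditionally on $(E_s,R_s)$. Integrating over $R_s$ given $E_s$ yields \eqref{eq:coverage}.

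Third, for finite and unique endpoints, I would use the fact that truncated normal families have monotone likelihood ratio in $\theta$. Hence $\theta\mapsto F_\theta(t;v,\mathcal T)$ is strictly decreasing and continuous whenever $t$ lies in the interior of $\mathcal T$ relative to its mass, that is, when both $\mathcal T\cap(-\infty,t]$ and $\mathcal T\cap(t,\infty)$ have positive mass. I would then need limits $F_\theta\to1$ as $\theta\to-\infty$ and $F_\theta\to0$ as $\theta\to+\infty$. These follow because, as $\theta\to-\infty$, the normalized mass of $\mathcal T\cap(t,\infty)$ relative to $\mathcal T\cap(-\infty,t]$ vanishes: compare densities at points below and above $t$, using positive mass of $\mathcal T$ within $(t-\epsilon,t]$, which holds when $t$ is interior. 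The two equations $F_\theta=\alpha/2$ and $F_\theta=1-\alpha/2$ then have unique finite solutions, and the interval is their span. Strict monotonicity comes from $\partial_\theta F_\theta(t)=v^{-1}F_\theta(t)\{E[X\mid X\le t]-E[X]\}<0$ under the truncated law.

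The main obstacle is the interiority claim under (S) and $\Sigma\succ0$. I would restrict each $p_{nk}$ to the line $r+b_st$. Each restriction is a polynomial in $t$ of degree at most $D$. If it is identically zero for a given $r$, then $p_{nk}(Y)=0$, which is a Gaussian null event since $p_{nk}$ is nonzero and $\Sigma\succ0$; so almost surely every restriction is nonzero with at most $D$ roots. This leaves at most $M_0D+1$ open intervals on which all signs are constant, so $S$ is constant on each of them. Almost surely $Y\notin\mathcal B$, so $T_s$ is not a root, and it lies in an open interval of the partition on which $S=s$. That open interval is contained in $\mathcal T_s(R_s)$, so $T_s$ is an interior point with positive mass on both sides. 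This supplies the hypothesis used in the third step and completes the proof.
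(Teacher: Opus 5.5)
Your route is the paper's. $\Cov(R_s,T_s)=0$ gives independence, and Fubini identifies the normal law restricted to $\mathcal T_s(R_s;\Sigma)$. The probability integral transform then gives exact coverage given $(E_s,R_s)$ and hence given $E_s$. The derivative $v^{-1}\Cov_\theta(\ind\{X\le t\},X)<0$ gives strict monotonicity, and root enumeration along the line gives interiority under (S).

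\textbf{Gap in scope.} The lemma asserts finite, unique endpoints for an \emph{arbitrary} measurable selection. You obtain the two-sided positive mass needed in your third step only from topological interiority, which you prove only under (S) and $\Sigma\succ0$. Your limit argument also asks for mass of $\mathcal T$ inside $(t-\epsilon,t]$, which is more than the limits require.

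The missing observation is already in your second step. Conditional uniformity of $F_{c_s'\mu}(T_s;v_s,\mathcal T_s(R_s;\Sigma))$ places this value in $(0,1)$ almost surely on $E_s$. That says precisely that $\mathcal T\cap(-\infty,T_s]$ and $\mathcal T\cap(T_s,\infty)$ both have positive Lebesgue measure. Continuity of measure then gives $\delta>0$ such that $\mathcal T\cap(-\infty,T_s-\delta]$ and $\mathcal T\cap[T_s+\delta,\infty)$ have positive measure. Factoring the density as $e^{-x^2/(2v)}e^{x\theta/v}$ times a factor constant in $x$ yields $F_\theta\le Ce^{-\delta\theta/v}$ for $\theta>0$ and $1-F_\theta\le Ce^{\delta\theta/v}$ for $\theta<0$. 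This needs neither interiority nor (S).

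This is the paper's argument: it uses continuity of the restricted law to put the CDF at the observation in $(0,1)$, and invokes (S) only for the separate interiority claim. Alternatively, the Lebesgue density theorem shows that your condition on $(T_s-\epsilon,T_s]$ holds almost surely for any measurable fiber.

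\textbf{Smaller slip.} The event that $\mathcal T_s(R_s;\Sigma)$ is Lebesgue-null need not have probability zero. If the selection depends on $R_s$ alone, the fiber is $\R$ or empty, and it is empty with probability $1-P(E_s)$. What you need is $P(E_s\cap\{\mathcal T_s(R_s;\Sigma)\text{ null}\})=0$, and your Fubini identity gives this with $B=\R$ and $g$ the indicator of a null fiber.
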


For feasible inference, recompute the variance $v_s$, direction $b_s$, residual $R_s$ and complete selection set using $\widehat\Sigma_n$. Hold this covariance and selection set fixed while solving $F_\theta(T_s)=1-\alpha/2$ and $\alpha/2$ for endpoints. Conventional inference uses the same estimate and $\widehat\sigma_s=(c_s'\widehat\Sigma_nc_s)^{1/2}$ with selection set $\R$. We detail the root-based construction, normal integration and inversion over the full accepted union in Supplemental Section~\SuppRef{supp:full-fiber-inversion}. Section~\ref{sec:asymptotics} establishes uniform asymptotic validity when covariance is estimated from the same observations. Appendix~\ref{app:proofs} gives the proofs.


\section{Uniform inference with estimated full covariance}\label{sec:asymptotics}
Squared discrepancies fluctuate to second order at perfect fit: for $\widehat d_n=d_n+n^{-1/2}Z_n$,
\[
 \|\widehat d_n\|^2=\|d_n\|^2+2n^{-1/2}d_n'Z_n+n^{-1}\|Z_n\|^2.
\]
At $d_n=0$ the linear term vanishes; at $d_n=h/\sqrt n$ both terms affect ranking. We retain the original polynomial fit comparisons, covering perfect fit and local departures from it in a common argument.

\begin{assumption}[Joint normal approximation and covariance consistency]\label{ass:array}
Let $L_n$ be known, deterministic, and nonsingular, and put $Z_n=L_n^{-1}(Y_n-\mu_n)$. For a distribution class $\mathcal P_n$ and constants $0<\underline v<\overline v<\infty$, suppose
\begin{align}
 &\underline v I_d\preceq V_n(P)\preceq\overline v I_d,\qquad
 \sup_{P\in\mathcal P_n}d_{\mathrm{BL}}\{\mathcal L_P(Z_n),N_d(0,V_n(P))\}\to0,
       \label{eq:uniform-clt}\\
 &\sup_{P\in\mathcal P_n}P\{\|\widehat V_n-V_n(P)\|>\epsilon\}\to0
       \quad(\epsilon>0).\label{eq:uniform-cov}
\end{align}
Here $\widehat V_n$ is symmetric, the norm is the operator norm, and $d_{\mathrm{BL}}$ is the bounded-Lipschitz distance. The matrix $V_n(P)$ describes covariance on the standardized scale; $L_nV_n(P)L_n^{\prime}$ is the corresponding covariance on the outcome-mean scale. The mean vector and covariance estimator may use the same observations and be dependent.
\end{assumption}
Set $\widehat\Sigma_n=L_n\widehat V_nL_n'$. If necessary, raise the eigenvalues of $\widehat V_n$ to at least $\varepsilon_n\downarrow0$ before rescaling. Assumption~\ref{ass:array} makes the probability of an adjustment vanish uniformly. On the boundary event $\mathcal B_n$ of the defining polynomial inequalities, or if the covariance used is not positive definite, define the pivot as $1/2$ and the interval as $\R$. Lemma~\ref{rev:boundary} and covariance consistency make these exceptions uniformly negligible.

\begin{theorem}[Feasible inference after polynomial selection]\label{thm:plugin}
Under (S) and Assumption~\ref{ass:array}, let $U_{n,s}$ be the pivot in \eqref{eq:pivot}, evaluated at $\theta_{n,s}=c_{n,s}'\mu_n$, with the variance, direction, residual, and full fiber recomputed using $\widehat\Sigma_n$. Then
\begin{equation}\label{eq:pluginjoint}
 D_n:=\sup_{P\in\mathcal P_n}\max_s\sup_{u\in[0,1]}
 |P(U_{n,s}\le u,E_{n,s})-uP(E_{n,s})|\to0.
\end{equation}
The quantity $D_n$ measures the joint probability error for selecting $s$ and observing a pivot below $u$. Writing $p_{n,s}=P(E_{n,s})$, for $p_{n,s}>0$,
\begin{align}
 \sup_{u\in[0,1]}|P(U_{n,s}\le u\mid E_{n,s})-u|
     &\le D_n/p_{n,s},\label{eq:pluginpivot}\\
 |P\{\theta_{n,s}\in C^{\mathrm{sel}}_\alpha(Y_n;\widehat\Sigma_n)
        \mid E_{n,s}\}-(1-\alpha)|&\le2D_n/p_{n,s},\label{eq:plugincoverage}\\
 |P\{\theta_{n,S_n}\in C^{\mathrm{sel}}_\alpha(Y_n;\widehat\Sigma_n)\}
          -(1-\alpha)|&\le2|\mathcal S|D_n.\label{eq:pluginmarginal}
\end{align}
Equations~\eqref{eq:pluginpivot} and \eqref{eq:plugincoverage} control the pivot distribution and interval coverage conditional on selecting $s$; \eqref{eq:pluginmarginal} controls coverage for the target actually selected and reported. Thus candidate-conditional coverage converges uniformly to $1-\alpha$ over $p_{n,s}\ge p_0>0$, and marginal coverage for the reported target converges uniformly without a lower selection-probability bound. For vanishing selection probabilities, conditional calibration follows whenever $D_n/p_{n,s}\to0$; uniformly over $p_{n,s}\ge\rho_n$, it follows from $D_n/\rho_n\to0$.
\end{theorem}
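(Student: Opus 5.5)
The plan is to reduce everything to the standardized scale, where the pivot is a fixed functional of $(Z_n,\widehat V_n)$ and the local mean parameter. Scale-invariance then lets us compare with the Gaussian reference model of Lemma~\ref{lem:gaussian}, in which the joint identity holds exactly.

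Write $Y_n=\mu_n+L_nZ_n$. The selection event, variance, direction, residual and fiber recomputed with $\widehat\Sigma_n=L_n\widehat V_nL_n'$ depend on $(Z_n,\widehat V_n)$ and on the parameters $(\mu_n,L_n,c_{n,s},p_{nk})$. In particular, $\{t:S_n(\widehat R_s+\widehat b_st)=s\}$ centred at $\theta_{n,s}$ and divided by $\widehat v_s^{1/2}$ gives a set in $z$-units, and $U_{n,s}=G_n(Z_n,\widehat V_n)$ for a map $G_n$ indexed by the polynomial coefficients.

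\emph{Step 1: oracle identity.} Let $Z^*\sim N_d(0,V_n(P))$ and evaluate the pivot with the true $V_n(P)$. Lemma~\ref{lem:gaussian} gives $P(G_n(Z^*,V_n)\le u,\,E^*_{n,s})=u\,P(E^*_{n,s})$ exactly for every $u$, $s$ and parameter value.

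\emph{Step 2: replace $\widehat V_n$ by $V_n$.} By \eqref{eq:uniform-cov} and the eigenvalue bounds, $\widehat v_s/v_s\to1$, $\widehat b_s-b_s\to0$ and $\widehat R_s-R_s\to0$ on the standardized scale, uniformly in probability. We still need the fiber to move only slightly. Restricting each $p_{nk}$ to the line gives a polynomial of degree at most $D$ in $t$, and its roots vary continuously with the line except near multiple roots or leading-coefficient degeneration. This is the main obstacle, because the coefficients of $p_{nk}$ may diverge with $n$ (perfect fit, local alternatives). I would first normalize each $p_{nk}(\mu_n+L_n z)$ to unit coefficient norm in $z$; this does not change signs. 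Then I would use compactness of the normalized coefficient sets, the bounded number of defining polynomials and the bounded degree, together with an anti-concentration bound for nonzero polynomials under nondegenerate Gaussians (Carbery--Wright type, via Lemma~\ref{rev:boundary}). Together these show that, except on an event of uniformly small probability, the observation is at distance at least $\delta$ from every root, and the root sets move by $o_p(1)$. It follows that the truncated-normal CDF, a continuous functional of the fiber endpoints and $v$, satisfies $|G_n(Z_n,\widehat V_n)-G_n(Z_n,V_n)|\to0$ in probability uniformly. Boundary and non-positive-definite cases are assigned pivot $1/2$ and have vanishing probability.

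\emph{Step 3: Gaussian replacement.} The map $z\mapsto(\ind\{G_n(z,V_n)\le u\},\ind\{z\in E_{n,s}\})$ is not Lipschitz, so I would smooth it. Replace the indicators by Lipschitz approximations with slope $1/\eta$. The bounded-Lipschitz bound \eqref{eq:uniform-clt} controls the smoothed version by $d_{\mathrm{BL}}/\eta$. The smoothing bias is the Gaussian probability that $z$ lies within $\eta$ of $\mathcal B_n$, or that the pivot lies within $\eta$ of $u$. The first is small by the same polynomial anti-concentration. The second is at most $\eta$, because the oracle pivot is exactly uniform given selection. To get uniformity in $u$, use a finite grid together with monotonicity in $u$. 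Choosing $\eta\to0$ slowly and taking the supremum over $P$ and the finitely many $s$ gives $D_n\to0$.

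\emph{Consequences.} Dividing by $p_{n,s}$ gives \eqref{eq:pluginpivot}. Coverage is the event $\alpha/2\le U_{n,s}\le1-\alpha/2$, which holds because $F_\theta$ is decreasing in $\theta$. Applying \eqref{eq:pluginpivot} at both endpoints gives \eqref{eq:plugincoverage}. For \eqref{eq:pluginmarginal}, sum the joint errors over $s$, using $\sum_sP(E_{n,s})=1$ (up to the negligible boundary), without dividing by $p_{n,s}$. The statements about vanishing selection probabilities follow immediately.
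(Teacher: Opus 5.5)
\textbf{Comparison with the paper's route.} Your three-step chain (feasible pivot, then the true-$V_n$ pivot on the same $Z_n$, then the true-$V_n$ pivot on $Z^*\sim N_d(0,V_n)$) is a workable alternative to the paper's proof, which argues by contradiction. Along a violating subsequence with $P(E_{n,s})\to p>0$, the paper fixes $s$, the atom count and $\lambda$, and extracts limits $a_n\to a$, $V_n\to V\succ0$ and $\widetilde q_{nk}\to q_k\neq0$. It then gets $(Z_n,\widehat V_n)\Rightarrow(Z,V)$ by Slutsky. In a Skorokhod coupling, it applies Lemma~\ref{rev:inverse} pathwise on $E^*$, so the feasible pivot converges almost surely to the limiting full-fiber Gaussian pivot, which is exactly uniform given $Z\in E^*$. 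That route needs only continuity of the pivot at Gaussian-almost every point, never a uniform modulus.

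\textbf{The gap in Step 3.} Your route does need such a modulus, and Step 3 does not supply it. The bounded-Lipschitz bound controls $|\E f(Z_n)-\E f(Z^*)|$ only for $f$ Lipschitz in $z$. Smoothing $\ind\{G_n(z,V_n)\le u\}$ in the pivot value gives a function of $z$ with Lipschitz constant $\eta^{-1}\operatorname{Lip}\{G_n(\cdot,V_n)\}$, and the latter has no uniform bound:
\begin{itemize}
\item the selected pivot jumps across label-changing atom zero sets, and is set to $1/2$ on all of them;
\item off those sets, its modulus degenerates as $z$ approaches them or as the fiber's Gaussian mass becomes small;
\item the relevant constants depend on coefficients that range over the class.
\end{itemize}
If you smooth in $z$ instead, the bias is the Gaussian mass within Euclidean distance $\eta$ of $\partial\{z:G_n(z,V_n)\le u\}$. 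That is not the probability that the pivot lies within $\eta$ of $u$, and converting one into the other again needs the modulus. So ``the second is at most $\eta$ because the oracle pivot is uniform'' does not follow as written.

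\textbf{What would close it.} The missing ingredient is uniform equicontinuity. For each $M,\delta>0$, the maps $(z,W)\mapsto G_n(z,W)$ should be equicontinuous on $\{\|z\|\le M,\ \min_k|\widetilde q_{nk}(z)|\ge\delta\}\times\{\|W-V_n\|\le\epsilon_0\}$, uniformly over $n$ and $\mathcal P_n$. This follows by contradiction from compactness of the normalized coefficients and covariances, together with Lemma~\ref{rev:inverse}. With it, Step 2 is immediate. Step 3 then goes through by tapering with a Lipschitz cutoff and approximating uniformly continuous bounded test functions by Lipschitz ones (inf-convolution), with the boundary and tail pieces handled by Lemma~\ref{rev:boundary}.

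\textbf{Step 2's justification.} Your argument for Step 2 should also change. Root sets of the restricted polynomials need not move by $o_p(1)$: double roots can become complex pairs, and roots escape to infinity when a leading coefficient along the limiting line vanishes. What holds, and suffices, is that each limiting restricted polynomial is nonzero at $t$ and so has finitely many roots. Hence fiber indicators converge Lebesgue-almost everywhere and the Gaussian-weighted integrals converge by dominated convergence. Your $\delta$-neighborhood of $t$ keeps the denominator away from zero.

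\textbf{Consequences.} These are fine. The coverage event equals $\{\alpha/2\le U_{n,s}\le1-\alpha/2\}$ by the definition \eqref{eq:selci}, using the left limit at $\alpha/2$. The $E_{n,s}$ form an exact partition, so no boundary caveat is needed in the marginal bound.
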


Standardize and normalize each defining polynomial as
\begin{equation}\label{eq:normalized-comparison}
 q_{nk}(z)=p_{nk}(\mu_n+L_nz)=\sum_{|\beta|\le D}a_{nk,\beta}z^\beta,
 \qquad \widetilde q_{nk}=q_{nk}/\|q_{nk}\|_{\mathrm{coef}}.
\end{equation}
Here $\|q_{nk}\|_{\mathrm{coef}}=(\sum_{|\beta|\le D}a_{nk,\beta}^2)^{1/2}$ is the Euclidean norm of the coefficient vector. Positive normalization preserves selection and yields nonzero subsequential polynomial limits. We jointly control covariance estimation, polynomial boundaries and normal integrals over the full set of contrast values yielding the selected label. This accommodates perfect fit, ties in population fit including ties at positive loss, repeated roots and moving means. Appendix~\ref{app:uniform-plugin-proof} gives the proof.

\begin{theorem}[Plug-in equivalence of interval endpoints]\label{rev:endpoints}
Assume the conditions of Theorem~\ref{thm:plugin} and fix $0<\alpha<1$.
For the Gaussian reference covariance $\Sigma_n=L_nV_nL_n'$, set
\[
r_{n,s}=\|L_n'c_{n,s}\|,\qquad
 \sigma_{n,s}=(c_{n,s}'\Sigma_nc_{n,s})^{1/2},\qquad
 \widehat\sigma_{n,s}=(c_{n,s}'\widehat\Sigma_nc_{n,s})^{1/2}.
\]
For the selected $s$ and the same observed $Y_n$, compare the interval using reference covariance $\Sigma_n$ with its feasible counterpart using $\widehat\Sigma_n$:
\[
C^\circ_{n,s}=[\ell^\circ_{n,s},u^\circ_{n,s}],\qquad
 \widehat C_{n,s}=[\widehat\ell_{n,s},\widehat u_{n,s}].
\]
Each uses its respective covariance for the variance, conditional line, residual, and truncation set. Off the exceptions, define
\begin{align*}
A_{n,s}&=\frac{
  \max\{|\widehat\ell_{n,s}-\ell^\circ_{n,s}|,
        |\widehat u_{n,s}-u^\circ_{n,s}|\}}{\sigma_{n,s}}
 ,\\
R_{n,s}&=\frac{\max\{|\widehat\ell_{n,s}-\theta_{n,s}|,
             |\widehat u_{n,s}-\theta_{n,s}|\}}{\sigma_{n,s}}.
\end{align*}
Thus $A_{n,s}$ measures the endpoint difference between the two intervals, and $R_{n,s}$ measures the distance of the feasible endpoints from the target, both in standard-error units. Set both to $\infty$ on exceptions.
For any $\epsilon>0$,
\begin{equation}\label{rev:eq:endpoint-equivalence}
 \sup_{P\in\mathcal P_n}\max_s
P\{E_{n,s},\ A_{n,s}>\epsilon\}\longrightarrow0.
\end{equation}
The following bounds control large endpoint errors and small interval widths in standard-error units:
\begin{align}
 &\lim_{M\to\infty}\limsup_{n\to\infty}
   \sup_{P\in\mathcal P_n}\max_s
P\{E_{n,s},R_{n,s}>M\}=0,\label{rev:eq:endpoint-tight}\\
 &\lim_{\eta\downarrow0}\limsup_{n\to\infty}
   \sup_{P\in\mathcal P_n}\max_s
P\{E_{n,s},|\widehat C_{n,s}|/\sigma_{n,s}<\eta\}=0.
                                                        \label{rev:eq:width-lower}
\end{align}
The tightness and lower-length statements also hold for the reference interval $C^\circ_{n,s}$. Together with endpoint equivalence, they give, for the selected label,
\begin{equation}\label{rev:eq:length-ratio}
 \frac{|\widehat C_{n,S_n}|}{|C^\circ_{n,S_n}|}\to_p1
\end{equation}
uniformly over the class of distributions. On exceptions where the ratio is undefined, any value
may be substituted.

All statements hold conditionally for $P(E_{n,s})\ge p_0>0$ and, by summing over the fixed label set, unconditionally at $S_n$ without a probability lower bound. They also hold with denominator $\widehat\sigma_{n,s}$.
\end{theorem}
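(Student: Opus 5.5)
We reduce everything to a standardized, scale-free problem and then rely on the same compactness-and-subsequence argument that underlies Theorem~\ref{thm:plugin}. Put $\widetilde c_{n,s}=L_n'c_{n,s}/r_{n,s}$, so that on the standardized scale $T_{n,s}=c_{n,s}'\mu_n+r_{n,s}\widetilde c_{n,s}'Z_n$ and $\sigma_{n,s}^2=r_{n,s}^2\,\widetilde c_{n,s}'V_n\widetilde c_{n,s}$. Since $\underline v\le\widetilde c'V_n\widetilde c\le\overline v$, the scales $r_{n,s}$ and $\sigma_{n,s}$ are equivalent. Divide each endpoint's displacement from $\theta_{n,s}$ by $\sigma_{n,s}$.

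Each interval then becomes the inversion of a truncated-normal pivot with the following ingredients:
\begin{itemize}
\item standardized estimate $\zeta_n=(T_{n,s}-\theta_{n,s})/\sigma_{n,s}$, which is tight;
\item unit variance, exactly for $\Sigma_n$ and with a ratio $\widehat\sigma_{n,s}/\sigma_{n,s}\to_p1$ for $\widehat\Sigma_n$;
\item a truncation set equal to the fiber of the normalized polynomials $\widetilde q_{nk}$ of \eqref{eq:normalized-comparison}, restricted to the line $z=\widehat R^{z}+\widehat b^{z}t$.
\end{itemize}
With this normalization, $\theta$, the fiber, and the endpoints all depend continuously on the triple (normalized polynomial coefficients, $V$ or $\widehat V_n$, $Z_n$), and they do not depend on $n$ otherwise. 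The coefficients lie in a compact set: the unit sphere of bounded dimension, with degree at most $D$ and at most $M_0$ polynomials.

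The proof of \eqref{rev:eq:endpoint-equivalence} goes by contradiction. Suppose there are subsequences and $P_n\in\mathcal P_n$ along which $P\{E_{n,s},A_{n,s}>\epsilon\}\ge\delta$. Extract further subsequences along which the normalized coefficients converge to nonzero limits $\widetilde q_k$ and $V_n\to V$. Then use \eqref{eq:uniform-clt}, \eqref{eq:uniform-cov} and Skorokhod representation to obtain $Z_n\to Z\sim N(0,V)$ and $\widehat V_n\to V$ almost surely.

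The key deterministic step is continuity of the fiber. At the limit, the restriction of each $\widetilde q_k$ to the line through $Z$ in direction $b=Vc/c'Vc$ is a nonzero polynomial in $t$ almost surely. This is the content of Lemma~\ref{rev:boundary} together with the interior-point statement of Lemma~\ref{lem:gaussian}. Its real roots are then finitely many, and they vary continuously under perturbations of the coefficients, the direction, and the residual. Repeated roots may merge or split, but the sign pattern away from the roots converges, so the fiber converges in symmetric-difference Lebesgue measure on compacts. Consequently both the reference and feasible pivots $\theta\mapsto F_\theta(\zeta;1,\mathcal T)$ converge uniformly on compact $\theta$-sets to the limit pivot.

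To pass from pivot convergence to endpoint convergence, we use that the limit pivot is strictly decreasing and continuous in $\theta$. This follows from the monotone likelihood ratio of the truncated normal, and it makes the solutions of $F=\alpha/2$ and $F=1-\alpha/2$ unique and finite, as in Lemma~\ref{lem:gaussian}. Uniform convergence on compacts together with strict monotonicity then gives convergence of the roots for both the reference and feasible intervals, which contradicts $A_{n,s}>\epsilon$ with positive probability. The variance factor $\widehat\sigma/\sigma$ and the exception events contribute only $o(1)$.

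The main obstacle is tightness, \eqref{rev:eq:endpoint-tight}, and it must hold uniformly. Truncated-normal endpoints blow up when $\zeta$ sits near a fiber edge while the fiber component is short, and near perfect fit the fiber geometry changes with $n$. We handle it by the same subsequence argument. In the limit, the endpoints are finite almost surely, so the event $R>M$ converges to an event whose probability vanishes as $M\to\infty$. The boundary case, where the limiting polynomials degenerate along the line, has probability zero by the nonzero-polynomial Gaussian null-set argument. Making this rigorous requires lower semicontinuity of the event $\{R>M\}$ under the a.s. convergence, which follows from the root convergence above.

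The width bound \eqref{rev:eq:width-lower} follows the same way. In the limit, the width is strictly positive almost surely, because $\alpha/2<1-\alpha/2$ and the pivot is strictly monotone, so $P\{\text{width}<\eta\}\to0$ as $\eta\downarrow0$. The length ratio \eqref{rev:eq:length-ratio} then follows because the difference of the lengths divided by $\sigma$ is at most $2A$, which is $o_p(1)$, while the reference length divided by $\sigma$ is bounded away from zero with high probability. Summing over the finite label set gives the unconditional statements at $S_n$, and dividing by $P(E_{n,s})\ge p_0$ gives the conditional ones. Replacing $\sigma_{n,s}$ by $\widehat\sigma_{n,s}$ is harmless because their ratio tends to one in probability uniformly.
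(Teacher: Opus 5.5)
Your proposal is correct and follows essentially the same route as the paper. The paper also uses a subsequence and compactness argument with a Skorokhod coupling of $(Z_n,\widehat V_n)$, and the full-fiber continuity and inversion step you reprove inline is packaged there as Lemma~\ref{rev:inverse}; the paper applies it to both $W=V_n$ and $W=\widehat V_n$ to obtain common finite limits $h_-<h_+$. The remaining steps match as well: the tightness and width claims are contradicted along violating sequences with $M_j\uparrow\infty$ and $\eta_j\downarrow0$, and the length ratio follows from the $2A_{n,s}/\eta$ bound.

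One small correction for the tightness step: what you need is the upper bound $\limsup_j \ind\{R_{n_j,s}>M_j\}=0$ almost surely. This follows directly from almost-sure convergence of the normalized endpoints to finite limits, with $M_j\uparrow\infty$ chosen along that same violating sequence. It does not come from lower semicontinuity of $\{R>M\}$, which would give an inequality in the wrong direction.
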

The proof is in Appendix~\ref{app:proof-endpoints} and applies the full-fiber continuity and inversion result in Appendix~\ref{supp:lemmas}.


\begin{corollary}[Regional implementation and effective sample size]\label{cor:regional}\label{rev:regional-rate}
In the independent-region implementation \eqref{eq:regional-means}--\eqref{eq:regional-cov-est}, suppose that for some $\delta>0$ and finite $M$,
\[
 \sup_{n,j}\E\|X_{nji}-\mu_{nj}\|^{2+\delta}\le M,\quad
 \underline v I_T\preceq\Gamma_{nj}\preceq\overline v I_T,\quad
 \min_jn_j\to\infty.
\]
Assumption~\ref{ass:array} then holds uniformly over distributions with these constants, including regional sample shares tending to zero. For the DiD contrast, let $w$ equal $|A|^{-1}$ on $A$, $-|P|^{-1}$ on $P$, and zero elsewhere. The effective sample size combines sampling uncertainty from both the treated region and its selected control. With
\[
 n^{\mathrm{eff}}_{n,s}=(n_0^{-1}+n_s^{-1})^{-1},\qquad
 \sigma_{n,s}^2=w'\Gamma_{n0}w/n_0+w'\Gamma_{ns}w/n_s,
\]
the feasible interval satisfies, uniformly at the selected label,
\begin{align}
 |\widehat C_{n,S_n}|&=\Theta_p\{(n^{\mathrm{eff}}_{n,S_n})^{-1/2}\},
       \label{rev:eq:regional-width}\\
 \max\{|\widehat\ell_{n,S_n}-\theta_{n,S_n}|,
          |\widehat u_{n,S_n}-\theta_{n,S_n}|\}
       &=O_p\{(n^{\mathrm{eff}}_{n,S_n})^{-1/2}\}\to_p0.
       \label{rev:eq:regional-shrink}
\end{align}
Here $\Theta_p$ means that both the normalized length and its reciprocal are uniformly tight. The corresponding conditional statements hold for $p_{n,s}\ge p_0$. Under parallel trends for every selectable candidate, replace $\theta_{n,S_n}$ by the ATT $\tau_n$.
\end{corollary}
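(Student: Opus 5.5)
The proof has two parts. First we verify Assumption~\ref{ass:array} for the independent-region design with $L_n=\operatorname{blockdiag}_j(n_j^{-1/2}I_T)$. Then we translate the standard-error statements of Theorem~\ref{rev:endpoints} into rates in $n^{\mathrm{eff}}_{n,s}$.

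\emph{Verifying Assumption~\ref{ass:array}.} Since $L_n$ is deterministic and nonsingular, $Z_n=L_n^{-1}(Y_n-\mu_n)$ stacks the independent blocks
\[
\sqrt{n_j}(\bar X_{nj}-\mu_{nj})=n_j^{-1/2}\sum_i(X_{nji}-\mu_{nj}),
\]
and $V_n=\operatorname{blockdiag}_j(\Gamma_{nj})$ satisfies the eigenvalue bounds directly.

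For \eqref{eq:uniform-clt}, I would apply a Lyapunov-type multivariate Berry--Esseen or Lindeberg bound to each block. The $2+\delta$ moment bound gives a bounded-Lipschitz error for block $j$ of order at most $C(M,\underline v,T)\,n_j^{-\delta'}$ for some $\delta'>0$, uniformly over distributions. With $J$ fixed and independent blocks, a telescoping argument over coordinates bounds the joint distance by the sum of the block distances. That sum tends to zero because $\min_jn_j\to\infty$. Nothing in this argument involves the ratios $n_j/\sum_k n_k$, so regional shares may vanish.

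For \eqref{eq:uniform-cov}, write
\[
\widehat\Gamma_{nj}-\Gamma_{nj}=\frac{n_j}{n_j-1}\Bigl[n_j^{-1}\sum_i(\xi_i\xi_i'-\Gamma_{nj})-(\bar\xi\bar\xi'-n_j^{-1}\Gamma_{nj})\Bigr],\qquad \xi_i=X_{nji}-\mu_{nj}.
\]
The first summand is a mean of iid matrices with uniformly bounded $1+\delta/2$ moments. A uniform weak law follows from truncation at level $K$: the truncated part is handled by Chebyshev, and the tail is controlled by $M/K^{\delta/2}$. The second summand is $O_p(n_j^{-1})$ uniformly, since $\E\|\bar\xi\|^2\le\overline v T/n_j$. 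The operator norm of a block-diagonal matrix is the maximum of the block norms, and $J$ is fixed, so \eqref{eq:uniform-cov} follows. This uniform weak law under only $2+\delta$ moments is the main technical step; truncation is the first tool I would try.

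\emph{Translating to $n^{\mathrm{eff}}$.} The contrast satisfies $c_{n,s}'Y_n=w'\bar X_{n0}-w'\bar X_{ns}$, so
\[
\sigma_{n,s}^2=w'\Gamma_{n0}w/n_0+w'\Gamma_{ns}w/n_s .
\]
Here $\|w\|^2=|A|^{-1}+|P|^{-1}$ is a fixed positive constant, so
\[
\underline v\|w\|^2(n^{\mathrm{eff}}_{n,s})^{-1}\le\sigma_{n,s}^2\le\overline v\|w\|^2(n^{\mathrm{eff}}_{n,s})^{-1}.
\]
Hence $\sigma_{n,s}\asymp(n^{\mathrm{eff}}_{n,s})^{-1/2}$ uniformly in $n$, $s$ and the distribution.

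With Assumption~\ref{ass:array} in hand, Theorem~\ref{rev:endpoints} applies. By \eqref{rev:eq:endpoint-tight}, $R_{n,S_n}$ is uniformly tight; multiplying by $\sigma_{n,S_n}$ gives \eqref{rev:eq:regional-shrink}. The limit $\to_p0$ follows because $n^{\mathrm{eff}}\ge\min_jn_j/2\to\infty$. For \eqref{rev:eq:regional-width}, \eqref{rev:eq:width-lower} gives tightness of $\sigma_{n,S_n}/|\widehat C_{n,S_n}|$. The upper bound on the width follows from $|\widehat C|\le2R_{n,S_n}\sigma_{n,S_n}$. Together these give the $\Theta_p$ statement.

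Uniformity at the selected label comes from summing the per-label bounds over the finite set $\mathcal S$. The conditional versions follow by dividing by $p_{n,s}\ge p_0$, as in Theorem~\ref{rev:endpoints}. Finally, under parallel trends for every selectable candidate, \eqref{sd:single-decomposition} gives $\theta_{n,s}=\tau_n$, so $\tau_n$ may replace $\theta_{n,S_n}$.
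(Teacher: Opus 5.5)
Your proof is correct, and its second half matches the paper's argument exactly. That half consists of the two-sided bound $\sigma_{n,s}\asymp(n^{\mathrm{eff}}_{n,s})^{-1/2}$, the use of \eqref{rev:eq:endpoint-tight} and \eqref{rev:eq:width-lower} with $|\widehat C_{n,s}|\le2\sigma_{n,s}R_{n,s}$, the bound $n^{\mathrm{eff}}_{n,s}\ge\min_jn_j/2$, summation over labels, and division by $p_0$.

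You differ from the paper in how you verify Assumption~\ref{ass:array}.

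\emph{Normal approximation \eqref{eq:uniform-clt}.} The paper obtains no rate here.
\begin{itemize}
\item It checks Lyapunov's condition $M\sum_j\|a_j\|^{2+\delta}n_j^{-\delta/2}\to0$ for each fixed linear combination.
\item It applies Lyapunov's CLT with Cram\'er--Wold along sequences with $V_n(P_n)\to V$.
\item It gets uniformity by contradiction, extracting a convergent covariance subsequence from the compact eigenvalue class.
\end{itemize}
You instead invoke a quantitative blockwise normal approximation in bounded-Lipschitz distance. You then combine the independent blocks by one-block-at-a-time replacement. This is legitimate, because integrating out the other independent blocks preserves the bounded-Lipschitz constants in the remaining block. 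Your route gives uniformity directly and an explicit rate for the normal-approximation error in terms of $\min_jn_j$. However, it rests on a multivariate bound under only $2+\delta$ moments, which you cite but do not establish. You could supply it, for instance, by Gaussian smoothing of the test function plus Lindeberg replacement, interpolating the Taylor remainder between second and third order after reducing to $\delta\le1$. The paper's route needs only the classical CLT.

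\emph{Covariance consistency \eqref{eq:uniform-cov}.} The paper applies the von Bahr--Esseen inequality with $p=1+\delta/2$, which gives the $L^p$ bound $Cn_j^{1-p}$ in one step. Your truncation argument is the elementary uniform-integrability proof of the weak law and also yields uniformity. Both proofs handle the $\bar\xi\bar\xi'$ correction and the factor $n_j/(n_j-1)$ in the same way.
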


Selection adjustment can persist in standard-error units while the interval contracts in outcome units. For $n_0=r$, $n_s=r^2$, effective sample size is asymptotic to $r$. We report length medians and upper quantiles; Section~\SuppRef{supp:length} explains why expected length can be infinite despite finite endpoints and gives conditions recovering conventional inference.

We assess the role of full covariance estimation under cross-region dependence in Supplemental Section~\SuppRef{supp:dependent-fields}, comparing known and same-sample full covariance with misspecified block-diagonal covariance.


\section{Staggered adoption and joint control choices}\label{sd:section}
For staggered adoption, we select controls for each cohort--period comparison and condition on the recorded choices used to construct the reported effects. Include each region--period mean once in $Y_n$. Region $j$ adopts at fixed $G_j\in\{2,\ldots,T,\infty\}$, where $G_j=\infty$ denotes a never-treated region. Cohort weights $q_{gj}\ge0$ are positive only for regions with $G_j=g$ and sum to one within that cohort. For cell $a=(g,t)$, $t\ge g$, fix a set of pre-treatment comparison periods $P_a\subset\{u:u<g-\kappa\}$, $|P_a|\ge2$, and period-average weights $\pi_{a,u}\ge0$ summing to one. The parameter $\kappa\ge0$ gives the allowed anticipation window. Assume consistency, no spillovers, and no anticipation more than $\kappa$ periods before adoption.

Candidate $k\in\mathcal C_a$ is a fixed weighted combination of eligible control regions. Its weights $v_{akj}\ge0$ sum to one, with $v_{akj}>0$ only if $G_j>t+\kappa$, so every contributing region remains untreated and outside its anticipation window at $t$. Set $D_{n,ak,u}=\sum_jq_{gj}\bar X_{nj,u}-\sum_jv_{akj}\bar X_{nj,u}$ and
\begin{equation}\label{sd:cell-contrast}
 T_{n,ak}=D_{n,ak,t}-\sum_{u\in P_a}\pi_{a,u}D_{n,ak,u}
          =d_{n,ak}'Y_n.
\end{equation}
Let $\mu^g_{nj,t}$ denote the potential outcome mean in region $j$ at $t$ under adoption at $g$, and let $\mu^0_{nj,t}$ denote its untreated mean. The cohort--period ATT is $\tau_{n,g,t}=\sum_jq_{gj}(\mu^g_{nj,t}-\mu^0_{nj,t})$. Define the untreated treated--control gap as $m^0_{n,ak,u}=\sum_jq_{gj}\mu^0_{nj,u}-\sum_jv_{akj}\mu^0_{nj,u}$, and put $\Delta_{n,ak}=m^0_{n,ak,t}-\sum_u\pi_{a,u}m^0_{n,ak,u}$. Thus $d_{n,ak}'\mu_n=\tau_{n,g,t}+\Delta_{n,ak}$. Candidate-specific parallel trends, $\Delta_{n,ak}=0$, identify heterogeneous cohort--period effects \citep{CS2021,SA2021}.

Cell choices $S_{n,a}$ satisfy (S). The selection path $S_n=(S_{n,a})_a$ lists the controls chosen in every cohort--period cell used by the reporting procedure; its event records these choices jointly. For report $\ell$, fixed weights $w_{\ell,a}$ aggregate cell effects into the reported policy or time-specific effect. Put
\begin{equation}\label{sd:aggregate}
 c_{n,\ell,s}=\sum_aw_{\ell,a}d_{n,a,s_a},\qquad
 c_{n,\ell,s}'\mu_n=\tau_{n,\ell}+\sum_aw_{\ell,a}\Delta_{n,a,s_a},
 \quad \tau_{n,\ell}=\sum_aw_{\ell,a}\tau_{n,g,t}.
\end{equation}
For event time $e\ge0$, fix reported cohorts $\mathcal G_e$ and weights $\omega_{e,g}\ge0$ summing to one. Define
\begin{equation}\label{sd:eventtime-example}
 \tau_{n,e}^{\mathrm{evt}}=\sum_{g\in\mathcal G_e}\omega_{e,g}\tau_{n,g,g+e},
 \qquad
 c_{n,e,s}=\sum_{g\in\mathcal G_e}\omega_{e,g}
             d_{n,(g,g+e),s_{(g,g+e)}}.
\end{equation}
Nonnegative event-time weights define policy averages. Signed weights in \eqref{sd:aggregate} also permit differences between event times; comparisons of dynamics use the same cohorts and weights at each event time. Reused observations enter the joint covariance.

We may combine selection paths when they produce exactly the same coefficient vector for every reported contrast. A predetermined map $H_n=\rho(S_n)$, with range $\mathcal H$, implements this combination when
\begin{equation}\label{sd:compatibility}
 \rho(s)=\rho(s')=h\quad\Longrightarrow\quad
 c_{n,\ell,s}=c_{n,\ell,s'}=:c_{n,\ell,h}\ne0
 \quad\text{for every reported }\ell.
\end{equation}
Its event $E_{n,h}$ is the union of all paths assigned to $h$. With fixed numbers of reports, cells, candidates, and defining polynomials, the composed sign map satisfies (S). Apply \eqref{eq:decomp}--\eqref{eq:selci} to $c_{n,\ell,h}$ with the same $\widehat\Sigma_n$, using this entire union to construct each conditional selection set.

\begin{corollary}[Staggered and simultaneous ATT inference]\label{sd:theorem}\label{sd:causal-coverage}
Suppose Assumption~\ref{ass:array} holds for the joint means and covariance and the composed selector satisfies (S). For each report, Theorems~\ref{thm:plugin} and \ref{rev:endpoints} apply to the expected contrast $c_{n,\ell,H_n}'\mu_n$ in \eqref{sd:aggregate}, with the full label event. Allocate the total error level across reports by Bonferroni. For $q$ intervals with $\alpha_\ell>0$, $\sum_\ell\alpha_\ell\le\alpha$, simultaneous statistical-target coverage is at least $1-\alpha-o(1)$ uniformly, also conditional on $P(E_{n,h})\ge p_0$.

Let $D_n$ be the maximum joint pivot error over reports and labels. The event $\mathcal I_n=\{\exists\ell:\sum_aw_{\ell,a}\Delta_{n,a,S_{n,a}}\ne0\}$ records that at least one aggregate contrast based on the selected controls differs from its ATT. Define
\begin{equation}\label{sd:bad-selection}
 \eta_n=\sup_{P\in\mathcal P_n}P(\mathcal I_n),\qquad
 \eta_{n,h}=\sup_{P\in\mathcal P_n}P(\mathcal I_n\cap E_{n,h}).
\end{equation}
Simultaneous ATT coverage is at least
\begin{equation}\label{sd:causal-bound}
 P\{\tau_{n,\ell}\in C_{n,\ell,H_n}(\alpha_\ell;\widehat\Sigma_n)
                \text{ for all }\ell\}
 \ge1-\alpha-2q|\mathcal H|D_n-\eta_n.
\end{equation}
For $p_{n,h}=P(E_{n,h})>0$, conditional simultaneous ATT coverage satisfies
\begin{equation}\label{sd:conditional-causal-bound}
 P\{\tau_{n,\ell}\in C_{n,\ell,h}\ \forall\ell\mid E_{n,h}\}
 \ge1-\alpha-\frac{2qD_n+\eta_{n,h}}{p_{n,h}}.
\end{equation}
If every weighted discrepancy for $h$ vanishes throughout $\mathcal P_n$, then $\eta_{n,h}=0$. If one is nonzero under $P$, \eqref{sd:compatibility} gives $\eta_{n,h}\ge p_{n,h}$, making this bound uninformative. For each $h$, $(2qD_n+\eta_{n,h})/\rho_n\to0$ gives uniform conditional lower limit $1-\alpha$ on $p_{n,h}\ge\rho_n$; use $\max_h\eta_{n,h}$ for all labels. For one report, absolute marginal ATT coverage error is at most $2|\mathcal H|D_n+\eta_n$. Weighted discrepancies may cancel.
\end{corollary}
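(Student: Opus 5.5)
The plan is to reduce everything to Theorem~\ref{thm:plugin} and Theorem~\ref{rev:endpoints} applied report by report, and then to handle the causal discrepancy through a union bound.

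\emph{Step 1: the composed selector is admissible.} First I will verify that each report $\ell$ with label $h$ is a legitimate instance of those theorems. Each cell selector $S_{n,a}$ is a sign map of at most $M_0$ polynomials of degree at most $D$. The path $S_n$ is therefore a predetermined function of the concatenated sign vector, and $H_n=\rho(S_n)$ is again a map $\lambda_n$ of that sign vector. The number of cells, candidates and polynomials is fixed, so the composed selector satisfies (S) with a fixed label set $\mathcal H$. The event $E_{n,h}$ is exactly the union of the sign cells whose paths are assigned to $h$. By the compatibility condition \eqref{sd:compatibility}, the contrast $c_{n,\ell,h}$ is a deterministic nonzero vector on all of $E_{n,h}$. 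Hence the pivot, the fiber $\{t:H(\widehat R+\widehat bt)=h\}$ and the interval are those of Section~\ref{sec:pivot} with $c_s$ replaced by $c_{n,\ell,h}$. Theorems~\ref{thm:plugin} and \ref{rev:endpoints} then apply directly, for each $\ell$, to the target $c_{n,\ell,h}'\mu_n$. Since $q$ is fixed, I take $D_n$ as the maximum over $\ell$ and $h$, and it still tends to zero.

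\emph{Step 2: Bonferroni for statistical targets.} By \eqref{eq:plugincoverage} and \eqref{eq:pluginmarginal} at level $\alpha_\ell$, the probability that interval $\ell$ misses $c_{n,\ell,H_n}'\mu_n$ is at most $\alpha_\ell+2|\mathcal H|D_n$. The conditional version given $E_{n,h}$ is at most $\alpha_\ell+2D_n/p_{n,h}$. A union bound over the $q$ reports gives simultaneous coverage of at least $1-\alpha-2q|\mathcal H|D_n$ marginally, and $1-\alpha-2qD_n/p_{n,h}$ conditionally. Both are $1-\alpha-o(1)$ uniformly, the conditional one on $p_{n,h}\ge p_0$.

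\emph{Step 3: from statistical targets to the ATT.} By \eqref{sd:aggregate}, $c_{n,\ell,H_n}'\mu_n=\tau_{n,\ell}$ for every $\ell$ off $\mathcal I_n$. So the event that all intervals cover $\tau_{n,\ell}$ contains the event that all cover their statistical targets, minus $\mathcal I_n$. Subtracting $P(\mathcal I_n)\le\eta_n$ gives \eqref{sd:causal-bound}. Intersecting with $E_{n,h}$ and dividing by $p_{n,h}$ gives \eqref{sd:conditional-causal-bound}.

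\emph{Step 4: remaining claims.} If every weighted discrepancy for $h$ vanishes on $\mathcal P_n$, then $\mathcal I_n\cap E_{n,h}=\emptyset$. Conversely, compatibility makes $c_{n,\ell,h}$, and hence the discrepancy, constant on $E_{n,h}$. A nonzero discrepancy under $P$ therefore puts all of $E_{n,h}$ inside $\mathcal I_n$, so $\eta_{n,h}\ge p_{n,h}$. The $\rho_n$ statement follows because the error term is at most $(2qD_n+\eta_{n,h})/\rho_n$. For $q=1$, apply \eqref{eq:pluginmarginal} as a two-sided bound and add $\eta_n$, since the ATT-coverage event and the target-coverage event differ only on $\mathcal I_n$.

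The main obstacle is Step 1. I must check that merging paths into $h$ keeps a single fixed contrast and a legitimate sign-map event, so that the fiber uses the entire union. I must also check that identically tied or zero composed polynomials are removed by priority, as (S) requires. Everything after that is union bounds.
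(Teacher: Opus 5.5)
Your proposal is correct and follows essentially the same route as the paper: you treat the coarsened path map as a selector satisfying (S) with a fixed compatible contrast $c_{n,\ell,h}$, apply the plug-in pivot bound to each $(\ell,h)$, union-bound over the $q$ reports, and absorb the causal discrepancy through the inclusion of ATT failure in statistical-target failure $\cup\,\mathcal I_n$, then intersect with $E_{n,h}$ for the conditional bound. Invoking \eqref{eq:plugincoverage}--\eqref{eq:pluginmarginal} directly is equivalent to the paper's joint bound $\alpha_\ell P(E_{n,h})+2D_n$ summed over $h$. Your argument for $\eta_{n,h}\ge p_{n,h}$ also fills in a step the paper only asserts: the discrepancy equals $c_{n,\ell,h}'\mu_n-\tau_{n,\ell}$ and is therefore constant on $E_{n,h}$.
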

For example, report two event-time ATTs using common cohorts and weights, and their
difference, with nonzero contrasts $c_{n,e_0,h}$, $c_{n,e_1,h}$, and
$c_{n,e_1,h}-c_{n,e_0,h}$. Compute all three intervals from the same
joint event $H_n=h$ and estimated full covariance, allocating $\alpha/3$
to each. Coarsen paths only when all three vectors remain unchanged.
Candidate-specific parallel trends give $\eta_n=0$. Appendix~\ref{app:staggered-core} proves
the result; the higher-order construction uses this joint selector
under its stated conditions.

\begin{proposition}[Joint conditioning across cohorts]\label{sd:orthogonality}\label{sd:counterexample}
In the Gaussian model, let $BY$ stack, across cohorts, all candidate pre-treatment discrepancies determining the chosen label event, and let $c'Y$ be a report with $c'\Sigma c>0$. Then
\begin{equation}\label{sd:orthogonality-eq}
 c'Y\perp BY\quad\Longleftrightarrow\quad B\Sigma c=0.
\end{equation}
This condition gives the conventional Gaussian interval on every selected fiber and for every compatible coarsening of that event.

For independent $N(0,1)$ outcomes $Y_{j,u}$, $j\in\{E,L,A,B\}$, $u=1,\ldots,4$, let $E,L$ adopt at 3 and 4, and let $A,B$ remain untreated. Fix $A$ as the early control and put
\[
 W=\frac{Y_{E,3}-Y_{A,3}
       -\tfrac12\sum_{u=1}^2(Y_{E,u}-Y_{A,u})}{\sqrt3}.
\]
The late cohort selects $S_L\in\{A,B\}$ by centered squared fit over periods 2 and 3, minimizing $\{(Y_{L,3}-Y_{L,2})-(Y_{k,3}-Y_{k,2})\}^2/2$ over $k\in\{A,B\}$. Although $W\sim N(0,1)$,
\begin{equation}\label{sd:counterexample-moments}
 \E(W^2\mid S_L=A)=1-\frac{\sqrt3}{4\pi},\qquad
 \E(W^2\mid S_L=B)=1+\frac{\sqrt3}{4\pi}.
\end{equation}
\end{proposition}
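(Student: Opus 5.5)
The proposal has two parts. For the equivalence and its consequence I would use joint Gaussianity directly. For the counterexample I would compute the relevant covariances and reduce everything to a two-dimensional Gaussian problem.

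\emph{Equivalence and conventional interval.} Under $Y\sim N_d(\mu,\Sigma)$, the pair $(c'Y,BY)$ is jointly Gaussian with cross-covariance $B\Sigma c$. Jointly Gaussian vectors are independent exactly when their cross-covariance vanishes, which gives \eqref{sd:orthogonality-eq}.

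For the interval claim I would use the decomposition \eqref{eq:decomp}: $b=\Sigma c/v$ and $R=Y-bT$. Then $BR=BY-(B\Sigma c/v)T=BY$ whenever $B\Sigma c=0$. The label event depends on $Y$ only through $BY$, since every defining polynomial is a function of the stacked pre-treatment discrepancies. Hence, along the conditioning line $r+bt$, the selector is constant in $t$, and the fiber \eqref{eq:fiber} is either empty or all of $\R$.

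On the selected event the fiber is therefore $\R$. The pivot \eqref{eq:pivot} then reduces to the unrestricted normal cdf, and \eqref{eq:selci} is the conventional interval. A compatible coarsening is a union of events, each measurable in $BY$, so the same argument applies to it.

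\emph{Counterexample: reduction.} First I check that $\operatorname{Var}(W)=(3/2+3/2)/3=1$. Write $U_k=(Y_{L,3}-Y_{L,2})-(Y_{k,3}-Y_{k,2})$, so that $S_L=A$ iff $|U_A|\le|U_B|$; ties form a null set. The needed covariances are
\[
\Cov(W,U_A)=\sqrt3/2,\quad \Cov(W,U_B)=0,\quad \Var U_k=4,\quad \Cov(U_A,U_B)=2 .
\]
The linear projection of $W$ on $(U_A,U_B)$ is $\frac{\sqrt3}{12}(2U_A-U_B)$, and the residual $\varepsilon$ is independent of $(U_A,U_B)$. Therefore
\[
\E(W^2\mid S_L)=\Var\varepsilon+\tfrac{3}{144}\,\E\{(2U_A-U_B)^2\mid S_L\}.
\]

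\emph{Counterexample: rotation and computation.} Rotate to $P=U_A+U_B$ and $M=U_A-U_B$. These are independent, with $\Var P=12$ and $\Var M=4$. In these coordinates $2U_A-U_B=(P+3M)/2$, and the event $\{S_L=A\}$ becomes $\{PM\le0\}$, which has probability $1/2$.

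Expanding the square, $\E\{(P+3M)^2\mid PM\le0\}=48+6\,\E(PM\mid PM\le0)$. Using independence and the half-normal mean,
\[
\E(PM\mid PM\le 0)=-\E|P|\,\E|M|=-\tfrac{2}{\pi}\sqrt{48}=-\tfrac{8\sqrt3}{\pi}.
\]
Unconditionally the projection term contributes $48/192=1/4$, so $\Var\varepsilon=3/4$. Conditioning on $S_L=A$ adds $6\cdot(-8\sqrt3/\pi)/192=-\sqrt3/(4\pi)$, which gives \eqref{sd:counterexample-moments}. For $S_L=B$ the cross term flips sign by the symmetry $M\mapsto -M$.

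The main obstacle is arithmetic bookkeeping in the projection coefficients. The rotation to the independent pair $(P,M)$ is the step that makes the conditional cross moment tractable.
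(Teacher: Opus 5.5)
Your proposal is correct and follows essentially the paper's route. The equivalence comes from the Gaussian cross-covariance, and $B\Sigma c=0$ forces $Bb=0$, so the label is constant along the conditioning line. For the example you use a Gaussian projection of $W$ onto the sum/difference coordinates of $(U_A,U_B)$, with the half-normal product moment on the sign event $\{PM\le 0\}$. The only difference is bookkeeping: you project onto $(U_A,U_B)$ and then rotate to the unnormalized $(P,M)$, whereas the paper projects directly onto the standardized pair $D=(U_A-U_B)/2$, $H=(U_A+U_B)/\sqrt{12}$.
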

For an early-only report, the contrast vector is unchanged by the late choice, so its two paths can be combined as in \eqref{sd:compatibility}. This report permits the ordinary interval. Conditioning jointly on the late choice changes its distribution through reused comparisons even with identity covariance (Appendix~\ref{supp:sd-restored}).

\paragraph*{Finite-path implementation.}\label{supp:staggered-simulation}
We evaluate four known-covariance Gaussian designs with two or four cohorts and three never-treated candidates, using 200,000 draws per design. Candidates are separate for each cohort or reused across cohorts. Adoption occurs at $3,4,\ldots$; centered squared fit over the two preceding periods selects controls, retained at event times $0,1,3$. Five fixed reports give cohort 1's time-0 effect, equal-weight cohort averages at those times, and their post-treatment average, with the cohort subset fixed for each menu. Region--period outcome means are independent, with standard errors 1 in treated regions and $0.5,1,2.5$ in candidate controls. All ATTs and population contrasts are zero.

Table~\ref{tab:sd-path-validation} reports the first two cohorts in the four-cohort reused-control design. The other two choices leave all five reported contrasts unchanged, so nine compatible labels each combine nine of the 81 full paths.

\begin{table}[htbp]
\centering\small\setlength{\tabcolsep}{3pt}
\caption{Four-cohort joint and coarse events: reporting the first two cohorts}
\label{tab:sd-path-validation}
\begin{tabular}{p{.27\linewidth}rrrrr}
\toprule
Procedure & Labels & Min. $N_s$ & \shortstack{Conditional\\coverage range} & \shortstack{Maximum\\MCSE} & \shortstack{Simultaneous\\coverage} \\
\midrule
Conventional & 81 & 282 & 92.31--99.38 & 1.12 & 96.12 \\
Cohort-only diagnostic & 81 & 282 & 92.12--99.26 & 1.15 & 96.23 \\
Joint path & 81 & 282 & 93.06--96.30 & 1.20 & 96.58 \\
Compatible coarse event & 9 & 7,082 & 94.85--95.40 & 0.25 & 96.50 \\
\bottomrule
\end{tabular}
\par\smallskip
\begin{minipage}{\textwidth}\footnotesize
The conditional coverage range gives the minimum and maximum coverage of nominal 95\% intervals for the aggregate event-time-0 effect of the first two cohorts. Coverage in the first three rows is evaluated conditional on each of the 81 full paths; coverage in the coarse row is evaluated conditional on each of its nine combined labels. The cohort-only diagnostic constructs intervals conditional on cohort 1's choice. Maximum MCSE is the largest label-specific Monte Carlo standard error, in percentage points. Simultaneous coverage is marginal coverage of the five Bonferroni intervals, each at level 99\%, with total error 5\%. Coverage is in percent; min. $N_s$ is the smallest label-specific selected count.
\end{minipage}
\end{table}

For the first two cohorts' aggregate time-0 effect in the four-cohort reused-control design, joint-path and compatible coarse conditioning each give 2,047 finite-endpoint intervals in the first 2,048 draws, with one numerical failure. Among these finite intervals, median lengths are $7.89$/$6.25$ and 99th percentiles $453.32$/$208.40$, respectively.


\section{Simulations: the uncertainty of the reported comparison}\label{sec:simulation}
We evaluate inference after control selection when every candidate identifies the same zero ATT. We generate independent individual observations $(V_{ji},H_{ji})$ within treated region 0 and candidate regions 1 and 2, independently across regions. Using their regional means, select
\[
 S=\arg\min_{j=1,2}|\bar V_0-\bar V_j|,\qquad T_S=\bar H_0-\bar H_S.
\]
Here $V$ is the pre-treatment change and $H$ is the post-treatment outcome minus the pre-treatment mean. Control selection uses regional averages of $V$, while the reported DiD contrast uses regional averages of $H$. With a baseline $B$, $X_1=B-V/2$, $X_2=B+V/2$, $X_3=B+H$ embed the design in a panel. We transform independent standard normal components or standardized Gamma components $(G-2)/\sqrt2$, $G\sim\mathrm{Gamma}(2,1)$ with scale 1, to give the region-specific covariance matrices in Supplemental Section~\SuppRef{sim:full-results}. The means $\E H_j=0$ give parallel trends for every candidate. We estimate the joint covariance from the same observations. Supplemental Section~\SuppRef{sec:simulation-supp} specifies generation, sample sizes, and denominators.

\subsection{Marginal coverage and the selected control}
In the separated-ranking normal design of Table~\ref{tab:selection-evidence}, conventional marginal coverage is 94.43\%, whereas coverage conditional on selecting candidate 2 is 77.00\% among 1,478 reports. Estimating the full covariance and conditioning on control selection raises that conditional coverage to 93.84\% (MCSE 0.63 percentage point). The Gamma design with tied positive population pre-treatment fit also displays remaining finite-sample error from covariance estimation and the Gaussian approximation.

\begin{table}[tbp]\centering\small\setlength{\tabcolsep}{3pt}
\caption{Marginal coverage can conceal error for the selected comparison}\label{tab:selection-evidence}
\begin{tabular}{lrccc}\toprule
Evaluation & Draws & Conventional & Feasible selective & True covariance\\\midrule
\multicolumn{5}{l}{\textit{Normal: separated population ranking}}\\
All selections & 40,000 & 94.43 (0.11) & 94.59 (0.11) & 95.10 (0.11)\\
Candidate 1 & 38,522 & 95.10 (0.11) & 94.62 (0.11) & 95.11 (0.11)\\
Candidate 2 & 1,478 & 77.00 (1.09) & 93.84 (0.63) & 94.79 (0.58)\\
\addlinespace
\multicolumn{5}{l}{\textit{Gamma: positive population-fit tie}}\\
All selections & 40,000 & 93.65 (0.12) & 94.12 (0.12) & 94.93 (0.11)\\
Candidate 1 & 19,532 & 95.01 (0.16) & 95.32 (0.15) & 94.35 (0.17)\\
Candidate 2 & 20,468 & 92.35 (0.19) & 92.97 (0.18) & 95.49 (0.15)\\\bottomrule
\end{tabular}
\note{Coverage of nominal 95\% intervals is percent; parentheses give MCSEs in percentage points. All selections is marginal; Candidate 1/2 conditions on that control. Regional sample sizes are $(25,50,75)$; methods share observations and labels. True covariance is exact for normal observations and gives a Gaussian-reference calculation for Gamma observations. Supplemental Section~\SuppRef{sim:full-results} specifies evaluation.}
\end{table}

We evaluate complete fit, ties at positive population fit, separated population rankings, unbalanced samples and local means with a common ATT. The initial grid combines four settings with $r=25,100,400$ and normal or Gamma innovations, giving 24 cells with 5,000 draws each. Across these cells, feasible marginal coverage is 94.06--95.60\%. The additional local-mean grid uses 24 cells with 100,000 draws each, as specified in Supplemental Section~\SuppRef{sim:full-results}.

\subsection{Fixed candidate pools and strong pre--post dependence}
For Table~\ref{tab:fixed-pools-v15}, we evaluate fixed pools of two to 39 controls. Within each design, the interval procedures share the same simulated observations: conventional coverage falls as the pool becomes larger, whereas complete-event inference remains near 95\%. We evaluate each candidate count as a separate fixed-dimensional Gaussian design, using the same observations for mean and covariance estimation. At correlation $.99$, the standardized population covariance has minimum eigenvalue $.01$.

\begin{table}[tbp]\centering\small\setlength{\tabcolsep}{5pt}
\caption{Fixed candidate pools with same-sample covariance estimation}\label{tab:fixed-pools-v15}
\begin{tabular}{rrrrr}\toprule
Candidates & Draws & Conventional & Known selective & Feasible selective\\\midrule
2 & 100,000 & 93.27 (0.08) & 94.91 (0.07) & 94.89 (0.07)\\
5 & 100,000 & 90.30 (0.09) & 95.11 (0.07) & 95.07 (0.07)\\
10 & 100,000 & 88.01 (0.10) & 94.99 (0.07) & 94.95 (0.07)\\
39 & 100,000 & 85.91 (0.11) & 95.08 (0.07) & 95.06 (0.07)\\
\bottomrule\end{tabular}
\note{Nominal coverage is 95\%; coverage is percent and MCSEs are percentage points. Each region has 200 independent Gaussian observations. Both $V$ and $H$ have variance one; their covariance is $.9$ in the treated region and $-.9$ in controls. All candidates identify ATT zero. Selection minimizes centered squared pre-fit, retaining every comparison and fiber component. We generate same-sample normal means and Wishart covariances from their exact joint law. Additional checks use $n=50,200,800$ and correlation $.99$; candidate counts are separate fixed-dimensional designs.}
\end{table}


\subsection{Estimating covariance and retaining selection adjustment}\label{sec:endpoint-sim}
In 1,000-draw normal complete-fit cells, we compare estimated-covariance and true-covariance selective intervals on the same data. We measure endpoint error by the maximum of the absolute lower-endpoint difference and the absolute upper-endpoint difference, divided by the true standard error. Its median falls from 0.799 to 0.415 and 0.204 as $r=25,100,400$ increases. The selective-to-conventional length ratio compares intervals using the same estimated covariance; its median remains 1.536, 1.563, and 1.556. Thus covariance estimation becomes more accurate while selection adjustment persists on the standard-error scale. Supplemental Section~\SuppRef{app:endpoint-details} reports its design and numerical summaries.

In the same normal complete-fit cells, the 95th-percentile length ratios are $9.156$, $12.307$, and $11.133$ (Supplemental Section~\SuppRef{app:endpoint-details}). Supplemental Proposition~\SuppRef{v17rev:prop:localization} gives bounded localization with an explicit additional coverage-error budget. The applications use the nonrandomized first-order interval.

\subsection{When the researcher chooses the selection rule}\label{simv3:search}
A researcher may compare L2, L1, and Huber fits and choose the rule whose selected control produces a larger estimated effect. For a recorded menu, let $S_r(Y)$ be the control returned by rule $r$, $T_r(Y)=c_{S_r(Y)}'Y$, and $A(Y)$ select the rule maximizing $\varsigma T_r(Y)$, with fixed direction $\varsigma\in\{-1,1\}$ and tie priorities. Direction $+1$ selects larger estimates and $-1$ selects smaller estimates. In the experiment below, the menu contains squared loss (L2), absolute loss (L1), and three Huber losses with thresholds $.25/\sqrt{50}$, $.5/\sqrt{50}$ and $1/\sqrt{50}$. The final report $(r,s)$ is associated with
\[
 E_{r,s}=\{A(Y)=r,\ S_r(Y)=s\}.
\]
The polynomial-sign construction applies to this composed event as well as to a fixed rule's control choice.

In a known-covariance normal local experiment, regional samples $(50,100,150)$ and 40,000 paired draws give 506 reports choosing L1 and candidate 2 to obtain a larger estimate. On these same reports, we compare the conventional interval, the within-rule interval conditioning on $\{S_{\mathrm{L1}}(Y)=2\}$, and the full-event interval conditioning on $\{A(Y)=\mathrm{L1},S_{\mathrm{L1}}(Y)=2\}$. Their coverage is 94.66\% (MCSE 1.00 percentage point), 76.68\% (1.88), and 96.05\% (.87), while marginal within-rule/full-event coverage is 95.21\%/95.32\%. The extra rule choice remains consequential after within-L1 adjustment. Supplemental Section~\SuppRef{appv3:rule-search-full} gives the menu, joint event and an independent simulation check.



\section{Joint mean--covariance refinement}\label{ho:section}
We refine the selective test by correcting mean skewness and the joint fluctuation of the mean and covariance estimates from the same observations. The selected control and DiD point estimate remain the same.

\subsection{Sampling conditions and the centered test}
In each triangular-array row, let $X_1,\ldots,X_n\in\mathbb R^d$ be iid, with $\mu=\E X_i$, $\varepsilon_i=X_i-\mu$, $V=\E\varepsilon_i\varepsilon_i'$, $\kappa=\E\varepsilon_i^{\otimes3}$. The third central-moment tensor has components $\kappa_{abc}=\E(\varepsilon_{ia}\varepsilon_{ib}\varepsilon_{ic})$.
\begin{assumption}[Higher-order sampling conditions]\label{ho:assumption}
Selection satisfies (S), with fixed $d,D,M_0,|\mathcal S|$, $\underline vI_d\preceq V$ and $\E\|\varepsilon_i\|^8\le M$ for common $\underline v>0$, $M<\infty$.
Means, polynomial coefficients, nonzero contrast vectors, and sign-to-label maps may vary with $n$ within this class.
\end{assumption}
Discrete observations, perfect fit, and ties are allowed; $V\preceq M^{1/4}I_d$. First-order regional inference in Corollary~\ref{cor:regional} retains its $(2+\delta)$-moment condition.

For label $s$, set $x=\sqrt n\bar X$, $m=\sqrt n\mu$, $a=c_s/\|c_s\|$, $\vartheta=\sqrt n\theta/\|c_s\|$, $q_k(x)=p_{nk}(x/\sqrt n)$. Here $x$ is the scaled sample mean, $m$ its population mean, and $\vartheta$ the correspondingly scaled trial target. For each nonzero defining polynomial, collect all partial derivatives through degree $D$ and use their Euclidean norm to normalize its value. With $\beta$ a multi-index of nonnegative integers, set
\begin{equation}\label{ho:jet-definition}
 J_Dq(x)=\Big\{\sum_{|\beta|\le D}|\partial^\beta q(x)|^2\Big\}^{1/2},
 \qquad r_q(x)=q(x)/J_Dq(x).
\end{equation}
The highest nonzero derivative makes $J_Dq>0$. Let $\chi(u)$ be zero for $u\le0$, one for $u\ge1$, and $e^{-1/u}/\{e^{-1/u}+e^{-1/(1-u)}\}$ otherwise. For fixed $h\in(0,1)$, let $\mathcal A_{n,s}$ contain the sign vectors $\sigma\in\{-1,1\}^{M_n}$ that the recorded sign-to-label map assigns to $s$, and put
\begin{equation}\label{ho:weight}
 w_s(x)=\sum_{\sigma\in\mathcal A_{n,s}}
          \prod_{k=1}^{M_n}\chi\{\sigma_kr_{q_k}(x)/h\}.
\end{equation}
The parameter $h$ controls smoothing at the selection boundaries. The weight is in $[0,1]$, invariant under positive atom rescaling, and vanishes with all derivatives outside selected open cells and on boundaries. Before constructing \eqref{ho:jet-definition}--\eqref{ho:weight}, remove zero atoms, merge exactly proportional atoms with appropriate sign reversal, restrict the label map, and remove other atoms only when their sign is irrelevant. Preserve zero-sign ties. Fix the coordinates and preprocessing before observing the sample used for selection and inference; fixed full-row-rank reductions satisfy Supplemental Proposition~\SuppRef{rev:linear-reduction}. Duplicating a conjunct changes $\chi(r_q/h)$ to $\chi(r_q/h)^2$: equivalent hard rules can differ under smoothing.

Fix $\eta\in(0,1/4)$, $\tau>0$, $\gamma\in(0,1)$, and $\alpha\in[\alpha_0,1-\alpha_0]$, where $0<\alpha_0<1/2$. The parameters $\eta$, $\tau$ and $\gamma$ determine the mass cutoff, tail smoothing and rejection-probability margin below. For a positive definite $W$, put
\[
 \begin{aligned}
 v&=a'Wa,\quad b=Wa/v,\quad t=a'x,\quad r=x-bt,\\
 z&=(t-\vartheta)/\sqrt v,\quad
 x_r(u)=r+b(\vartheta+\sqrt v\,u).
 \end{aligned}
\]
Here $v$ is the contrast variance, $b$ the direction of the conditioning line, $r$ its residual, and $z$ the standardized difference from the trial target. Define $\mathfrak m=\int w_s(x_r(u))\phi(u)du$, the Gaussian mass weighted by $w_s$ along the line determined by $r$. When $\mathfrak m>\eta$, choose the unique $k_L<k_U$ satisfying
\begin{equation}\label{ho:roots}
 \int w_s(x_r(u))\ell((u-k)/\tau)\phi(u)du
 =\begin{cases}(1-\alpha/2)\mathfrak m,&k=k_L,\\
                 (\alpha/2)\mathfrak m,&k=k_U,
   \end{cases}
 \qquad\ell(u)=(1+e^{-u})^{-1}.
\end{equation}
The roots calibrate the two smoothed rejection tails. Set their sum to $L_\alpha=1-\ell((z-k_L)/\tau)+\ell((z-k_U)/\tau)$ and define the centered adjustment
\begin{equation}\label{ho:g}
 g_s(x,W;\vartheta)=(1-\gamma)\chi((\mathfrak m-\eta)/\eta)
                  w_s(x)(L_\alpha-\alpha),
\end{equation}
with $g_s=0$ for $\mathfrak m\le\eta$. In all covariance arguments above and below, use a fixed smooth map $\mathcal R:\mathbb S_d\to\mathbb S_d$ on symmetric matrices. The map has uniformly bounded derivatives of each order, keeps eigenvalues in a common compact subset of $(0,\infty)$, and satisfies $\mathcal R(V)=V$ throughout the class. Smooth spectral saturation supplies such a map. All covariance derivatives include the composition with $\mathcal R$.

\begin{proposition}[Gaussian centering]\label{ho:center}
If $G\sim N_d(m,V)$ and $a'm=\vartheta$, then $\E[g_s(G,V;\vartheta)\mid G-Va(a'G)/(a'Va)]=0$. Moreover,
\begin{equation}\label{ho:margin}
 \gamma\alpha\le\alpha+g_s(x,W;\vartheta)
                     \le1-\gamma(1-\alpha).
\end{equation}
\end{proposition}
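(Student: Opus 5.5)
The plan is to condition on the residual $R=G-Va(a'G)/(a'Va)$ and reduce everything to a one-dimensional Gaussian integral along the conditioning line. With $W=V$ we have $v=a'Va$ and $b=Va/v$, so $R=r$ exactly. Gaussian projection, as in Lemma~\ref{lem:gaussian}, makes $t=a'G$ independent of $r$, with $t\sim N(a'm,v)=N(\vartheta,v)$. Hence $z=(t-\vartheta)/\sqrt v\sim N(0,1)$ independently of $r$, and $G=x_r(z)$ holds identically.

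Given $r$, the quantities $\mathfrak m$, $k_L$, $k_U$ and the factor $\chi((\mathfrak m-\eta)/\eta)$ depend only on $r$, $V$, $\vartheta$, $s$ and the fixed tuning constants. They are therefore constants under the conditional law. If $\mathfrak m\le\eta$, then $g_s=0$ and there is nothing to prove. Otherwise,
\[
\E[g_s\mid r]=(1-\gamma)\chi((\mathfrak m-\eta)/\eta)\int w_s(x_r(u))\{1-\ell((u-k_L)/\tau)+\ell((u-k_U)/\tau)-\alpha\}\phi(u)\,du .
\]
By the root equations \eqref{ho:roots}, the integral equals $\mathfrak m-(1-\alpha/2)\mathfrak m+(\alpha/2)\mathfrak m-\alpha\mathfrak m=0$, which proves centering.

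For the margin \eqref{ho:margin}, I would first note that $\ell\in(0,1)$ is increasing and $k_L<k_U$. These imply $\ell((z-k_U)/\tau)<\ell((z-k_L)/\tau)$, so $L_\alpha\in(0,1)$. The remaining factors $\chi(\cdot)$ and $w_s(x)$ both lie in $[0,1]$, so their product $\lambda:=\chi w_s$ is in $[0,1]$ too. We can then write $\alpha+g_s=\alpha+(1-\gamma)\lambda(L_\alpha-\alpha)$.

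This expression is affine in $L_\alpha\in[0,1]$ and in $\lambda\in[0,1]$. Its minimum is attained at $\lambda=1$, $L_\alpha=0$, giving $\alpha-(1-\gamma)\alpha=\gamma\alpha$. Its maximum is attained at $\lambda=1$, $L_\alpha=1$, giving $\alpha+(1-\gamma)(1-\alpha)=1-\gamma(1-\alpha)$. When $\lambda=0$, the value is $\alpha$, which lies between these two bounds.

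The main point needing care is the well-definedness of $k_L<k_U$, which the statement takes as given whenever $\mathfrak m>\eta$. To confirm it, consider $k\mapsto\int w_s\,\ell((u-k)/\tau)\phi\,du$. This map is continuous and strictly decreasing, because $w_s(x_r(\cdot))\phi$ has positive mass $\mathfrak m$. It tends to $\mathfrak m$ as $k\to-\infty$ and to $0$ as $k\to\infty$. The targets $(1-\alpha/2)\mathfrak m$ and $(\alpha/2)\mathfrak m$ both lie in $(0,\mathfrak m)$, so each has a unique root. Since $1-\alpha/2>\alpha/2$, strict monotonicity gives $k_L<k_U$.

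A second point to check is that $w_s$ is a function of $x$ alone, so evaluating it at $x=G=x_r(z)$ is consistent with the integrand. Measurability of the conditional expectation is routine, because $g_s$ is bounded.
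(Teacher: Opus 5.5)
Your proposal is correct and follows essentially the same route as the paper. Both condition on the residual and use Gaussian independence of $a'G$ from it. Both note that $\mathfrak m$, $k_L$, $k_U$ and the cutoff factor are constant along the line, so the root equations make the weighted integral vanish. Both obtain the margin from $0\le L_\alpha\le1$ together with a multiplier in $[0,1-\gamma]$. Your extra verification of existence, uniqueness and ordering of the roots is sound; the paper builds that into the definition of $k_L<k_U$.
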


Under the Gaussian law, $a'G$ is independent of the residual in Proposition~\ref{ho:center}. The root equations~\eqref{ho:roots} center the integral conditional on that residual.

\subsection{Feasible correction and unequal regional samples}
Use the unbiased sample covariance $\widehat V$ and the projected third moment
$\widehat\kappa=\Pi_K\{n^{-1}\sum_i(X_i-\bar X)^{\otimes3}\}$, where $K\ge M^{3/8}$ and $\Pi_K$ projects the empirical third central moment onto the Euclidean tensor ball of radius $K$. Let $D_x$ denote differentiation in the mean coordinates and $D_W$ differentiation in symmetric covariance-matrix directions. With ordered indices and $E^{bc}=(e_be_c'+e_ce_b')/2$, define
\begin{equation}\label{ho:operator}
 \mathcal A_\kappa f=\sum_{a,b,c}\kappa_{abc}
       \{\tfrac16D_x^3f[e_a,e_b,e_c]+D_xD_Wf[e_a,E^{bc}]\}.
\end{equation}
The third derivative corrects mean skewness. The mixed derivative corrects the effect of estimating covariance from the same observations on the direction $b$, residual $r$, weighted mass $\mathfrak m$ and the roots. Project the corrected rejection probability onto $[0,1]$:
\begin{equation}\label{ho:test}
 \delta_s^{\mathrm{bc}}(\theta)=\Pi_{[0,1]}
 \{\alpha+g_s(x,\widehat V;\vartheta)
          -n^{-1/2}\mathcal A_{\widehat\kappa}g_s(x,\widehat V;\vartheta)\}.
\end{equation}
Draw one $U\sim\operatorname{Unif}(0,1)$ independently of the data and reject when $U\le\delta_s^{\mathrm{bc}}(\theta)$. Use this same draw at every trial $\theta$ when inverting the test.

\begin{theorem}[Uniform joint size refinement]\label{ho:main}
Under Assumption~\ref{ho:assumption}, with the tuning constants and covariance regularization map fixed, there is a common $C$ such that, for $n\ge2$,
\begin{equation}\label{ho:jointbound}
 \sup_P\max_{s\in\mathcal S}
 \big|P\{E_{n,s},U\le\delta_s^{\mathrm{bc}}(\theta_{n,s})\}
                     -\alpha P(E_{n,s})\big|\le C/n.
\end{equation}
The constant is independent of means, atom coefficients, contrasts, and selection probabilities. Conditional size error is at most $C/(np_{n,s})$ for $p_{n,s}=P(E_{n,s})>0$; selected-target marginal error is at most $|\mathcal S|C/n$.
\end{theorem}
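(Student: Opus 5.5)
The plan is to reduce the joint probability to an expectation of a smooth functional and then run a second-order expansion in the mean and covariance jointly. Since $U$ is independent of the data, $P\{E_{n,s},U\le\delta_s^{\mathrm{bc}}\}=\E[\ind_{E_{n,s}}\delta_s^{\mathrm{bc}}(\theta_{n,s})]$.

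\textbf{Step 1: localize to $E_{n,s}$.} By construction, $w_s$ vanishes with all its derivatives outside the open sign cells assigned to $s$ and on $\mathcal B_n$. Hence $g_s$ and $\mathcal A_{\widehat\kappa}g_s$ vanish off $E_{n,s}$, and off the Lebesgue-null boundary that Lemma~\ref{rev:boundary} controls. This gives $\ind_{E_{n,s}}(\alpha+g_s-n^{-1/2}\mathcal A_{\widehat\kappa}g_s)=\alpha\ind_{E_{n,s}}+g_s-n^{-1/2}\mathcal A_{\widehat\kappa}g_s$, up to the boundary event. That event needs a separate $O(n^{-1})$ argument for discrete data, and I would handle it by noting that the integrand is zero there anyway.

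\textbf{Step 2: remove the projection $\Pi_{[0,1]}$.} The key input is a uniform bound $\sup|D_x^jD_W^kg_s|\le C_0$ for $j+k\le 5$. This bound must be uniform over means, atom coefficients, contrasts and $\vartheta$. I would obtain it in three parts.
\begin{itemize}
\item The normalization $r_q=q/J_Dq$ has derivatives bounded uniformly in the coefficients and the location, because each $|\partial^\beta q|\le J_Dq$ and $q$ has degree at most $D$.
\item $\chi(\cdot/h)$ and $\ell(\cdot/\tau)$ are smooth with bounded derivatives.
\item On $\{\mathfrak m>\eta/2\}$, the roots $k_L,k_U$ are smooth by the implicit function theorem. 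Their defining derivatives are bounded below because the mass is at least $\eta/2$ and $\tau$ is fixed. The cutoff $\chi((\mathfrak m-\eta)/\eta)$ kills the region where this fails. The covariance enters only through $\mathcal R$, whose eigenvalues lie in a fixed compact set, so $v,b,r$ are smooth with bounded derivatives.
\end{itemize}
Because $\|\widehat\kappa\|\le K$, we then get $|n^{-1/2}\mathcal A_{\widehat\kappa}g_s|\le C_1Kn^{-1/2}$. By \eqref{ho:margin}, $\alpha+g_s$ stays at least $\gamma\alpha_0$ away from $\{0,1\}$. So for $n\ge n_0$ the projection is inactive, and the finitely many $n<n_0$ are absorbed into $C$. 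The remaining task is to show $|\E g_s(x,\widehat V)-n^{-1/2}\E\mathcal A_{\widehat\kappa}g_s(x,\widehat V)|\le C/n$.

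\textbf{Step 3: expansion without a Cramér condition.} For a smooth $f$ with the derivative bounds above, I would prove
\[
\E f(x,\widehat V)=\E f(G,V)+n^{-1/2}\E\mathcal A_\kappa f(G,V)+O(n^{-1}),\qquad G\sim N_d(m,V),
\]
with a constant depending only on $C_0$, $M$ and $d$. The smoothness of $f$ replaces Cramér's condition, which is why discrete observations are allowed.

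The route is to Taylor-expand in $W$ around $V$. Write $\widehat V-V=n^{-1}\sum_i(\varepsilon_i\varepsilon_i'-V)+O_p(n^{-1})$. The second-order $W$ remainder has expectation $O(\E\|\widehat V-V\|^2)=O(n^{-1})$ using fourth moments. In the first-order $W$ term, $\E[D_Wf(x,V)[\widehat V-V]]$ is expanded by a leave-one-out (Stein/Lindeberg) argument; the covariance between $x$ and $\varepsilon_i\varepsilon_i'$ produces exactly $n^{-1/2}\E D_xD_Wf[\varepsilon,\varepsilon\varepsilon']$, i.e. the mixed $\kappa$ term. For the pure mean part, a Lindeberg swap of $\varepsilon_i/\sqrt n$ against Gaussian copies with third-order matching gives the $\tfrac16D_x^3f[\kappa]$ term. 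Both remainders are controlled by fourth-order derivatives times $n^{-1}$, and the products involving $\widehat V-V$ use the eighth-moment bound. I would also verify that replacing $\bar X$ by $\mu$ inside $\widehat V$ and the $(n-1)$ normalization cost only $O(1/n)$.

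\textbf{Step 4: Gaussian centering and closing the argument.} Proposition~\ref{ho:center}, with $a'm=\vartheta$ at the true $\theta_{n,s}$, gives $\E g_s(G,V;\vartheta)=0$. It remains to show $\E\mathcal A_{\widehat\kappa}g_s(x,\widehat V)-\E\mathcal A_\kappa g_s(G,V)=O(n^{-1/2})$, which follows from three facts:
\begin{itemize}
\item $\E\|\widehat\kappa-\kappa\|=O(n^{-1/2})$. The projection $\Pi_K$ is nonexpansive and contains $\kappa$ because $K\ge M^{3/8}$, and sixth moments give the rate.
\item A first-order version of Step 3 applies with $f=\mathcal A_\kappa g_s$.
\item The derivatives of $g_s$ are bounded.
\end{itemize}
This yields \eqref{ho:jointbound}. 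The conditional bound follows by dividing by $p_{n,s}$, and the marginal bound by summing over $s\in\mathcal S$.

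\textbf{Main obstacle.} I expect the main difficulty to be Step 2's uniform derivative bounds, specifically the uniformity of the implicit roots $k_L,k_U$ and of $r_{q_k}$ jointly in $(x,W)$ as means drift and coefficients degenerate. I would first try to bound $\partial_kI$ below by $c(\eta,\tau)>0$ on $\{\mathfrak m\ge\eta\}$. I would then bound the roots themselves, $|k_L|,|k_U|\le C(\eta,\tau,\alpha_0)$, using the logistic tails, before differentiating.
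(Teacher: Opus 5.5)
Your outline follows the paper's proof: the pointwise identity $\ind_{E_{n,s}}(\delta_s^{\mathrm{bc}}-\alpha)=\delta_s^{\mathrm{bc}}-\alpha$, Gaussian centering from Proposition~\ref{ho:center}, a Lindeberg swap for the skewness term, a leave-one-out expansion of $\E D_Wg_s[\widehat V-V]$ producing $\kappa:D_xD_Wg_s$, and an $n^{-1/2}$ rate for $\widehat\kappa-\kappa$. The identity needs no appeal to Lemma~\ref{rev:boundary}: $g_s$ and all its derivatives vanish on atom boundaries and in other labels' cells, so boundary mass contributes exactly zero even for discrete data.

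The gap is in Step~2. The covariance derivatives of $g_s$ are not uniformly bounded in $x$. Hence the deterministic bound $|n^{-1/2}\mathcal A_{\widehat\kappa}g_s|\le C_1Kn^{-1/2}$ is false, and so is the conclusion that $\Pi_{[0,1]}$ is inactive for $n\ge n_0$. The conditioning line is $x_r(u)=x+b(\vartheta-a'x+\sqrt v\,u)$. So $D_Wr=-(D_Wb)\,a'x$, and $D_Wx_r(u)$ carries the factor $\vartheta-a'x+\sqrt v\,u$. Consequently $D_W\mathfrak m$ and the $W$-derivatives of $k_L,k_U$ grow linearly in $|a'x-\vartheta|$.

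To see this growth is real, translate $x$ along $b$. This leaves the line, $\mathfrak m$ and the roots unchanged while $a'x-\vartheta\to\infty$. Take a point where $\chi'((\mathfrak m-\eta)/\eta)\ne0$ and $w_s(x)=1$, for example a half-space cell into which $b$ points. There the term $(1-\gamma)\eta^{-1}\chi'(\cdot)\,w_s(x)(L_\alpha-\alpha)\,D_W\mathfrak m$ is unbounded. The correct estimate is \eqref{ho:derivativebound}: pure $x$-derivatives are bounded, but $\|D_x^jD_W^lg_s\|\le B(1+|a'x-\vartheta|^l)$. With $Z=x-m$, this gives only $|Q|\le CKn^{-1/2}(1+|a'Z|)$. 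So clipping need not be inactive for any $n$; only $P(|Q|>\epsilon_*)=O(n^{-1})$ holds.

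The repair is the paper's second-moment argument. By \eqref{ho:margin}, $\alpha+g_s$ stays at least $\epsilon_*=\gamma\alpha_0$ from $0$ and $1$. Hence $|\delta_s^{\mathrm{bc}}-\delta^{\mathrm{raw}}|\le(|Q|-\epsilon_*)_+\le Q^2/\epsilon_*$. Moreover $\E Q^2\le C/n$, because $\|\widehat\kappa\|\le K$ deterministically and $a'Z$ has bounded second moment.

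The same growth must be carried through Steps~3--4, which changes your moment bookkeeping:
\begin{itemize}
\item The quadratic covariance remainder is $\E[\|\widehat V-V\|^2(1+|a'Z|^2)]$, not $O(\E\|\widehat V-V\|^2)$. Cauchy--Schwarz bounds it by $C\{\E\|\widehat V-V\|^4\}^{1/2}=O(n^{-1})$.
\item The leave-one-out remainder is $n^{-1}\E[\|B_i\|\|\varepsilon_i\|^2(1+|a'Z_{-i}|+\|\varepsilon_i\|/\sqrt n)]$. This bound and the previous one are where eighth moments are genuinely needed, not fourth.
\item Replacing $\mathcal A_\kappa g_s(x,V)$ by its Gaussian counterpart requires a weighted Lindeberg comparison for $C^3$ functions whose derivatives grow linearly, not a bounded-derivative first-order version.
\item The tensor error must be bounded as $\E[\|\widehat\kappa-\kappa\|(1+|a'Z|)]\le C\{\E\|\widehat\kappa-\kappa\|^2\}^{1/2}=O(n^{-1/2})$. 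This uses the $L^2$ rate, not the $L^1$ rate you state.
\end{itemize}
With these changes your argument coincides with the paper's.
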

Combining the Gaussian centering in Proposition~\ref{ho:center} with the joint mean--covariance expansion, Lemma~\ref{ho:jointlemma} establishes the term removed by the correction:
\begin{equation}\label{ho:expansion}
 \E g_s(x,\widehat V;\vartheta)
 =n^{-1/2}\E\mathcal A_\kappa g_s(m+G_0,V;\vartheta)+O(n^{-1}),
 \qquad G_0\sim N_d(0,V).
\end{equation}
Because $g_s$ and the derivatives used in the correction vanish outside the open cells assigned to $s$ and on their boundaries, the expansion applies to the original selection event with its recorded tie assignments.

For independent regional panels, impose the same moment and lower-eigenvalue bounds within each region, with $n_j\ge2$, and let
\begin{equation}\label{ho:regionalcoord}
 \begin{aligned}
 L_n&=\operatorname{diag}_j(n_j^{-1/2}I_T),\quad x=L_n^{-1}Y_n,\\
 a&=\frac{L_n'c_{n,s}}{\|L_n'c_{n,s}\|},\quad
 \vartheta=\frac\theta{\|L_n'c_{n,s}\|},\quad q_k(x)=p_{nk}(L_nx).
 \end{aligned}
\end{equation}
The transformation $L_n^{-1}$ multiplies the mean in region $j$ by $\sqrt{n_j}$, and $a$ is the resulting unit contrast. Use each region's unbiased sample covariance $\widehat\Gamma_j$ and projected third central moment $\widehat\kappa_j$. With $\widehat V=\operatorname{diag}_j\widehat\Gamma_j$, replace $n^{-1/2}\mathcal A_{\widehat\kappa}$ in \eqref{ho:test} by
\begin{equation}\label{ho:regionalop}
 \widehat{\mathcal C}_nf=\sum_jn_j^{-1/2}\sum_{a,b,c}\widehat\kappa_{j,abc}
 \{\tfrac16D_{x_j}^3f[e_a,e_b,e_c]+D_{x_j}D_{W_{jj}}f[e_a,E^{bc}]\}.
\end{equation}
\begin{theorem}[Regional refinement]\label{ho:regional}
For this independent regional experiment, the joint error in \eqref{ho:jointbound} is bounded by $C\sum_jn_j^{-1}$. The constant is independent of relative sample sizes. Conditional error is at most $C\sum_jn_j^{-1}/p_{n,s}$.
\end{theorem}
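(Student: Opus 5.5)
The regional statement reduces to the iid argument behind Theorem~\ref{ho:main}, applied region by region. Work throughout in the standardized coordinates \eqref{ho:regionalcoord}. There $x=(x_0',\ldots,x_J')'$ with $x_j=\sqrt{n_j}\bar X_{nj}$, and the blocks $x_j$ are independent across regions. The covariance argument is $W=\operatorname{diag}_j W_{jj}$, and $\widehat W_{jj}=\widehat\Gamma_j$ depends only on region $j$. The smoothed adjustment $g_s(x,W;\vartheta)$ is the same smooth, bounded function as before. The jet normalization \eqref{ho:jet-definition} and the weight \eqref{ho:weight} are invariant under the linear reparametrization $q_k(x)=p_{nk}(L_nx)$. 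Since $a$ is a unit vector and $\mathcal R$ keeps eigenvalues in a fixed compact set, the derivative bounds of $g_s$ in $(x,W)$ used for Theorem~\ref{ho:main} hold uniformly in $L_n$, and hence uniformly in relative sample sizes. Proposition~\ref{ho:center} applies verbatim with $V=\operatorname{diag}_j\Gamma_j$. Therefore $\E g_s(G,V;\vartheta)=0$ for $G\sim N(m,V)$ with $m=L_n^{-1}\mu_n$, and the joint rejection probability is $\alpha P(E_{n,s})+\E g_s(\cdot)$ minus the correction, up to the projection.

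The main step is a regional analogue of Lemma~\ref{ho:jointlemma}:
\[
\E g_s(x,\widehat V;\vartheta)=\E\,\mathcal C_n g_s(m+G_0,V;\vartheta)+O\Big(\sum_j n_j^{-1}\Big),
\]
where $\mathcal C_n$ is \eqref{ho:regionalop} with the true $\kappa_j$. I would prove it by a Lindeberg/telescoping replacement over regions. Order the regions and successively replace the pair $(x_j,\widehat\Gamma_j)$ by its Gaussian counterpart $(m_j+G_{0j},\Gamma_j)$, holding the other blocks fixed at either their sample or their Gaussian versions. Conditionally on the other blocks, the function of $(x_j,\widehat\Gamma_j)$ is a smooth function with uniformly bounded derivatives. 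The iid one-sample expansion used for Theorem~\ref{ho:main} then applies with sample size $n_j$, with constants not depending on the frozen blocks. This gives a first-order term $n_j^{-1/2}\E\{\frac16D^3_{x_j}g[\kappa_j]+D_{x_j}D_{W_{jj}}g[\kappa_j]\}$ and a remainder $O(n_j^{-1})$ from the eighth-moment bound.

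The first-order term is evaluated with the other blocks at mixed sample/Gaussian laws. Replacing those by Gaussian laws changes it by $n_j^{-1/2}\cdot O(\sum_{k}n_k^{-1/2})$. This produces cross terms $n_j^{-1/2}n_k^{-1/2}\le(n_j^{-1}+n_k^{-1})/2$, so the total remains $O(\sum_jn_j^{-1})$. Independence across regions ensures that no mixed cross-region covariance-derivative terms arise, which is why \eqref{ho:regionalop} contains only $W_{jj}$ derivatives.

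Next, control the feasible correction. Replacing $\kappa_j$ by $\widehat\kappa_j$, and evaluating the correction at $(x,\widehat V)$ instead of at the Gaussian point, costs $n_j^{-1/2}\,\E|\widehat\kappa_j-\kappa_j|=O(n_j^{-1})$ per region. The projection $\Pi_K$ is nonexpansive and does not bind at the truth since $K\ge M^{3/8}$. The difference between the derivative at the sample point and at the Gaussian point is handled by the same one-term replacement, again giving $O(n_j^{-1/2})$ per region. By the margin \eqref{ho:margin}, the projection onto $[0,1]$ is inactive unless the correction exceeds $\gamma\min(\alpha,1-\alpha)$, which by Markov's inequality with bounded moments has probability $O(\sum_jn_j^{-1})$. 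On $E_{n,s}$ the smoothed weight carries the selection event exactly as in Theorem~\ref{ho:main}: $g_s$ vanishes off the assigned cells, so hard-boundary and tie contributions cancel. Summing the errors gives the bound $C\sum_jn_j^{-1}$, and dividing by $p_{n,s}$ gives the conditional bound.

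I expect the hardest step to be showing that the derivative bounds, and hence the constants, are uniform in relative sample sizes. The rescaling by $L_n$ must not inflate derivatives of $r_{q_k}$. I would argue this using the scale invariance of the jet-normalized ratio $r_q$, which makes the ratio insensitive to affine changes in each coordinate block, together with the fact that the weighted mass is computed along a unit-normalized direction.
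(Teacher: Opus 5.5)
Your proposal follows essentially the paper's proof: regionwise replacement in the standardized coordinates, the iid joint expansion within each block with remainder $C/n_j$, independence eliminating cross-region mixed terms, cross-block replacement costs of order $(n_jn_k)^{-1/2}$ absorbed into $C\sum_jn_j^{-1}$ (your AM--GM step plays the role of the paper's Cauchy--Schwarz bound), second-moment control of tensor estimation and clipping, and the support argument. One correction concerns the step you flag as hardest. Uniformity in relative sample sizes does not come from any invariance of $r_q$ under affine changes of coordinates, since $r_q$ is only invariant under positive rescaling of $q$. It comes from Lemma~\ref{ho:jet} holding uniformly over all nonzero polynomials of degree at most $D$, so the only inputs that matter are $\|a\|=1$ and the common eigenvalue bounds on the standardized blocks $\Gamma_j$, exactly as the paper notes.
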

To stabilize estimated corrections, let $p_s^0(\theta)=\alpha+g_s(x,\widehat V;\vartheta)$ be the rejection probability before subtracting the estimated correction. Write $Q_s(\theta)$ for the subtracted term $n^{-1/2}\mathcal A_{\widehat\kappa}g_s(x,\widehat V;\vartheta)$ in \eqref{ho:test}, or $\widehat{\mathcal C}_ng_s$ for regional samples. With $d_s=p_s^0(1-p_s^0)$, use
\begin{equation}\label{ho:stable-test}
 \delta_s^{\mathrm{st}}(\theta)=p_s^0(\theta)-d_s(\theta)
                 \tanh\{Q_s(\theta)/d_s(\theta)\}.
\end{equation}
The $\tanh$ function smoothly limits the effect of large estimated corrections. This probability lies in $[0,1]$ and retains the fixed-tuning error orders. Appendix Proposition~\ref{ho:stable-prop} allows a variable stabilization scale and general saturation. Under its law-preserving symmetry condition, the stabilized and estimated-covariance baseline procedures have equal joint rejection probabilities at every fixed trial.

For $a\in\{\mathrm{bc},\mathrm{st}\}$, reuse the same independent $U$
at all trial values and collect the values not rejected by the test:
\begin{equation}\label{ho:set}
 C_s^a(U)=\{\theta:U>\delta_s^a(\theta)\}.
\end{equation}
At fixed tuning, these bounds give vanishing conditional size error when $np_{n,s}\to\infty$ in the iid model, or $\sum_j n_j^{-1}=o(p_{n,s})$ for independent regional samples. Smoothing permits the joint expansion; correction removes its leading skewness and mean--covariance terms; stabilization retains the order within $[0,1]$. Supplemental Section~\SuppRef{ho:computation} evaluates calibration and the power cost of smoothing in population-centered benchmarks and in independent validation using estimated quantities. For the police application, we examine the smooth selection mass, a fixed linear reduction, and size and power under block resampling in Supplemental Section~\SuppRef{appv3:police-ho}.

We next let the smoothing parameters decrease with sample size. The envelope $C_n$ below tracks the resulting derivative and error constants.
\begin{corollary}[Shrinking inference smoothing]\label{ho:vanishing}
For the clipped test~\eqref{ho:test}, under Assumption~\ref{ho:assumption}, let the predetermined $h_n,\eta_n,\tau_n,\gamma_n$ decrease to zero, and let $C_n$ bound the derivative, expansion, clipping, and decision-comparison constants. If $C_n=o(\sqrt n)$, the joint size error is bounded by $C_n/n=o(n^{-1/2})$. For $p_{n,s}\ge p_0>0$, rejection at the true target agrees in probability with the nonrandomized full-fiber Gaussian plug-in equal-tail decision. One admissible sequence is
\[
 t_n=5+\log\log(n+e^e),\qquad
 h_n=\eta_n=\tau_n=\gamma_n=t_n^{-1},\qquad C_n=n^{o(1)}.
\]
For unequal regional samples, the joint error is bounded by $C_n\sum_jn_j^{-1}$; the corresponding rate condition is $C_n\sum_jn_j^{-1}=o(n_*^{-1/2})$, $n_*=\min_jn_j$. The displayed sequence indexed by $n_*$ satisfies it for a fixed number of regions.
\end{corollary}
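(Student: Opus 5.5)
The proof reruns the argument of Theorem~\ref{ho:main} with $n$-dependent tuning and tracks every constant through the envelope $C_n$. In that argument, the joint error is the sum of four pieces: (i) the Edgeworth-type joint mean--covariance expansion of $\E g_s(x,\widehat V;\vartheta)$ in Lemma~\ref{ho:jointlemma}, (ii) Gaussian centering from Proposition~\ref{ho:center}, which contributes nothing, (iii) replacing $\kappa$ by $\widehat\kappa$ in the correction, and (iv) the clipping $\Pi_{[0,1]}$. Each remainder is bounded by a finite number of derivatives of $g_s$ in $(x,W)$ and by eighth moments. With fixed tuning these derivatives are $O(1)$. With shrinking $h_n,\eta_n,\tau_n$, each derivative costs a power of $h_n^{-1}$, $\tau_n^{-1}$ or $\eta_n^{-1}$. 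The smooth cutoff $\chi$ has derivatives of every order bounded by constants. The jet normalization $r_q=q/J_Dq$ keeps the polynomial atoms scale-free.

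The first step is to define $C_n$ as the maximum of these derivative bounds, the expansion constant, the clipping remainder and the decision-comparison constant. Rewriting the proof of Theorem~\ref{ho:main} with these explicit dependencies then gives a joint error of at most $C_n/n$.

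Clipping needs its own check. The margin \eqref{ho:margin} keeps $\alpha+g_s$ at least $\gamma_n\alpha$ away from $0$ and $1$. Clipping therefore activates only when $n^{-1/2}|\mathcal A_{\widehat\kappa}g_s|>\gamma_n\alpha_0$. By Markov's inequality with eighth moments, that event has probability at most $C_n/n$ once $\gamma_n^{-1}$ is absorbed into $C_n$. The assumption $C_n=o(\sqrt n)$ then gives the stated $o(n^{-1/2})$ bound.

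The second step is the decision comparison for $p_{n,s}\ge p_0$. Here I would show three convergences in probability on $E_{n,s}$.
\begin{itemize}
\item The smoothed weight $w_s(x_r(\cdot))$ converges to the indicator of the full fiber along the conditioning line. As $h_n\to0$, the two differ only within normalized distance $h_n$ of the polynomial roots. Lemma~\ref{rev:boundary} and the continuity arguments in Appendix~\ref{supp:lemmas} make the Gaussian mass of such neighbourhoods vanish uniformly. The lemma applies because anti-concentration of the boundary under the uniform CLT holds by (S).
\item The mass cutoff $\chi((\mathfrak m-\eta_n)/\eta_n)$ tends to one, because $\mathfrak m$ stays bounded below when $p_{n,s}\ge p_0$.
\item The roots $k_L,k_U$ converge to the truncated-normal quantiles, since $\ell((u-k)/\tau_n)$ converges to a step function.
\end{itemize}
Given these, the randomized probability $\alpha+g_s-n^{-1/2}\mathcal A g_s$ converges in probability to $0$ or $1$ according to whether the plug-in pivot of Theorem~\ref{thm:plugin} lies inside or outside $[\alpha/2,1-\alpha/2]$. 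The correction term itself is $O_p(C_n n^{-1/2})=o_p(1)$. The exception is the event that the pivot lies within $o(1)$ of $\alpha/2$ or $1-\alpha/2$. That event has vanishing probability because Theorem~\ref{thm:plugin} makes the pivot asymptotically uniform conditionally on $E_{n,s}$.

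The third step is the explicit sequence. Differentiating $\chi(r_q/h)$ and $\ell(\cdot/\tau)$ up to order $K$ gives factors of order $h^{-K}$. Inverting the root equations adds factors of $\mathfrak m^{-1}\lesssim\eta^{-1}$ and $\tau^{-1}$. Hence $C_n\le C\,t_n^{c}$ for a fixed power $c$. With $t_n=5+\log\log(n+e^e)$, this is $n^{o(1)}=o(\sqrt n)$.

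The regional version follows by the same bookkeeping in Theorem~\ref{ho:regional}, giving a bound of $C_n\sum_j n_j^{-1}$. With a fixed number of regions, $\sum_j n_j^{-1}\le J/n_*$, so the rate condition reduces to $C_n=o(n_*^{1/2})$. This holds when the sequence is indexed by $n_*$.

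I expect the main obstacle to be the derivative accounting in the mixed mean--covariance term $D_xD_W$. Differentiating the roots $k_L,k_U$ and the mass $\mathfrak m$ with respect to $W$ passes through $b$, $r$ and the implicit-function denominators. Those denominators are derivatives of the smoothed tail integrals in $k$, of order $\mathfrak m\,\tau^{-1}$ times a Gaussian factor. I would first bound them below by $\eta_n$ on the support of $\chi((\mathfrak m-\eta_n)/\eta_n)$, and then verify that all resulting powers stay polylogarithmic.
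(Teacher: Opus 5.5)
Your architecture matches the paper's proof. You track the tuning dependence through the proof of Theorem~\ref{ho:main} to get $C_n/n$, and you get decision agreement by letting the weights, mass cutoff and logistic tails converge to their hard counterparts.

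The gap is in the envelope step. There you claim that inverting the root equations \eqref{ho:roots} costs only powers of $\eta_n^{-1}$ and $\tau_n^{-1}$, so that $C_n\le Ct_n^{c}$. The implicit-function denominator is
\[
 \Big|\partial_k\int w_s(x_r(u))\,\ell\big((u-k)/\tau\big)\phi(u)\,du\Big|
 =\tau^{-1}\int w_s(x_r(u))\,\ell'\big((u-k)/\tau\big)\phi(u)\,du .
\]
Because $\ell'(y)\asymp e^{-|y|}$, this quantity is governed by the weighted mass within $O(\tau)$ of the root, not by $\mathfrak m$. Accepted sets can be disconnected, so a root can lie inside a gap of the fiber, at distance $\delta$ from all weighted mass. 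The denominator is then of order $e^{-\delta/\tau}$. The derivative constants of Lemma~\ref{ho:derivatives} are suprema over $(x,W)$, so this configuration enters $C_n$. When $\mathfrak m$ is only slightly above $\eta_n$, both fiber components can lie at distance of order $\sqrt{\log t_n}$ from the root. Your polynomial bound $C_n\le Ct_n^{c}$ is therefore false.

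The missing idea is the localization in the paper's proof. Choose $R=O(\sqrt{\log t_n})$ with $P(|N_0|>R)<\eta_n\alpha_0/8$. The mass cutoff then leaves weighted mass at least $\eta_n/2$ in $|u|\le R$ and forces $|k_L|,|k_U|\le R+C/t_n$. On that range, $\ell'((u-k)/\tau_n)\ge\tfrac14e^{-|u-k|t_n}\ge c\,e^{-2Rt_n}$. The reciprocal denominator is therefore at most a polynomial in $t_n$ times $\exp(Ct_n\sqrt{\log t_n})$, which gives $C_n\le t_n^A\exp(Bt_n\sqrt{\log t_n})$. This is still $n^{o(1)}$, but only because $t_n$ grows doubly logarithmically. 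Your polynomial envelope would wrongly certify much faster smoothing, such as $t_n=n^{\epsilon}$.

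There are two smaller points.
\begin{enumerate}
\item In your second step, $\mathfrak m$ is the random line mass at the observed residual, and it is not bounded below on $E_{n,s}$, whatever $p_0$ is. What you need is $\lim_{\epsilon\downarrow0}\limsup_n\sup_PP(E_{n,s},\mathfrak m\le\epsilon)=0$. This follows from the subsequence limit: there the observation is an interior point of its fiber, so the limiting line mass is almost surely positive on selection. The bound $p_0$ is used only to divide the joint disagreement.
\item For clipping, a bound on the activation probability does not by itself bound the error. Use $\E|\delta^{\mathrm{bc}}-\delta^{\mathrm{raw}}|\le\E(|Q|-\gamma_n\alpha_0)_+\le(\gamma_n\alpha_0)^{-1}\E Q^2$, as in the proof of Theorem~\ref{ho:main}.
\end{enumerate}
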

Appendix~\ref{ho:restored-vanishing} proves this connection. Lemma~\ref{ho:stable-vanishing} extends it to stabilization by including the reciprocal lower bound on the stabilization scale in the error envelope. For covariance estimators with iid influence contributions $\Psi_i$, we use the mixed tensor $\mathcal M=\E(\varepsilon_i\otimes\Psi_i)$ in the mean--covariance term, as established in Supplemental Proposition~\SuppRef{ho:generalcov}; the mean-skewness term continues to use $\kappa$. For sample covariance, $\mathcal M=\kappa$.


\section{Conclusion}\label{sec:discussion}
We construct feasible ATT inference for controls selected by polynomial or piecewise-polynomial pre-treatment fit rules. We estimate the full covariance from the same observations and recompute the conditioning direction, residual, and complete accepted set, allowing perfect fit, population ties, and unequal regional samples. Roth's fixed-comparison pretest correction does not incorporate selection from a candidate pool; applying AKM's simultaneous projection procedure to this problem does not provide ATT coverage conditional on the chosen control. For smooth selective tests, our joint mean--covariance correction attains an $O(n^{-1})$ joint size error at fixed tuning and a corresponding $O(\sum_jn_j^{-1})$ bound for independent regional samples.

In the police-deployment reanalysis, selection adjustment makes the confidence interval for the effect on car theft include zero. In the Washington minimum-wage reanalysis, it allows larger employment losses in the specified wage band. Both comparisons hold the conventional point estimates and estimated covariance fixed. Our simulations show how coverage averaged over control choices can conceal substantial undercoverage conditional on the reported control.

For selection procedures satisfying (S), researchers should report the candidate pool, the outcome and pre-treatment periods used to select controls, the loss weights and thresholds, the tie priority, and the estimated joint covariance. When the researcher also searches over fitting rules, the reported event includes that search and the chosen control. Joint cohort choices enter the selection event, while the full covariance accounts for observations reused across cohorts. With this information, readers can reconstruct the selection event and the corresponding ATT interval.


\section*{Declarations}
The authors declare no known competing financial interests or personal
relationships relevant to this work. Funding: JST SPRING, Japan
[JPMJSP2123] and JSPS KAKENHI [25K21651].

Claude (Anthropic) and ChatGPT (OpenAI) assisted with proof checking
using the Lean proof assistant, manuscript revision and English proofreading,
content checking, and the preparation of analysis code. The authors reviewed, verified,
and edited the outputs and assume responsibility for all aspects of the paper.

\appendix
\section{Proofs of feasible control-selection inference}\label{app:proofs}

\subsection{Boundary control and continuity of the full fiber}\label{supp:lemmas}
For a polynomial $q(z)=\sum_{|\beta|\le D}q_\beta z^\beta$, where $\beta$ is a multi-index of nonnegative integers, write
$\|q\|_{\mathrm{coef}}^2=\sum_\beta q_\beta^2$, and let $\phi_v$ denote the $N(0,v)$ density.
\begin{lemma}[Polynomial boundaries]\label{rev:boundary}
For fixed $d,D$ and $G_V\sim N_d(0,V)$,
\begin{equation}\label{rev:eq:boundary-gaussian}
 \lim_{\eta\downarrow0}\sup_{\substack{\|q\|_{\mathrm{coef}}=1\\
 \underline v I_d\preceq V\preceq\overline v I_d}}
 P\{|q(G_V)|\le\eta\}=0.
\end{equation}
Under (S) and Assumption~\ref{ass:array}, the probability of the boundary event $\mathcal B_n$, where at least one nonzero defining polynomial vanishes, tends to zero uniformly.
\end{lemma}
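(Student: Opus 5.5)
The Gaussian statement \eqref{rev:eq:boundary-gaussian} will follow from compactness plus the fact that a nonzero polynomial vanishes only on a Lebesgue-null set. The parameter set $K=\{(q,V):\|q\|_{\mathrm{coef}}=1,\ \underline vI_d\preceq V\preceq\overline vI_d\}$ is compact, since the coefficient sphere is finite-dimensional. Define
\[
F_\eta(q,V)=P\{|q(G_V)|\le\eta\}.
\]
For each fixed $(q,V)\in K$, $q$ is a nonzero polynomial, so its zero set has Lebesgue measure zero. Because $N_d(0,V)$ has a density, $P\{q(G_V)=0\}=0$, and monotone convergence gives $F_\eta(q,V)\downarrow0$ as $\eta\downarrow0$.

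To make this uniform, I will use upper semicontinuity and Dini's argument. Write $G_V=V^{1/2}N$ with $N\sim N_d(0,I_d)$. The map $(q,V)\mapsto q(V^{1/2}N)$ is continuous for each realization of $N$. Since $\{|x|\le\eta\}$ is closed, the portmanteau (Fatou) argument shows that $F_\eta$ is upper semicontinuous on $K$. Now fix $\epsilon>0$. For every point of $K$ there is an $\eta$ with $F_\eta<\epsilon$ there, and by upper semicontinuity also on an open neighbourhood of that point. Extract a finite subcover and take the smallest of the corresponding $\eta$'s. Since $F_\eta$ is monotone in $\eta$, this gives $\sup_KF_\eta<\epsilon$ for all smaller $\eta$. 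This proves \eqref{rev:eq:boundary-gaussian}.

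For the second claim, note that $\mathcal B_n$ is a finite union of at most $M_0$ events, so it suffices to handle one defining polynomial $p_{nk}$. Standardize as in \eqref{eq:normalized-comparison}. The event $\{p_{nk}(Y_n)=0\}$ equals $\{\widetilde q_{nk}(Z_n)=0\}$, and this is contained in $\{|\widetilde q_{nk}(Z_n)|\le\eta\}$ for every $\eta>0$. Here $\|\widetilde q_{nk}\|_{\mathrm{coef}}=1$ and $Z_n=L_n^{-1}(Y_n-\mu_n)$. Choose a Lipschitz function $f_\eta$ that equals $1$ on $[-\eta,\eta]$, vanishes outside $[-2\eta,2\eta]$, and has values in $[0,1]$. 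Then
\[
P\{|\widetilde q(Z_n)|\le\eta\}\le\E f_\eta(\widetilde q(Z_n))\le \E f_\eta(\widetilde q(G_{V_n}))+\Delta_n\le F_{2\eta}(\widetilde q,V_n)+\Delta_n .
\]

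The main obstacle is controlling $\Delta_n$ uniformly. The function $z\mapsto f_\eta(\widetilde q(z))$ is bounded but not globally Lipschitz, because the polynomial grows. I will truncate to a ball $\{\|z\|\le r\}$. Tightness of $Z_n$ is uniform: it is inherited from the $d_{\mathrm{BL}}$ approximation to $N_d(0,V_n)$ with $V_n\preceq\overline vI_d$. Hence the tail mass outside the ball is small uniformly in $P$ for large $r$. On the ball, a unit-coefficient polynomial of degree at most $D$ has gradient bounded by a constant $C(r,d,D)$. Multiply by a Lipschitz cutoff equal to $1$ on the ball and $0$ outside the ball of radius $r+1$. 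The resulting test function has bounded-Lipschitz norm at most $C'(r,\eta)$, so
\[
\Delta_n\le C'(r,\eta)\sup_Pd_{\mathrm{BL}}+\text{tail}(r)\to\text{tail}(r)
\]
by \eqref{eq:uniform-clt}. To conclude, first choose $\eta$ so that $F_{2\eta}$ is uniformly small by \eqref{rev:eq:boundary-gaussian}. Then choose $r$ so the tail is small, and finally let $n\to\infty$. Summing over the at most $M_0$ polynomials gives $\sup_{P}P(\mathcal B_n)\to0$. Covariance consistency \eqref{eq:uniform-cov} is not needed for this step.
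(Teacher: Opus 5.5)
Your proof is correct. For the Gaussian bound it uses the same compactness argument as the paper, phrased as upper semicontinuity of $F_\eta$ plus a finite subcover. The paper instead argues by contradiction along a sequence $(q_j,V_j,\eta_j)$ with $\eta_j\downarrow0$ and applies dominated convergence.

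The genuine difference is in the second claim. The paper extracts a subsequence along which the normalized atoms converge coefficientwise to nonzero limits $q_k$ and $V_n\to V$. Uniform convergence on compacts together with Assumption~\ref{ass:array} then gives $\widetilde q_{nk}(Z_n)\Rightarrow q_k(G_V)$. The paper applies the portmanteau theorem to the closed set $\{|x|\le\eta\}$ and obtains uniformity over $\mathcal P_n$ by contradiction. That route only needs the null zero set of the single limit polynomial, so the uniform Gaussian bound does no real work there.

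You instead compare $\E f_\eta(\widetilde q(Z_n))$ with its Gaussian counterpart directly. Truncating to a ball gives test functions whose bounded-Lipschitz norm $C'(r,\eta)$ is uniform over unit-coefficient polynomials. You then feed in $\sup_KF_{2\eta}$ from the first part, so your first part is actually used. This avoids subsequence extraction and the extended continuous-mapping step for polynomials with moving coefficients. It also yields an explicit bound of the form $\sup_PP(\mathcal B_n)\le M_0\{\sup_KF_{2\eta}+C'(r,\eta)\sup_Pd_{\mathrm{BL}}+\text{tail terms}\}$.

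The cost is the truncation bookkeeping. This includes transferring tightness from $N_d(0,V_n)$ to $Z_n$ through the Lipschitz cutoff, which holds only asymptotically in $n$; that suffices for your final limit. Your remark that \eqref{eq:uniform-cov} is not needed agrees with the paper's argument.
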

\begin{proof}
A nonzero polynomial has a Lebesgue-null zero set: induct on $d$, apply the assertion to its coefficient polynomials in $d-1$ coordinates, and then use finite roots in the last coordinate and Fubini on bounded boxes. If \eqref{rev:eq:boundary-gaussian} failed, there would be a sequence with $\eta_j\downarrow0$ and probabilities bounded below by a positive constant. Compactness of the unit coefficient sphere and the covariance class would give a subsequence with $q_j\to q$ coefficientwise, $\|q\|_{\mathrm{coef}}=1$, and $V_j\to V\succ0$. For $G\sim N_d(0,I_d)$, $q_j(V_j^{1/2}G)\to q(V^{1/2}G)\ne0$ almost surely, contradicting dominated convergence.

For the second assertion, take any $P_n\in\mathcal P_n$ and normalize as in \eqref{eq:normalized-comparison}. Nonsingular $L_n$ preserves nonzero atoms. A subsequence fixes their number and gives nonzero coefficient limits $q_k$ and $V_n\to V\succ0$. Coefficient convergence gives uniform convergence on compact sets, which together with Assumption~\ref{ass:array} implies $\widetilde q_{nk}(Z_n)\Rightarrow q_k(G_V)$. Portmanteau bounds each limiting exact-zero probability by $P\{|q_k(G_V)|\le\eta\}$. Let $\eta\downarrow0$ and sum over at most $M_0$ atoms. A violating sequence admits the same argument, establishing uniformity.
\end{proof}

\begin{lemma}[Full-fiber continuity and inversion]\label{rev:inverse}
Fix a sign-to-label map $\lambda$, and let degree-$D$ polynomials $q_{nk}$ converge coefficientwise to nonzero $q_k$, $1\le k\le M$. Suppose $a_n\to a$, $\|a_n\|=\|a\|=1$, $W_n\to W\succ0$, and $z_n\to z$, where every $q_k(z)\ne0$ and their sign vector gives label $s$. Set
\begin{gather*}
 v_n=a_n'W_na_n,\quad \beta_n=W_na_n/v_n,\quad t_n=a_n'z_n,\\
 \mathcal K_n=\{u:\lambda((\operatorname{sgn}q_{nk}(z_n+\beta_n(u-t_n)))_k)=s\},\\
 \mathcal F_n(h)=
 \frac{\int_{\mathcal K_n\cap(-\infty,t_n]}\phi_{v_n}(u-h)\,du}
      {\int_{\mathcal K_n}\phi_{v_n}(u-h)\,du}.
\end{gather*}
Define $v,\beta,t,\mathcal K,\mathcal F$ by the limiting expressions. For every finite $H$,
\begin{equation}\label{rev:eq:local-cdf}
 \sup_{|h|\le H}|\mathcal F_n(h)-\mathcal F(h)|\to0.
\end{equation}
For each $u\in(0,1)$ the inverse equations $\mathcal F_n(h_n(u))=u$ and $\mathcal F(h(u))=u$ have finite unique solutions, eventually for $\mathcal F_n$, and $h_n(u)\to h(u)$. For fixed $0<\alpha<1$, the equal-tail interval lengths $h_n(\alpha/2)-h_n(1-\alpha/2)$ converge to the finite positive limit $h(\alpha/2)-h(1-\alpha/2)$.
\end{lemma}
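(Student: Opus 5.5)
The plan has four steps: describe the fibers through roots of univariate polynomials, show those roots converge, get uniform convergence of the truncated normal CDFs, and then invert using strict monotonicity in $h$.

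\emph{Step 1 (fiber description).} Restrict each defining polynomial to the conditioning line by setting $g_{nk}(u)=q_{nk}(z_n+\beta_n(u-t_n))$ and $g_k(u)=q_k(z+\beta(u-t))$. These are univariate polynomials of degree at most $D$, and their coefficients are continuous functions of $(q_{nk},z_n,\beta_n,t_n)$. Since $a_n\to a$ and $W_n\to W\succ0$, we have $v_n\to v>0$ and $\beta_n\to\beta$, so $g_{nk}\to g_k$ coefficientwise. Each $g_k$ is nonzero, because $g_k(t)=q_k(z)\ne0$.

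\emph{Step 2 (root convergence).} Let $Z$ be the finite zero set of $\prod_kg_k$. Fix a compact interval $I$ and $\epsilon>0$. Outside the $\epsilon$-neighbourhood of $Z$ on $I$, each $|g_k|$ is bounded below. Uniform convergence on $I$ then shows that $g_{nk}$ eventually has the same sign as $g_k$ there. Hence $\mathbf 1_{\mathcal K_n}\to\mathbf 1_{\mathcal K}$ outside $Z$ on $I$, that is, Lebesgue-almost everywhere on $I$. Tails beyond $I$ are negligible uniformly in $|h|\le H$ and $n$ because $v_n$ is bounded above. Dominated convergence, with continuity of $\phi_{v_n}(u-h)$ in $(v_n,h)$, gives uniform convergence on $|h|\le H$ of both the numerator and the denominator in $\mathcal F_n(h)$. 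Uniformity in $h$ follows from equicontinuity in $h$ together with compactness of $[-H,H]$, or directly from the $L^1$ bound
\[
\int|\mathbf 1_{\mathcal K_n}\phi_{v_n}(\cdot-h)-\mathbf 1_{\mathcal K}\phi_v(\cdot-h)|\,du\to0
\]
uniformly in $h$.

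\emph{Step 3 (positive mass and \eqref{rev:eq:local-cdf}).} Because every $q_k(z)\ne0$ and the signs give label $s$, $t$ lies in the interior of $\mathcal K$. By the same sign argument, a neighbourhood $(t-\delta,t+\delta)$ lies in $\mathcal K_n$ for large $n$. So the denominators are bounded below uniformly on $|h|\le H$, and \eqref{rev:eq:local-cdf} follows. The same neighbourhood also shows that $t_n$ is interior to $\mathcal K_n$, so $\mathcal F_n(h)\in(0,1)$.

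\emph{Step 4 (inversion).} For existence, uniqueness and limits, use the standard fact that a normal location family restricted to a fixed set has monotone likelihood ratio. Thus $h\mapsto\mathcal F(h)$ is continuous and strictly decreasing whenever $\mathcal K$ has positive mass on both sides of $t$, which the interior point guarantees. For the limits $\mathcal F(h)\to1$ as $h\to-\infty$ and $\mathcal F(h)\to0$ as $h\to\infty$, use that $\mathcal K\cap(-\infty,t]$ and $\mathcal K\cap(t,\infty)$ both have positive Lebesgue measure. Write $\mathcal K$ as a finite union of intervals, at most $MD+1$ of them. The ratio of Gaussian masses far to the right versus near $t$ then tends to zero, or to infinity, as $h\to\mp\infty$. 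This gives a unique finite $h(u)$, and the same argument applies to $\mathcal F_n$. For convergence $h_n(u)\to h(u)$, pick $h(u)\pm\epsilon$. Strict monotonicity gives $\mathcal F(h(u)-\epsilon)>u>\mathcal F(h(u)+\epsilon)$, and uniform convergence on this compact window forces $h_n(u)$ into $(h(u)-\epsilon,h(u)+\epsilon)$ eventually. The lengths converge to $h(\alpha/2)-h(1-\alpha/2)$, which is positive by strict monotonicity.

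The main obstacle is Step 4's tail limits. The intervals of $\mathcal K$ may be bounded, so $\mathcal F$ need not tend to $0$ or $1$ the way an untruncated CDF does. To handle this, compare the Gaussian mass on $\mathcal K\cap(t,t+\delta)$ with that on $\mathcal K\cap(-\infty,t]$ via the factor $e^{h(x-t)/v}$, which dominates as $h\to\infty$. Mirror the argument for $h\to-\infty$.
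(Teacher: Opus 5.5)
Your proposal is correct and follows the paper's proof essentially step for step. Both arguments use the same ingredients: sign stabilization away from the finitely many roots of the restricted limiting polynomials, dominated convergence with a Gaussian envelope uniform over $|h|\le H$, and a neighbourhood of $t$ inside the fibers that bounds the denominators and supplies mass on both sides of $t$. Your monotone-likelihood-ratio argument for strict monotonicity is equivalent to the paper's $\partial_h\mathcal F=v^{-1}\operatorname{Cov}_h(1\{U\le t\},U)<0$, and the exponential-tilting tail bounds and the $h(u)\pm\epsilon$ bracketing are the same. The one step you leave implicit is that the numerator's indicator $1\{u\le t_n\}$ converges to $1\{u\le t\}$ for $u\ne t$, because $t_n\to t$.
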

\begin{proof}
Each limiting polynomial restricted to the line, $u\mapsto q_k(z+\beta(u-t))$, is nonzero at $t$ and has finitely many roots. Away from their union, signs stabilize, giving $1_{\mathcal K_n}\to1_{\mathcal K}$ almost everywhere. Atom continuity gives neighborhoods of $t$ and eventually $t_n$ inside their fibers. With $1\{x\le t_n\}\to1\{x\le t\}$, an envelope $C_He^{-cx^2}$ dominates the supremum over $|h|\le H$ of each integrand difference. Dominated convergence yields uniform integral convergence. The positive continuous limiting denominator has a positive minimum on $[-H,H]$, so the ratios converge uniformly, proving \eqref{rev:eq:local-cdf}.

The neighborhood of $t$ inside $\mathcal K$ gives positive mass on both sides of $t$. Gaussian-envelope differentiation therefore gives $\partial_h\mathcal F=v^{-1}\operatorname{Cov}_h(1\{U\le t\},U)<0$. Bounded positive-measure subsets above $t+\delta$ and below $t-\delta$ give exponential-tilting bounds $\mathcal F(h)\le Ce^{-h\delta/v}$ as $h\to\infty$ and $1-\mathcal F(h)\to0$ as $h\to-\infty$. The same holds eventually for $\mathcal F_n$. Strict monotonicity and compact uniform convergence bracket $h_n(u)$ between $h(u)\pm\epsilon$, proving inverse convergence and positive limiting length. The integrals throughout the argument are over the union of all sign cells assigned to $s$.
\end{proof}

\subsection{Gaussian pivot}\label{app:proof-gaussian}
\begin{proof}[Proof of Lemma~\ref{lem:gaussian}]
For $T=c_s'Y$, $v=c_s'\Sigma c_s$, $b=\Sigma c_s/v$, and $R=Y-bT$, $\Cov(R,T)=0$ gives Gaussian independence. At $R=r$, selection is $T\in\mathcal T_s(r;\Sigma)$, giving the restricted normal density with CDF \eqref{eq:pivot} at $\theta=c_s'\mu$. Zero-mass fibers contribute zero probability. The probability integral transform gives conditional uniformity and coverage; integration over $R$ gives \eqref{eq:coverage}.

The restricted law is continuous, so its CDF at the observation lies
in $(0,1)$ almost surely. Both sides have positive Lebesgue measure; Gaussian-envelope
differentiation gives
\begin{equation}\label{eq:cdfderivative}
 \partial_\theta F_\theta(T;v,\mathcal T_s)
 =v^{-1}\operatorname{Cov}_\theta(\mathbf 1\{U\leq T\},U)<0.
\end{equation}
The exponential-tilting argument in the proof of Lemma~\ref{rev:inverse} applies to
this fixed measurable fiber and gives $F_\theta(T;v,\mathcal T_s)\to1$ as $\theta\to-\infty$ and $F_\theta(T;v,\mathcal T_s)\to0$ as $\theta\to\infty$, yielding finite unique endpoints. Under (S), nonzero atoms
at $Y$ place the observation in its fiber's interior almost surely.
Enumerating the restricted atoms' real roots and retaining every
accepted component proves computability.
\end{proof}

\subsection{Uniform inference with estimated covariance}\label{app:uniform-plugin-proof}
\begin{proof}[Proof of Theorem~\ref{thm:plugin}]
For a violation of \eqref{eq:pluginjoint}, the finite label set, bounded atom count, and finite set of sign-to-label maps let us fix a label $s$, atom count, and map $\lambda$ along a subsequence. The violating joint CDF errors are bounded away from zero and at most $P(E_{n,s})$, so pass to a subsequence with $P(E_{n,s})\to p>0$.
Put $r_n=\|L_n'c_{n,s}\|>0$ and $a_n=L_n'c_{n,s}/r_n$. Normalize the atoms as in \eqref{eq:normalized-comparison}. Compactness and Assumption~\ref{ass:array} give a further subsequence such that
\[
 \begin{aligned}
 a_n&\to a,\quad V_n\to V\succ0,\quad
 \widetilde q_{nk}\to q_k\text{ coefficientwise},\\
 (Z_n,\widehat V_n)&\Rightarrow(Z,V),\qquad Z\sim N_d(0,V).
 \end{aligned}
\]
Here every $q_k$ is nonzero. Covariance consistency and Slutsky's theorem give the joint convergence for the mean and covariance estimates computed from the same observations. Set
$E^*=\{z:\lambda((\operatorname{sgn}q_k(z))_k)=s\}$.
Lemma~\ref{rev:boundary} and local sign stability imply $P(Z\in E^*)=p$ and, in an almost surely convergent Skorokhod coupling, $1_{E_{n,s}}\to1_{E^*}(Z)$.

For positive definite $W$, write
\begin{equation}\label{eq:plugin-line-normalized}
 \mathcal K_n(z,W)=\left\{u:\lambda\left(
  \left(\operatorname{sgn}\widetilde q_{nk}
  \left[z+\frac{Wa_n}{a_n'Wa_n}(u-a_n'z)\right]\right)_k\right)=s\right\}.
\end{equation}
For a trial displacement $h$ from the true target in units of $r_n$, changing the original integration variable to $\theta_{n,s}+r_nu$ gives the pivot at $\theta_{n,s}+r_nh$ as
\[
 \frac{\int_{\mathcal K_n(Z_n,W)\cap(-\infty,a_n'Z_n]}
       \phi_{a_n'Wa_n}(u-h)\,du}
      {\int_{\mathcal K_n(Z_n,W)}\phi_{a_n'Wa_n}(u-h)\,du},
 \qquad W=\widehat V_n.
\]
At the true target $h=0$, Lemma~\ref{rev:inverse} gives pathwise convergence on $E^*$ to the $(Z,V,a)$ full-fiber pivot. By Lemma~\ref{lem:gaussian}, this limiting pivot has the uniform distribution on $[0,1]$ conditional on $Z\in E^*$. Elsewhere the selection indicator vanishes eventually. The polynomial-boundary event $\mathcal B_n$ and the event that the covariance used in the pivot is not positive definite have probabilities tending to zero uniformly, so $P(U_{n,s}\le u,E_{n,s})\to up$. A finite grid and monotonicity give uniformity in $u$, contradicting the violation.

Dividing the joint CDF error by $p_{n,s}>0$ gives conditional CDF error $D_n/p_{n,s}$. Evaluation at $1-\alpha/2$ and the left limit at $\alpha/2$ gives joint coverage error $2D_n$, allowing atoms. The exceptional convention $U=1/2$, $C=\R$ also satisfies the acceptance condition $\alpha/2\le U\le1-\alpha/2$ for every trial value. Dividing by $p_{n,s}\ge p_0$ or $\rho_n$, and summing over the exact partition, gives all conditional and marginal assertions.
\end{proof}

\subsection{Endpoint equivalence and interval length}\label{simv3:endpoints}\label{app:proof-endpoints}
\begin{proof}[Proof of Theorem~\ref{rev:endpoints}]
With $r_n,a_n$ as above,
\[
 \left|\widehat\sigma_{n,s}^2/\sigma_{n,s}^2-1\right|
 \le\|\widehat V_n-V_n\|/\underline v,\qquad
 \sqrt{\underline v}\,r_n\le\sigma_{n,s}\le\sqrt{\overline v}\,r_n.
\]
The reference and estimated scales are uniformly equivalent. For any putative violating sequence, a vanishing selection probability makes the joint violation probability vanish. Otherwise use the same normalization and coupling as in the preceding proof, with $P(E^*)>0$.
The normalized reference CDF uses $W=V_n$ in \eqref{eq:plugin-line-normalized}; the estimated CDF uses $W=\widehat V_n$. Lemma~\ref{rev:inverse}, applied to both on $E^*$, gives endpoint pairs converging to the same $(h_-,h_+)$ with $-\infty<h_-<h_+<\infty$. The endpoint differences divided by $r_n$ vanish, and $\sigma_{n,s}/r_n$ is bounded above and below by positive constants. Thus the endpoint differences divided by $\sigma_{n,s}$ vanish on the selected event. The polynomial-boundary event $\mathcal B_n$ and the event that the covariance used is not positive definite have probability $o(1)$, proving \eqref{rev:eq:endpoint-equivalence} by the subsequence argument.

For any violating sequence, thresholds $M_j\uparrow\infty$ contradict the finite limiting endpoints on $E^*$; thresholds $\eta_j\downarrow0$ contradict the positive limiting length. This proves tightness and the lower-length bound for both intervals. The absolute length difference is bounded by the sum of the absolute lower- and upper-endpoint differences, hence by $2\sigma_{n,s}A_{n,s}$. On reference lengths at least $\eta\sigma_{n,s}$, the absolute relative length error is therefore at most $2A_{n,s}/\eta$. Apply endpoint equivalence and then let $\eta\downarrow0$, using the lower-length bound, to obtain \eqref{rev:eq:length-ratio}. Summation gives selected-target statements, division by $p_{n,s}\ge p_0$ gives conditional statements, and scale equivalence permits $\widehat\sigma_{n,s}$.
\end{proof}

\subsection{Regional sampling and effective sample size}\label{supp:proof-regional}
\begin{proof}[Proof of Corollary~\ref{cor:regional}]
Write $\varepsilon_{nji}=X_{nji}-\mu_{nj}$ and reduce the moment exponent to $0<\delta\le2$, using Lyapunov's inequality if necessary. The resulting moment bound remains uniform. For a fixed block vector $a=(a_0',\ldots,a_J')'$, independence and the moment bound give
\begin{gather*}
 \sum_{j,i}\E|a_j'\varepsilon_{nji}/\sqrt{n_j}|^{2+\delta}
 \le M\sum_j\|a_j\|^{2+\delta}n_j^{-\delta/2}\to0,\\
 \operatorname{Var}(a'Z_n)=\sum_ja_j'\Gamma_{nj}a_j
 \ge\underline v\|a\|^2.
\end{gather*}
For any sequence with $V_n(P_n)\to V$, Lyapunov's CLT and the Cram\'er--Wold device give $Z_n\Rightarrow N_d(0,V)$. A violation of \eqref{eq:uniform-clt} admits such a subsequence by covariance compactness. Along this subsequence, weak convergence makes the distance to $N_d(0,V)$ vanish, while $d_{\mathrm{BL}}\{N_d(0,V_n),N_d(0,V)\}\le\E\|(V_n^{1/2}-V^{1/2})G\|\to0$, $G\sim N_d(0,I_d)$. The triangle inequality therefore makes the distance in \eqref{eq:uniform-clt} vanish, contradicting the assumed violation.

For covariance estimation put $p=1+\delta/2\in(1,2]$. The independent centered products
$\xi_{nji,kl}=\varepsilon_{nji,k}\varepsilon_{nji,l}-(\Gamma_{nj})_{kl}$ have uniformly bounded $p$th moments. Theorem~2 of \citet{vonBahrEsseen1965} gives
\[
 \sup_P\E_P\left|n_j^{-1}\sum_i\xi_{nji,kl}\right|^p
 \le C n_j^{1-p}\to0.
\]
Markov's inequality makes each covariance entry uniformly consistent. Subtracting the sample mean contributes $\bar\varepsilon_{nj}\bar\varepsilon_{nj}'$, whose operator norm has expectation at most $T\overline v/n_j$; the factor $n_j/(n_j-1)$ also tends to one. Finitely many entries and regions give \eqref{eq:uniform-cov}. The population eigenvalue lower bound and uniform covariance consistency imply that raising eigenvalues below the prescribed threshold $\varepsilon_n\downarrow0$ occurs with uniformly vanishing probability. All covariances here are computed from the same observations as $Y_n$.

\smallskip\noindent\textit{Interval-length and endpoint rates.}\label{supp:proof-rate}
The contrast has block $w$ in region 0, block $-w$ in region $s$, and zero elsewhere. Independence across regions gives
$\sigma_{n,s}^2=w'\Gamma_{n0}w/n_0+w'\Gamma_{ns}w/n_s$ and
\[
 \sqrt{\underline v}\,\|w\|(n^{\mathrm{eff}}_{n,s})^{-1/2}
 \le\sigma_{n,s}\le
 \sqrt{\overline v}\,\|w\|(n^{\mathrm{eff}}_{n,s})^{-1/2}.
\]
The preceding CLT and covariance-consistency arguments verify Assumption~\ref{ass:array}. Since $w\ne0$, Theorem~\ref{rev:endpoints} applies. Its uniform tightness of $R_{n,s}$, together with $|\widehat C_{n,s}|\le2\sigma_{n,s}R_{n,s}$, proves the upper interval-length order and the endpoint order. Its uniform lower-length bound gives the corresponding tightness bound for $\sigma_{n,s}/|\widehat C_{n,s}|$, the reciprocal of interval length measured in standard-error units, on the joint selected events. In each case summing joint selected-event probabilities over finitely many $s$ proves the assertion at $S_n$.
Moreover $\max_s\sigma_{n,s}^2\le2\overline v\|w\|^2/\min_jn_j\to0$. For any $M$, the probability that a selected endpoint error exceeds $\epsilon$ is at most $P(R_{n,S_n}>M)+1\{M\max_s\sigma_{n,s}>\epsilon\}$. Let $n\to\infty$ and then $M\to\infty$ to obtain \eqref{rev:eq:regional-shrink}. Conditional assertions follow by division by $p_0$. Under parallel trends for every selectable candidate, $\Delta_{n,s}=0$ in $\theta_{n,s}=\tau_n+\Delta_{n,s}$, so $\theta_{n,s}=\tau_n$ for the reported comparison.
\end{proof}



\subsection{The three-period selection example}\label{app:selection-example}\label{cmpv2:did-counterexample}\label{cmpv2:correct-pivot}
Let the variables and statistics be those of the three-period example in Section~\ref{cmpv2:failure}. To realize them as regional outcomes, take five independent standard normals $A_{-1},A_0,A_1,L_1,L_2$, independent of $(V_1,V_2,R_1,R_2)$, and, for $j=1,2$, set
\[
\begin{aligned}
 (Y_{0,-1},Y_{0,0},Y_{0,1})&=(A_{-1},A_0,A_1+\tau),\\
 (Y_{j,-1},Y_{j,0},Y_{j,1})
 &=(A_{-1}-L_j-V_j,\ A_0-L_j,\ A_1-L_j-V_j/2-R_j).
\end{aligned}
\]
The untreated mean is zero in every group and period, and treatment adds $\tau$ only to group 0 after period 0. Both comparisons therefore satisfy parallel trends and identify the same ATT. Their centered squared-fit scores are $V_j^2/2$, and their post-treatment gaps minus the two-period pre-treatment average equal $T_j=\tau+R_j$. The covariance of $(V_1,V_2,R_1,R_2)$ has eigenvalues $1-\rho$ and $1+\rho$, each twice. The displayed transformation is invertible: recover $(A_{-1},A_0,A_1)$ from the treated outcomes after subtracting the fixed treatment shift, $L_j$ from the period-0 gaps, $V_j$ from the pre-treatment gap changes, and $R_j$ from DiD. Thus the outcome covariance is positive definite for $0<\rho<1$.

Each fixed $R_j$ is standard normal. Write $c_\alpha=z_{1-\alpha/2}$ and let $\phi,\Phi$ be the standard normal density and CDF. If $X=\max(|Z_1|,|Z_2|)$, its density on $x>0$ is $4\phi(x)\{2\Phi(x)-1\}$. Conditional on the two pre-treatment errors, the selected $R_S$ has mean $\rho V_{3-S}$ and variance $1-\rho^2$. Here $V_{3-S}$ is the other candidate's pre-treatment error and has the larger absolute value, by the rule $S=\arg\min_jV_j^2$. Symmetry gives the selected conventional coverage
\begin{equation}\label{cmpv2:quadrature}
 \int_0^\infty4\phi(x)\{2\Phi(x)-1\}
 \left\{\Phi\!\left(\frac{c_\alpha-\rho x}{\sqrt{1-\rho^2}}\right)
       -\Phi\!\left(\frac{-c_\alpha-\rho x}{\sqrt{1-\rho^2}}\right)\right\}dx.
\end{equation}
At $\rho=.9$ and $\alpha=.05$ this is $0.9061725185$. As $\rho\uparrow1$, $|R_S|\to X$ almost surely, giving limiting coverage $(1-\alpha)^2$.
For the selective calculation with candidate 1, the Gaussian residual fixes $V_1$ and $r=V_2-\rho T_1$. For trial values $T_1=t$, solving the selection inequality $|V_1|\le|r+\rho t|$ gives the full fiber
\[
 \left(-\infty,\frac{-|V_1|-r}{\rho}\right]
 \ \cup\ \left[\frac{|V_1|-r}{\rho},\infty\right).
\]
Lemma~\ref{lem:gaussian} gives exact $1-\alpha$ coverage conditional on this residual and candidate 1; symmetry gives the same for candidate 2. Integrating over residuals and summing over the two selection events proves exact selected-ATT coverage.
If the reported candidate must also pass $|V_j|\le c$, this own-candidate
pretest depends only on its fixed Gaussian residual coordinate. The same
full fiber therefore gives exact coverage conditional on selection and
pretest acceptance. Supplemental Proposition~\SuppRef{prop:pool_pretest_nontransfer}
compares this calculation with conditioning only on the fixed candidate's
pretest, using this three-period construction throughout.


\subsection{Staggered adoption and joint reports}\label{sd:appendix}\label{app:staggered-core}\label{supp:proof-staggered-sampling}
\begin{proof}[Proof of Corollary~\ref{sd:theorem}]
Collect the nonzero defining polynomials of all cohort--period rules into one finite list and apply their predetermined tie priorities. The full path is a deterministic function of this sign vector. A declared coarsening map merges full paths into reported labels; composing the path with this map preserves (S), with fixed complexity. Equality of report coefficient vectors on each coarsened label supplies a deterministic nonzero contrast $c_{n,\ell,h}$ for every report. Theorem~\ref{thm:plugin} and Theorem~\ref{rev:endpoints} therefore apply to the full union of sign cells assigned to $h$, simultaneously over finitely many $(\ell,h)$. Their maximal joint pivot error is $D_n^{\mathrm{sd}}=D_n\to0$. For reportwise levels $\alpha_\ell$ summing to at most $\alpha$, the left-limit CDF calculation in the proof of Theorem~\ref{thm:plugin} gives
\[
 P\{c_{n,\ell,h}^{\prime}\mu_n\notin C_{n,\ell,h},E_{n,h}\}
 \le\alpha_\ell P(E_{n,h})+2D_n^{\mathrm{sd}}.
\]
The union bound over $q$ reports and summation over $h$ give simultaneous failure probability at most $\alpha+2q|\mathcal H|D_n^{\mathrm{sd}}$.
Consistency, eligibility, and the anticipation restriction give the cell identity
$d_{n,ak}'\mu_n=\tau_{n,g,t}+\Delta_{n,ak}$ by adding and subtracting the treated cohort's untreated mean. Aggregation gives the reported ATT plus its weighted parallel-trends discrepancy.

Let $\mathcal I_n=\{\exists\ell:\sum_aw_{\ell,a}\Delta_{n,a,S_{n,a}}\ne0\}$. Outside $\mathcal I_n$, every reported contrast equals its ATT. Thus simultaneous ATT failure is contained in the union of statistical-target failure and $\mathcal I_n$. Its probability is bounded by $\alpha+2q|\mathcal H|D_n^{\mathrm{sd}}+\eta_n$.

Intersecting that inclusion with $E_{n,h}$ gives joint failure at most $\alpha p_{n,h}+2qD_n^{\mathrm{sd}}+P(\mathcal I_n\cap E_{n,h})\le\alpha p_{n,h}+2qD_n^{\mathrm{sd}}+\eta_{n,h}$; division by $p_{n,h}$ proves \eqref{sd:conditional-causal-bound}. For a single report the coverage indicators for the statistical target and ATT differ only on $\mathcal I_n$, giving the stated absolute error. For each $h$, the threshold condition follows by dividing by $p_{n,h}\ge\rho_n$; uniformity over labels uses $\max_h\eta_{n,h}$.

Regional panels satisfy Assumption~\ref{ass:array} by the proof of Corollary~\ref{cor:regional}. Cohort means and overlapping comparisons are deterministic transformations of the vector containing each region--period mean once, so their reuse changes only polynomial and contrast coefficients. For independent repeated cross-sections, apply the same sampling proof with each region--period cell as a scalar block; the lower cell-variance bound, uniform $2+\delta$ moments, and diverging minimum cell size give the joint normal approximation in \eqref{eq:uniform-clt} and covariance consistency in \eqref{eq:uniform-cov}. The causal interpretation uses the regional populations specified by the cohort effects.
\end{proof}


\subsection{Joint orthogonality and the cross-cohort example}\label{supp:sd-restored}
\begin{proof}[Proof of Proposition~\ref{sd:orthogonality}]
Joint Gaussianity makes $c'Y$ independent of $BY$ exactly when $B\Sigma c=0$, including singular $\Var(BY)$. For $b=\Sigma c/(c'\Sigma c)$, this condition gives $Bb=0$ and $B(r+bt)=Br$. A label determined by $BY$ is therefore constant along the conditional line. The conditional fiber for the observed label is $\R$, giving the conventional Gaussian pivot and interval.

In the example let $U_k=(Y_{L,3}-Y_{L,2})-(Y_{k,3}-Y_{k,2})$, $k=A,B$. The discrepancy vector $(U_A,U_B)$ has variances four and covariance two, and the selection scores are $U_k^2/2$. Hence $D=(U_A-U_B)/2$ and $H=(U_A+U_B)/\sqrt{12}$ are independent standard normals, and $S_L=A$ iff $DH\le0$, apart from a null tie. Direct covariance calculation gives $\Cov(W,D)=\sqrt3/4$, $\Cov(W,H)=1/4$, and the Gaussian projection
\[
 W=(\sqrt3/4)D+(1/4)H+\xi,\qquad
 \xi\perp(D,H),\quad \xi\sim N(0,3/4).
\]
Each sign event has probability $1/2$. On either, the conditional second moments of $D,H$ are one. Their conditional product moment is $-2/\pi$ on $DH\le0$ and $2/\pi$ on $DH>0$. Squaring the projection yields $\E(W^2\mid DH\le0)=1-\sqrt3/(4\pi)$ and the opposite sign gives $1+\sqrt3/(4\pi)$, proving \eqref{sd:counterexample-moments}. The early contrast is marginally Gaussian with variance three, so $W$ has marginal variance one.
\end{proof}


\section{Joint mean--covariance proofs and probability stabilization}\label{ho:stable-section}\label{ho:proofs}\label{ho:lemmas}

Under Assumption~\ref{ho:assumption}, constants depend only on the fixed complexity, moment and covariance bounds, and declared tuning and regularization. Tuning is fixed except in Section~\ref{ho:restored-vanishing}. In all expansions $\vartheta=a'm$.

\subsection{Uniform derivatives and sampling expansion}
\begin{lemma}[Support and derivative bounds]\label{ho:jet}\label{ho:derivatives}
For each fixed derivative order, $r_q$ has uniformly bounded derivatives over every nonzero polynomial of degree at most $D$. The weight in \eqref{ho:weight} is smooth, lies in $[0,1]$, and satisfies $\sup_x\|D^rw_s(x)\|\le C_rh^{-r}$. Its derivatives vanish outside the open sign cells assigned to label $s$ and on atom boundaries. For $0\le j\le6$ and $0\le l\le2$,
\begin{equation}\label{ho:derivativebound}
 \|D_x^jD_W^lg_s(x,W;\vartheta)\|
              \le B\{1+|a'x-\vartheta|^l\}.
\end{equation}
The same bound holds for any specified finite set of orders, with a corresponding constant.
\end{lemma}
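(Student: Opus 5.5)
The plan is to prove the three assertions in order: the jet normalization $r_q$, then the weight $w_s$, then $g_s$. Each step uses only uniform bounds from the previous one together with the fixed tuning $(h,\eta,\tau,\gamma,\alpha_0)$ and the regularization map $\mathcal R$.

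\emph{Step 1: the normalized polynomial $r_q$.} Let $\mathbf j(x)=(\partial^\beta q(x))_{|\beta|\le D}$ denote the jet of $q$, so that $J_Dq=\|\mathbf j\|$ and $J_Dq>0$. Every first derivative $\partial_i\partial^\beta q$ is either another jet component (when $|\beta|<D$) or zero (when $|\beta|=D$, since $\deg q\le D$). Hence $\partial_i\mathbf j=K_i\mathbf j$ for fixed matrices $K_i$ of norm at most one.

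The normalized jet $\mathbf n=\mathbf j/J_Dq$ lies on the unit sphere. From the previous identity,
\[
 \partial_i J_Dq=J_Dq\,\mathbf n'K_i\mathbf n,
 \qquad
 \partial_i\mathbf n=K_i\mathbf n-\mathbf n(\mathbf n'K_i\mathbf n).
\]
By induction, every derivative of $\mathbf n$ is a fixed polynomial in $\mathbf n$. Its coefficients depend only on $d$ and $D$, not on $q$. Since $r_q$ is one coordinate of $\mathbf n$, all derivatives of $r_q$ are bounded uniformly over nonzero polynomials of degree at most $D$.

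\emph{Step 2: the weight $w_s$.} The cutoff $\chi$ is smooth, has bounded derivatives of every order, and vanishes together with all its derivatives on $(-\infty,0]$; the flatness of $e^{-1/u}$ at $0$ gives this.

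By the chain rule, Step 1 and the Fa\`a di Bruno formula, each factor $\chi\{\sigma_kr_{q_k}(x)/h\}$ has $r$th derivative bounded by $C_rh^{-r}$ for $h\in(0,1)$. The product rule and the sum over at most $2^{M_0}$ sign vectors then give $\sup_x\|D^rw_s(x)\|\le C_rh^{-r}$.

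Let $x$ lie outside every open sign cell assigned to $s$, or on an atom boundary. Then each product contains a factor whose argument $\sigma_kr_{q_k}(x)/h$ is at most $0$. Every term of the Leibniz expansion of any derivative contains a derivative of that factor, evaluated where $\chi$ is flat, so it vanishes. Distinct sign vectors give disjoint open cells, so at most one product is nonzero at any $x$. That product is a product of numbers in $[0,1]$, hence $w_s\in[0,1]$.

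\emph{Step 3: the adjustment $g_s$.} Replace $W$ by $\mathcal R(W)$. Then $v$, $b$ and $\sqrt v$ are smooth in $W$, with all derivatives bounded and $v$ bounded away from $0$ and $\infty$. Write
\[
 x_r(u)=x-b\,a'x+b\vartheta+b\sqrt v\,u .
\]
Its $x$-derivatives are bounded and contain no $u$ or $a'x-\vartheta$ factors. Each $W$-derivative falls on $b$ or $\sqrt v$ and contributes one factor that is at most linear in $(a'x-\vartheta)$ and $u$. The same holds for $z=(a'x-\vartheta)/\sqrt v$: each $W$-derivative contributes at most one factor $|a'x-\vartheta|$, and $x$-derivatives contribute none.

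Differentiating under the integral against $\phi(u)$, using Step 2 and the finiteness of Gaussian moments, bounds the $(j,l)$ derivatives of $\mathfrak m$ and of the root integrals in \eqref{ho:roots} by $B(1+|a'x-\vartheta|^l)$.

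\emph{Main obstacle: the roots $k_L,k_U$.} The difficulty is bounding their derivatives uniformly. The cutoff $\chi((\mathfrak m-\eta)/\eta)$ vanishes with all derivatives when $\mathfrak m\le\eta$, so only the region $\mathfrak m>\eta$ matters. There I would first show $|k_L|,|k_U|\le K$ for a constant $K$ depending only on $(\eta,\tau,\alpha_0)$. The argument is that for $|k|$ large, $\ell((u-k)/\tau)$ is uniformly near $0$ or $1$ on a set carrying all but $\eta\alpha_0/4$ of the $\phi$-mass. The left side of \eqref{ho:roots} then differs from $0$ or from $\mathfrak m$ by less than the required $(\alpha/2)\mathfrak m\ge\alpha_0\eta/2$, so the root cannot lie there.

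With $|k|\le K$ and $|u|\le U_0$ on a set of weighted mass at least $\eta/2$, the $k$-derivative of the left side of \eqref{ho:roots} is at most $-c<0$, because $\ell'$ is bounded below on compacts. The implicit function theorem, applied inductively, then expresses every derivative of $k_L,k_U$ through the derivatives bounded above, divided by powers of $c$. This preserves the $B(1+|a'x-\vartheta|^l)$ form, since each $W$-derivative still introduces at most one linear factor.

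Finally, $g_s$ is a product of $\chi(\cdot)$, $w_s(x)$ and $L_\alpha-\alpha$. The last is a bounded smooth function of $z$, $k_L$ and $k_U$. The Leibniz rule gives \eqref{ho:derivativebound} for $0\le j\le6$ and $0\le l\le2$, and the same argument yields the bound for any specified finite set of orders.
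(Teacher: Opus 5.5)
Your proposal is correct and follows the paper's proof essentially step by step. The shared steps are: the shift-matrix identity for the jet; flatness of $\chi$ together with disjointness of the sign cells; linear growth of covariance derivatives along the line $x_r(u)$ under $\mathcal R$; a compact range for $k_L,k_U$ from Gaussian tail bounds when $\mathfrak m>\eta$; a $k$-derivative of the root integral bounded away from zero; repeated implicit differentiation; and the flat mass cutoff. The only difference is in bounding the derivatives of $r_q$: you use the polynomial ODE satisfied by the normalized jet on the unit sphere, whereas the paper uses the homogeneity bound $\|D^rF(v)\|\le C_r\|v\|^{-r}$ for $F(v)=v_0/\|v\|$, and both rest on the same identity $\partial_i\mathcal J_Dq=A_i\mathcal J_Dq$.
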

\begin{proof}
The nonvanishing jet $\mathcal J_Dq=(\partial^\beta q)_{|\beta|\le D}$ satisfies $\partial_j\mathcal J_Dq=A_j\mathcal J_Dq$ for fixed shift matrices. Homogeneity of $F(v)=v_0/\|v\|$ gives $\|D^rF(v)\|\le C_r\|v\|^{-r}$. Each differentiated jet in the chain rule is bounded by a constant times $\|\mathcal J_Dq\|$, cancelling the corresponding inverse power from the derivatives of $F$, uniformly in coefficients. At most one sign-word product is positive. Since $\chi$ and all its derivatives vanish at zero, the weight and support assertions follow.

The line $x_r(u)=x+b(\vartheta+\sqrt v\,u-a'x)$ has bounded $x$ derivatives. Each covariance derivative supplies at most $C(1+|a'x-\vartheta|+|u|)$; $\mathcal R$ bounds other factors and Gaussian integration controls powers of $u$. For $\mathfrak m>\eta$, the normalized density $w_s(x_r(u))\phi(u)/\mathfrak m$ is bounded by $\phi(u)/\eta$. Choose $K_0$ with $P(|N_0|>K_0)<\eta\alpha_0/8$, where $N_0\sim N(0,1)$. The logistic integrals divided by $\mathfrak m$ approach zero and one uniformly as $k\to+\infty$ and $k\to-\infty$, respectively, placing both roots in a common compact interval. On $|u|\le K_0$, the weighted normal mass exceeds $\eta/2$, since the mass outside this interval is bounded by the normal tail. Over this interval and the compact range of roots, $\tau^{-1}\ell((u-k)/\tau)\{1-\ell((u-k)/\tau)\}$ has a positive minimum. The absolute derivative of the left-hand side of \eqref{ho:roots} with respect to $k$ is therefore bounded away from zero; repeated implicit differentiation gives \eqref{ho:derivativebound}, with covariance order $l$ contributing at most $l$ powers of $|a'x-\vartheta|$. The flat mass cutoff joins smoothly to zero.
\end{proof}

\begin{lemma}[Sample moments]\label{ho:momentlemma}
With $Z=n^{-1/2}\sum_i\varepsilon_i$ and $H=n^{-1/2}\sum_i(\varepsilon_i\varepsilon_i'-V)$,
\begin{align}\label{ho:samplebounds}
 &\E\|Z\|^r\le C\ (2\le r\le8),\quad
 \E\|\widehat V-V\|^4\le Cn^{-2},\quad
 \E\|\widehat\kappa-\kappa\|^2\le Cn^{-1},\\
 &\widehat V-V=n^{-1/2}H+R_n,\quad
 R_n=\frac{V+H/\sqrt n-ZZ'}{n-1},\quad
 \E\|R_n\|^2\le Cn^{-2}.\nonumber
\end{align}
\end{lemma}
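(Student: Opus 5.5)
The plan is to work entirely with iid centered vectors $\varepsilon_i$ satisfying $\E\|\varepsilon_i\|^8\le M$ and $V\preceq M^{1/4}I_d$. Each bound in \eqref{ho:samplebounds} should follow from a standard moment inequality for sums of independent mean-zero terms, namely Rosenthal's (or Marcinkiewicz--Zygmund) inequality applied coordinatewise. The dimension is fixed, so norms can be compared with finite sums of coordinates at the cost of a constant. In that case every constant depends only on $d$ and $M$.

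For $Z$, Rosenthal with $r\le8$ gives
\[
\E|Z_a|^r\le C\bigl\{(\E\varepsilon_{ia}^2)^{r/2}+n^{1-r/2}\E|\varepsilon_{ia}|^r\bigr\}\le C.
\]
Lyapunov's inequality handles the intermediate exponents.

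For $H$, the summands $\varepsilon_i\varepsilon_i'-V$ are centered and have bounded fourth moments, since $\E\|\varepsilon_i\|^8\le M$. Rosenthal with $r=4$ therefore gives $\E\|H\|^4\le C$.

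Next comes the algebraic identity. Write $\sum_i(X_i-\bar X)(X_i-\bar X)'=\sum_i\varepsilon_i\varepsilon_i'-ZZ'$, so that
\[
\widehat V=\frac{nV+\sqrt nH-ZZ'}{n-1}.
\]
Subtracting $V$ and regrouping gives
\[
\widehat V-V=n^{-1/2}H+\frac{V+H/\sqrt n-ZZ'}{n-1}.
\]
To see this, note that $\sqrt nH/(n-1)-H/\sqrt n=H/\{\sqrt n(n-1)\}$ and $nV/(n-1)-V=V/(n-1)$. This is exactly the stated form of $R_n$.

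The bound $\E\|R_n\|^2\le Cn^{-2}$ then follows from $\E\|ZZ'\|^2\le\E\|Z\|^4\le C$, $\E\|H\|^2\le C$ and boundedness of $V$. The fourth-moment bound on $\widehat V-V$ uses $\E\|n^{-1/2}H\|^4\le Cn^{-2}$ together with $\E\|R_n\|^4\le Cn^{-4}$. The latter bound is available because $\E\|Z\|^8\le C$ controls $\E\|ZZ'\|^4$.

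The third-moment estimate is the only step needing care. Two facts are used. First, $\Pi_K$ is a projection onto a closed convex set. Second, $\|\kappa\|\le\E\|\varepsilon_i\|^3\le M^{3/8}\le K$, so $\kappa$ lies in the ball. Nonexpansiveness then gives
\[
\|\widehat\kappa-\kappa\|\le\|\tilde\kappa-\kappa\|,
\]
where $\tilde\kappa=n^{-1}\sum_i(X_i-\bar X)^{\otimes3}$.

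Expand $\tilde\kappa$ around the centered errors:
\[
\tilde\kappa=n^{-1}\sum_i\varepsilon_i^{\otimes3}-3\,\mathrm{sym}\Bigl\{\bar\varepsilon\otimes n^{-1}\sum_i\varepsilon_i^{\otimes2}\Bigr\}+2\bar\varepsilon^{\otimes3},
\]
with $\bar\varepsilon=n^{-1/2}Z$. Each piece is handled separately.

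The leading average has mean $\kappa$. Its summands have second moments bounded by $\E\|\varepsilon_i\|^6\le C$, so its mean-square error is $O(n^{-1})$.

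The cross term is bounded using Cauchy--Schwarz together with $\E\|\bar\varepsilon\|^4\le Cn^{-2}$ and $\E\|n^{-1}\sum_i\varepsilon_i^{\otimes2}\|^4\le C$. The latter holds because $\E\|\varepsilon_i\|^8$ is bounded. This gives a mean-square contribution of $O(n^{-1})$.

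The cubic term contributes $\E\|\bar\varepsilon\|^6\le Cn^{-3}$.

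The main obstacle I expect is purely bookkeeping: ensuring that eighth moments suffice for every product appearing in the Cauchy--Schwarz steps. In particular, squaring the cross term needs fourth moments of the second-moment average, and that is exactly eighth moments of $\varepsilon_i$, so the assumption is just sufficient.
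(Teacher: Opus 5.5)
Your proposal is correct and follows the paper's proof essentially step by step. Both use the unbiased-covariance identity to obtain $R_n$, nonexpansiveness of $\Pi_K$ together with $\|\kappa\|\le M^{3/8}\le K$, and the same split of the centered third tensor into the raw average, the one-sample-mean terms (Cauchy--Schwarz with fourth moments of the empirical second moment) and the cubic sample-mean term. The only cosmetic difference is that you bound the moments of $Z$ and $H$ with Rosenthal's inequality, whereas the paper expands the eighth moment of the centered sum directly and then interpolates.
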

\begin{proof}
Expanding the eighth moment of a centered independent sum leaves only indices occurring at least twice. Division by $n^4$ bounds $Z$, and interpolation gives intermediate powers. For the fourth moment of $H$, use $\E\|\varepsilon\varepsilon'-V\|^4\le C$. The unbiased covariance identity gives $R_n$ and both covariance bounds. In the centered empirical third tensor, the raw average $n^{-1}\sum_i\varepsilon_i^{\otimes3}$ has mean-square error $C/n$. Terms with one sample-mean factor obey the same bound by Cauchy--Schwarz and fourth moments of the empirical second moment and mean; terms with further sample-mean factors use the sixth moment of the sample mean. Projection cannot increase distance to $\kappa$, since $\|\kappa\|\le M^{3/8}\le K$.
\end{proof}

\begin{lemma}[Smooth mean and joint covariance expansions]\label{ho:meanlemma}\label{ho:jointlemma}
Let $G_0\sim N_d(0,V)$, and let $:$ denote tensor contraction over all indices. For a function $f$ whose first six derivatives are bounded,
\begin{equation}\label{ho:meanexpansion}
 \E f(m+Z)=\E f(m+G_0)
       +\frac1{6\sqrt n}\kappa:\E D_x^3f(m+G_0)+O(n^{-1}).
\end{equation}
For $C^3$ functions with $\|D^jf(x)\|\le B(1+\|x-m\|)$, $j=1,2,3$, the Gaussian expectation comparison error is $O(n^{-1/2})$. For $g$ satisfying \eqref{ho:derivativebound},
\begin{equation}\label{ho:jointexp}
 \E g(m+Z,\widehat V)=\E g(m+G_0,V)
       +n^{-1/2}\E\mathcal A_\kappa g(m+G_0,V)+O(n^{-1}).
\end{equation}
All remainders are uniform in $m$ and in the maintained class.
\end{lemma}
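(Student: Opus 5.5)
We prove \eqref{ho:meanexpansion} by a one-term Edgeworth-type argument via Lindeberg replacement (a Stein swapping scheme). This avoids smoothness assumptions on the law of $\varepsilon_i$, so discrete observations are allowed; all the smoothness comes from $f$. Write $Z=n^{-1/2}\sum_i\varepsilon_i$ and let $G_0=n^{-1/2}\sum_i\zeta_i$ with $\zeta_i\sim N_d(0,V)$ iid and independent of the data. Consider the hybrid sums $W_i=n^{-1/2}(\sum_{k<i}\varepsilon_k+\sum_{k>i}\zeta_k)$ and telescope:
\[
\E f(m+Z)-\E f(m+G_0)=\sum_i\big[\E f(m+W_i+n^{-1/2}\varepsilon_i)-\E f(m+W_i+n^{-1/2}\zeta_i)\big].
\]
Taylor-expand each bracket to fourth order around $m+W_i$. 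The first two orders cancel because the moments match. The third order contributes $n^{-3/2}\tfrac16\kappa:\E D^3f(m+W_i)$, since the Gaussian third moment vanishes. The fourth-order remainder is $O(n^{-2})$ by the eighth (hence fourth) moment bound, which sums to $O(n^{-1})$.

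It remains to replace $\E D^3f(m+W_i)$ by $\E D^3f(m+G_0)$ with error $O(n^{-1/2})$. This is a first-order comparison for a function with bounded derivatives, handled by the same swapping argument one order lower. It is also exactly the stated $O(n^{-1/2})$ Gaussian comparison for $C^3$ functions with linear-growth derivatives; there the swap uses third-order Taylor expansion, and linear growth is controlled by $\E\|W_i\|^r\le C$ from Lemma~\ref{ho:momentlemma}. Summing gives the $\kappa$ term with total error $O(n^{-1})$. All constants depend only on $M$, $\underline v$ and the derivative bounds, so the remainder is uniform in $m$.

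For \eqref{ho:jointexp}, expand in the covariance argument. Lemma~\ref{ho:momentlemma} gives $\widehat V=V+n^{-1/2}H+R_n$, and Taylor expansion yields
\[
g(m+Z,\widehat V)=g(m+Z,V)+n^{-1/2}D_Wg(m+Z,V)[H]+\rho_n .
\]
Here $\rho_n$ collects $D_Wg[R_n]$ and the second-order $W$ term. Using \eqref{ho:derivativebound} with $l\le2$, $\E\|\widehat V-V\|^4\le Cn^{-2}$, $\E\|R_n\|^2\le Cn^{-2}$ and Cauchy--Schwarz against $1+|a'x-\vartheta|^2$ (finite moments of $Z$), we get $\E\rho_n=O(n^{-1})$. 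The first term is handled by \eqref{ho:meanexpansion} applied to $f=g(\cdot,V)$, whose six derivatives are bounded by Lemma~\ref{ho:jet} with $l=0$; this gives the $\tfrac16\kappa:D_x^3$ part of $\mathcal A_\kappa$.

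The main obstacle is the mixed term $n^{-1/2}\E D_Wg(m+Z,V)[H]$. Its mean is not zero because $Z$ and $H$ are dependent. Write
\[
n^{-1/2}\E\,\psi(Z)[H]=n^{-1}\sum_i\E\,\psi(Z)[\varepsilon_i\varepsilon_i'-V],\qquad \psi=D_Wg(m+\cdot,V).
\]
Then leave observation $i$ out: $Z=Z_{(i)}+n^{-1/2}\varepsilon_i$, and expand $\psi$ to second order in $n^{-1/2}\varepsilon_i$. The zeroth-order term vanishes by independence and centering. The first-order term gives $n^{-1/2}\E D_x\psi(Z_{(i)})[\varepsilon_i,\varepsilon_i\varepsilon_i'-V]$, which equals $n^{-1/2}\kappa$ contracted against $D_xD_Wg$, that is, the $D_xD_W[e_a,E^{bc}]$ term after symmetrization. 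The second-order remainder is $O(n^{-1})$ by the moment bounds; this is where $\E\|\varepsilon\|^8$ is used against the linear growth of $D_W$ derivatives. Finally, replace $Z_{(i)}$ by $G_0$ in $\E D_xD_Wg$ with error $O(n^{-1/2})$, using the stated first-order Gaussian comparison for functions with $|a'x-\vartheta|$ growth. This makes the total mixed contribution $n^{-1/2}\E\,\kappa:D_xD_Wg(m+G_0,V)+O(n^{-1})$. Care is needed so that $\mathcal R$-composed $W$-derivatives are evaluated at $V$, where $\mathcal R(V)=V$, and so that every constant is uniform in $m$, the atoms and the contrasts; this uniformity follows because \eqref{ho:derivativebound} is uniform.
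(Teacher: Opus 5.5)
Your proposal is correct and follows essentially the same route as the paper. The mean expansion uses one-at-a-time Gaussian replacement with a fourth-order Taylor expansion, where the third-order terms give $\kappa:D^3f/(6n^{3/2})$, followed by a lower-order replacement of $D^3f$. The covariance step uses a Taylor expansion with Cauchy--Schwarz control of the quadratic and $R_n$ remainders, and a leave-one-out expansion of $n^{-1}\sum_i\E D_Wg(m+Z,V)[\varepsilon_i\varepsilon_i'-V]$ that produces $\kappa:D_xD_Wg$. The one detail the paper makes explicit and you leave implicit is that your hybrid sum $W_i$ omits one increment. Passing to $G_0$ therefore also requires restoring $n^{-1/2}\zeta_i$, which costs at most $O(n^{-1/2})$ by the Lipschitz property of $D^3f$ and is harmless.
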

\begin{proof}
Replace $\varepsilon_i/\sqrt n$ by independent $G_i/\sqrt n$, $G_i\sim N_d(0,V)$, one at a time. At each replacement, let $S_i$ be the mixed sum, including $m$, with the $i$th increment omitted, and let $u_i$ denote the increment being added. Both increments have mean zero and covariance $V/n$, so the first two Taylor terms cancel. Fourth-order remainders sum to $C/n$, and the third terms are $\kappa:D^3f(S_i)/(6n^{3/2})$. Replacement applied to $D^3f$ changes each expectation by $C/\sqrt n$; restoring the omitted Gaussian increment costs $C/n$, proving \eqref{ho:meanexpansion}. For weighted $C^3$ comparison, the increment remainder is bounded by
$C\{(1+\|S_i-m\|)\|u_i\|^3+\|u_i\|^4\}$.
Independence and the bounded second moment of $S_i-m$ give total error $C/\sqrt n$. The derivative bounds also ensure integrability by integration along a segment from $m$.

Expand in $W$ at $V$. Cauchy--Schwarz and \eqref{ho:samplebounds} bound the quadratic remainder $C\E[\|\widehat V-V\|^2(1+|a'Z|^2)]$ by $C/n$. Replacing $\widehat V-V$ by $n^{-1}\sum_iB_i$, $B_i=\varepsilon_i\varepsilon_i'-V$, costs $C/n$. Independence, with $Z_{-i}=Z-\varepsilon_i/\sqrt n$, gives
\begin{align*}
 \E D_Wg(m+Z,V)[B_i]
 &=n^{-1/2}\E D_xD_Wg(m+Z_{-i},V)[\varepsilon_i,B_i]+O(n^{-1}).
\end{align*}
Independence of $Z_{-i}$ and $B_i$, together with $\E B_i=0$, removes the zeroth term. The remainder is bounded by
$Cn^{-1}\E[\|B_i\|\|\varepsilon_i\|^2(1+|a'Z_{-i}|+\|\varepsilon_i\|/\sqrt n)]$.
The eighth moment bounds this uniformly. Since $\E(\varepsilon_i\otimes B_i)=\kappa$, weighted Gaussian replacement and averaging give $n^{-1/2}\kappa:\E D_xD_Wg(m+G_0,V)+O(n^{-1})$. Equation~\eqref{ho:meanexpansion} supplies the skewness term involving the third mean derivative.
\end{proof}

\subsection{Centering, correction, and regional inference}\label{ho:proof-correction}
\textit{Proof of Proposition~\ref{ho:center}.}
Under the Gaussian law, $T=a'G$ is independent of $R=G-bT$ and is $N(\vartheta,v)$. The two root equations give
$\int w_s(x_r(u))(L_\alpha(u,r,V)-\alpha)\phi(u)du=0$.
The factor $(1-\gamma)\chi((\mathfrak m-\eta)/\eta)$ is constant along this line; together with $g=0$ for $\mathfrak m\le\eta$, this proves conditional centering. Since $k_L<k_U$, $0\le L_\alpha\le1$, and its multiplier in $g$ lies in $[0,1-\gamma]$, proving \eqref{ho:margin}.\qed

\textit{Proof of Theorem~\ref{ho:main}.}
Applying Proposition~\ref{ho:center} to \eqref{ho:jointexp} in Lemma~\ref{ho:jointlemma} gives \eqref{ho:expansion}. Derivative and moment bounds give
$\E|\mathcal A_{\widehat\kappa}g(x,\widehat V)-\mathcal A_\kappa g(x,\widehat V)|\le C/\sqrt n$.
Covariance Taylor expansion replaces $\widehat V$ by $V$ at that cost, and weighted replacement takes $x$ to $m+G_0$. Six $x$ and two covariance derivatives suffice, with quadratic growth in $|a'(x-m)|$. Let $\delta^{\rm raw}$ be the expression inside the projection in \eqref{ho:test}. After multiplication by $n^{-1/2}$, these replacement costs are $O(n^{-1})$, and the raw probability satisfies $|\E(\delta^{\rm raw}-\alpha)|\le C/n$.

Put $Q=n^{-1/2}\mathcal A_{\widehat\kappa}g(x,\widehat V)$. Projection bounds the tensor deterministically, so \eqref{ho:derivativebound} gives $\E Q^2\le C/n$. By \eqref{ho:margin}, $\alpha+g$ is at least $\epsilon_*=\gamma\alpha_0$ from both zero and one, whence
\[
 \E|\delta^{\rm bc}-\delta^{\rm raw}|
 \le\E(|Q|-\epsilon_*)_+\le\epsilon_*^{-1}\E Q^2\le C/n.
\]
Outside the open cells assigned to $s$ and on atom boundaries, $g$ and its derivatives vanish; $\delta^{\rm raw}$ and $\delta^{\rm bc}$ both equal $\alpha$ under the exact tie rule. The label map resolves open-region structural ties. Thus $\E[\ind_{E_s}(\delta^{\rm bc}-\alpha)]=\E(\delta^{\rm bc}-\alpha)$; independence of $U$ proves \eqref{ho:jointbound}. Dividing by the selection probability gives the conditional bound, summing over labels gives the selected-target marginal bound, and inversion using the same $U$ gives the confidence-set coverage bounds.\qed

\textit{Proof of Theorem~\ref{ho:regional}.}
Apply the mean replacement to each independent block and expand in the block diagonal covariance. Within block $j$, \eqref{ho:jointexp} supplies \eqref{ho:regionalop} with remainder $C/n_j$. Cross-block first-order terms vanish by independence. Replacing other blocks in a correction term costs at most $C/\sqrt{n_jn_k}$. For the fixed $J+1$ regions, Cauchy--Schwarz gives
\[
 \Big(\sum_jn_j^{-1/2}\Big)^2\le(J+1)\sum_jn_j^{-1}.
\]
This controls replacement, tensor estimation, and the second moment needed for clipping. The preceding support argument gives the joint bound. In \eqref{ho:regionalcoord}, scaling region $j$'s mean by $\sqrt{n_j}$ gives covariance block $\Gamma_j$ with common eigenvalue bounds for every sample-size vector; normalization gives $\|a\|=1$.\qed

\subsection{Probability stabilization and symmetry}
\begin{proposition}[Stabilized correction]\label{ho:stable-prop}
Under the iid conditions of Theorem~\ref{ho:main}, with fixed tuning, $\delta_s^{\rm st}\in[0,1]$ and
\begin{equation}\label{ho:stable-bound}
 \sup_P\max_s\big|P\{E_{n,s},U\le\delta_s^{\rm st}(\theta_{n,s})\}
              -\alpha P(E_{n,s})\big|\le C_{\rm st}/n.
\end{equation}
Under the independent regional conditions of Theorem~\ref{ho:regional}, replace $n^{-1}$ by $\sum_jn_j^{-1}$; conditional bounds divide by $p_{n,s}>0$, and inversion with the same $U$ gives coverage bounds.

More generally, at a fixed trial, let the measurable $p=p_s^0\in[0,1]$ be the probability before correction, $Q=Q_s$ the correction to subtract, and $0\le\lambda\le\min(p,1-p)$ the stabilization scale. For measurable $\psi$ with $|\psi(z)|\le1$ and $|z-\psi(z)|\le L_\psi z^2$, let $\delta^\psi=p-\lambda\psi(Q/\lambda)$, taking $\delta^\psi=p$ at $\lambda=0$. On $E=E_{n,s}$ assume $\lambda=0\Rightarrow Q=0$ almost surely and set $Q^2/\lambda=0$ there. For finite $A,B,L_\psi\ge0$ and $\rho>0$, the bounds
\[
 |\E[\ind_E(p-Q)]-\alpha P(E)|\le A\rho,\quad
 \E[\ind_E Q^2/\lambda]\le B\rho
\]
imply $\delta^\psi\in[0,1]$ and selected joint error at most $(A+L_\psi B)\rho$, with the same conditional and coverage conclusions. The bound on $\E[\ind_EQ^2/\lambda]$ controls the stabilization remainder when the scale varies across observations and may approach zero.

Let $T$ be a measurable transformation that preserves the law of the observation array and satisfies $\ind_E\circ T=\ind_E$, $\lambda\circ T=\lambda$, and $Q\circ T=-Q$. For odd $\psi$, this gives
\begin{equation}\label{ho:stable-gaussian}
 \E[\ind_E\delta^\psi]=\E[\ind_E p].
\end{equation}
For the original scale and correction, equality \eqref{ho:stable-gaussian} holds for jointly Gaussian rows with common means and $\operatorname{Cov}(\bar X_a,X_{ib}-\bar X_b)=0$ for all $i,a,b$, reflection-invariant covariance estimation, and odd tensor projection. For independent regional arrays, apply the reflection separately within each region; rows within a region may be correlated under the stated orthogonality condition, and regional sizes may differ. The equality holds at every fixed trial value.
\end{proposition}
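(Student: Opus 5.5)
The proof splits into three parts: the range claim, the general $\psi$-bound, and the symmetry identity. The specific stabilized test $\delta_s^{\rm st}$ is then read off from the general bound with $\psi=\tanh$ and $\lambda=d_s=p(1-p)$.

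\emph{Range.} Since $|\psi|\le1$ and $0\le\lambda\le\min(p,1-p)$, we have
\[
\delta^\psi=p-\lambda\psi(Q/\lambda)\in[p-\lambda,\,p+\lambda]\subset[0,1].
\]
For $\tanh$ we have $|\tanh z|\le1$ and $|z-\tanh z|\le z^2$ (indeed $\le|z|^3/3$, and $\le|z|\le z^2$ for $|z|\ge1$), so $L_\psi=1$. Also $p(1-p)\le\min(p,1-p)$.

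\emph{Error bound.} On $\{\lambda>0\}$ I write
\[
\delta^\psi=(p-Q)+\lambda\{Q/\lambda-\psi(Q/\lambda)\},
\]
whose second term is bounded in absolute value by $L_\psi Q^2/\lambda$. On $\{\lambda=0\}\cap E$ we have $Q=0$, so $\delta^\psi=p=p-Q$. Multiplying by $\ind_E$ and taking expectations gives
\[
|\E[\ind_E\delta^\psi]-\alpha P(E)|\le A\rho+L_\psi B\rho.
\]
Independence of $U$ then turns $\E[\ind_E\delta^\psi]$ into $P(E,U\le\delta^\psi)$. The conditional bound follows by dividing by $p_{n,s}$, and inversion with the same $U$ gives the coverage bound.

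\emph{Fixed-tuning rate.} The specific rate requires verifying $A\rho=C/n$ and $B\rho=C/n$. The first is exactly what the proof of Theorem~\ref{ho:main} established for $\delta^{\rm raw}=p-Q$, namely $|\E(\delta^{\rm raw}-\alpha)|\le C/n$ together with vanishing off the $s$-cells. For the second, \eqref{ho:margin} gives $p\in[\gamma\alpha_0,1-\gamma\alpha_0]$, so $\lambda=p(1-p)\ge\epsilon_*(1-\epsilon_*)>0$. Then $\E Q^2/\lambda\le C\E Q^2\le C/n$, by the deterministic tensor projection and \eqref{ho:derivativebound} as in that proof. In the regional case, the same bounds from Theorem~\ref{ho:regional} with $\sum_j n_j^{-1}$ apply. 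Off the selected cells, $g$ and $Q$ vanish, so $\delta^{\rm st}=\alpha$ there, consistent with the tie rule.

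\emph{Symmetry identity.} For odd $\psi$, law preservation gives
\[
\E[\ind_E\lambda\psi(Q/\lambda)]=\E[\ind_E\lambda\psi(-Q/\lambda)]=-\E[\ind_E\lambda\psi(Q/\lambda)],
\]
so this term is zero and \eqref{ho:stable-gaussian} follows.

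The main obstacle is exhibiting $T$ for Gaussian rows. I take $T$ to reflect each row about the sample mean, $X_i\mapsto 2\bar X-X_i$, applied within each region in the regional case. This fixes $\bar X$, and hence $x$ and the selection event $E$. It maps the centered rows $X_i-\bar X$ to their negatives, so $\widehat V$ is unchanged, and hence so are $p$ and $\lambda$. It sends the empirical third tensor to its negative, and odd projection then gives $\widehat\kappa\mapsto-\widehat\kappa$, so $Q\mapsto-Q$ by linearity of $\mathcal A$ in the tensor.

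Law preservation needs joint Gaussianity with $\Cov(\bar X_a,X_{ib}-\bar X_b)=0$. Under this condition the centered rows are independent of $\bar X$ and jointly Gaussian with mean zero (common means), so negating them preserves the joint law. This step is the delicate one. I would check that $\bar X$ and the residual array form a jointly Gaussian pair whose cross-covariance vanishes, then apply symmetry of the centered Gaussian residual component, allowing within-region correlation.
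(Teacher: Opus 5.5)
Your proposal is correct and follows the paper's proof essentially step for step. The shared steps are the range claim from $|\psi|\le1$ and $\lambda\le\min(p,1-p)$; the remainder bound $|\delta^\psi-(p-Q)|\le L_\psi Q^2/\lambda$, combined with the raw-probability bound from the proof of Theorem~\ref{ho:main} and the scale floor $d_s\ge\gamma\alpha_0(1-\gamma\alpha_0)$; and the symmetry identity via oddness of $\psi$ and the within-region reflection $X_i\mapsto2\bar X-X_i$. Your explicit check that Gaussian orthogonality makes $\bar X$ independent of the mean-zero residual array, so that negating the residuals preserves the law, supplies exactly the step the paper states in one line.
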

\begin{proof}
The scale bound puts $\delta^\psi$ in $[0,1]$, and
\begin{equation}\label{ho:stable-remainder}
 |\delta^\psi-(p-Q)|\le L_\psi Q^2/\lambda
\end{equation}
for $\lambda>0$, with zero error on the selected zero-scale set. Taking expectations on $E$ gives the joint bound. Independence of $U$ identifies it with the rejection-probability error; dividing by $P(E)$ gives the conditional bound, and inversion using the same $U$ gives coverage. For $\psi=\tanh$, integrating $1-\tanh'=\tanh^2$ on $|z|\le1$, and using $|z-\tanh z|\le|z|$ otherwise, gives $L_\psi=1$. The original scale $d_s=p(1-p)\ge d_*:=\epsilon(1-\epsilon)>0$, $\epsilon=\gamma\alpha_0$, and the raw-probability error bound in the proof of Theorem~\ref{ho:main} (Appendix~\ref{ho:proof-correction}), together with $\E Q_s^2\le C\rho$, give \eqref{ho:stable-bound}; take $\rho=n^{-1}$ in the iid case and $\rho=\sum_jn_j^{-1}$ under Theorem~\ref{ho:regional}.

The bounded term $\ind_E\lambda\psi(Q/\lambda)$ changes sign, proving \eqref{ho:stable-gaussian}. Gaussian orthogonality makes the mean independent of the sign-symmetric centered residual array. Reflection $X_i^\dagger=2\bar X-X_i$ preserves the law of the observation array. It fixes the mean and covariance estimate, reverses the centered residuals, and hence fixes $E,d_s$ while reversing the projected third tensor and $Q$. Apply this reflection separately in each independent region.
\end{proof}

\subsection{Vanishing smoothing}\label{ho:restored-vanishing}
\begin{proof}[Proof of Corollary~\ref{ho:vanishing}]
For the log--log sequence in Corollary~\ref{ho:vanishing}, Theorem~\ref{ho:main} gives the tuning-dependent rate. Choose $R=O(\sqrt{\log t_n})$ with $P(|N_0|>R)<\eta_n\alpha_0/8$, where $N_0\sim N(0,1)$. Lemma~\ref{ho:derivatives} bounds $|k_L|,|k_U|$ by $R+C/t_n$, and the reciprocal absolute $k$ derivative of the integral in \eqref{ho:roots} at either root by a polynomial in $t_n$ times $\exp(Ct_n\sqrt{\log t_n})$. Fixed-order implicit derivatives, cutoffs, and inverse clipping margins give
\[
 C_n\le t_n^A\exp(Bt_n\sqrt{\log t_n})=n^{o(1)}
\]
for fixed $A,B$.

For decision agreement, translate the selection polynomials to $q_k(m+\cdot)$ and normalize their coefficients. Every subsequence has a further subsequence with coefficientwise polynomial convergence, convergent $V,a$, and a constant label map. Gaussian-null limiting zero sets give weight convergence to the assigned open-cell indicators. Line integration gives masses converging to almost surely positive limits on selection, so the mass cutoff in \eqref{ho:g} tends to one. Logistic limits are tail indicators. The limiting continuous conditional line CDF at the observed contrast avoids both critical levels almost surely, giving $L^1$ hard-decision convergence, including disconnected accepted sets.

Uniform mean approximation and $\|\mathcal R(\widehat V)-V\|\le A_1\|\widehat V-V\|$ transfer decision convergence to the sample: restrict to compact centered coordinates and masses bounded below, apply continuity, then discard vanishing Gaussian probability. The envelope makes the correction $o_{L^1}(1)$. Let $I_{n,s}$ indicate rejection by the nonrandomized full-fiber Gaussian plug-in equal-tail test at the true target in Corollary~\ref{ho:vanishing}. Coupling by $U$ gives $\E[\ind_{E_{n,s}}|\delta_s^{\rm bc}-I_{n,s}|]\to0$; for $p_{n,s}\ge p_0>0$, conditional disagreement is bounded by this joint disagreement divided by $p_0$. The block-standardized approximation \eqref{ho:regionalcoord} gives the regional bound and agreement.
\end{proof}

\begin{lemma}[Stabilization with vanishing smoothing]\label{ho:stable-vanishing}
At a fixed trial target, write $p_{n,s}=P(E_{n,s})$, let $p^0_{n,s}\in(0,1)$ be the rejection probability before correction and $Q_{n,s}$ the correction to subtract, and suppose $d_{n,s}=p^0_{n,s}(1-p^0_{n,s})\ge d_{*,n}>0$ almost surely. Use stabilization~\eqref{ho:stable-test}, and set the raw probability $r^{\rm raw}_{n,s}=p^0_{n,s}-Q_{n,s}$ and its clipped version $\delta^{\rm bc}_{n,s}=\Pi_{[0,1]}r^{\rm raw}_{n,s}$. Suppose deterministic $A_n,K_n\ge0$, $\varrho_n,d_{*,n}>0$ satisfy uniformly
\[
 \E Q_{n,s}^2\le K_n\varrho_n,\qquad
 |\E[\ind_{E_{n,s}}r^{\rm raw}_{n,s}]-\alpha p_{n,s}|
                  \le A_n\varrho_n.
\]
Then the stabilized joint size error $|P\{E_{n,s},U\le\delta^{\rm st}_{n,s}\}-\alpha p_{n,s}|$ is at most $(A_n+K_n/d_{*,n})\varrho_n$, and
\[
 \E[\ind_{E_{n,s}}|\delta^{\rm st}_{n,s}-\delta^{\rm bc}_{n,s}|]
                  \le K_n\varrho_n/d_{*,n}.
\]
If the joint probability of selection and disagreement between the clipped procedure and the true-target hard decision tends to zero, and $K_n\varrho_n/d_{*,n}\to0$, the stabilized procedure coupled by the same $U$ also has vanishing joint disagreement. For $p_{n,s}>0$, divide joint bounds by $p_{n,s}$ for conditional errors, or by $b_n$ uniformly over $p_{n,s}\ge b_n>0$, including $b_n=p_0$. The iid joint error is $o(n^{-1/2})$ when $(A_n+K_n/d_{*,n})\varrho_n=o(n^{-1/2})$.
\end{lemma}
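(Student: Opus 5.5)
The plan is to reduce everything to the deterministic comparison already isolated in Proposition~\ref{ho:stable-prop}, with the tuning-dependent constants carried explicitly. Take $\psi=\tanh$, $\lambda=d_{n,s}$, and $\rho=\varrho_n$. Pointwise, for $d_{n,s}>0$,
\[
 |\delta^{\rm st}_{n,s}-r^{\rm raw}_{n,s}|
 =d_{n,s}\,\bigl|Q_{n,s}/d_{n,s}-\tanh(Q_{n,s}/d_{n,s})\bigr|
 \le Q_{n,s}^2/d_{n,s}\le Q_{n,s}^2/d_{*,n},
\]
using $|z-\tanh z|\le z^2$ (valid for all $z$, as in the proof of Proposition~\ref{ho:stable-prop}: integrate $\tanh^2$ on $|z|\le1$, and use $|z-\tanh z|\le|z|\le z^2$ otherwise). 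Also $\delta^{\rm st}_{n,s}\in[0,1]$, since $|d\tanh|\le d\le\min(p,1-p)$.

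The joint error then follows from independence of $U$:
\[
 P\{E_{n,s},U\le\delta^{\rm st}_{n,s}\}=\E[\ind_{E_{n,s}}\delta^{\rm st}_{n,s}].
\]
The triangle inequality gives
\[
 |\E[\ind_{E}\delta^{\rm st}]-\alpha p_{n,s}|
 \le|\E[\ind_E r^{\rm raw}]-\alpha p_{n,s}|+\E Q^2/d_{*,n}
 \le (A_n+K_n/d_{*,n})\varrho_n .
\]

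For the comparison with the clipped test, note that $\delta^{\rm st}\in[0,1]$ and that projection onto $[0,1]$ is $1$-Lipschitz and fixes $\delta^{\rm st}$. Therefore $|\delta^{\rm st}-\delta^{\rm bc}|=|\Pi\delta^{\rm st}-\Pi r^{\rm raw}|\le|\delta^{\rm st}-r^{\rm raw}|\le Q^2/d_{*,n}$, and taking expectations gives the $K_n\varrho_n/d_{*,n}$ bound. The only point to check is that no separate margin argument, as in Theorem~\ref{ho:main}, is needed; the Lipschitz property replaces it.

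For agreement with the hard decision $I_{n,s}$, couple both randomized tests through the same $U$. Then
\[
 P\{E,\ \ind\{U\le\delta^{\rm st}\}\ne I\}\le \E[\ind_E|\delta^{\rm st}-\delta^{\rm bc}|]+P\{E,\ \ind\{U\le\delta^{\rm bc}\}\ne I\},
\]
because conditional on the data the disagreement probability between two $U$-thresholded tests is $|\delta^{\rm st}-\delta^{\rm bc}|$. The first term vanishes by hypothesis $K_n\varrho_n/d_{*,n}\to0$, and the second by assumption. Conditional versions follow by dividing by $p_{n,s}$ or by $b_n\le p_{n,s}$, and the $o(n^{-1/2})$ statement is immediate from the displayed rate.

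The main obstacle is not the algebra but verifying the lemma's hypotheses in the vanishing-tuning application. One needs $\E Q^2\le K_n\varrho_n$ with $K_n$ polynomial in the envelope $C_n$ of Corollary~\ref{ho:vanishing}. One also needs a lower bound $d_{*,n}\ge\gamma_n\alpha_0(1-\gamma_n\alpha_0)$ from \eqref{ho:margin}, and this is of order $t_n^{-1}$ for the log--log sequence. Since $K_n/d_{*,n}=n^{o(1)}$ and $\varrho_n=n^{-1}$, the required condition holds, but I would state that verification as the remark following the lemma rather than inside its proof.
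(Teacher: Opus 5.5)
Your proposal is correct and follows essentially the same route as the paper's proof. Both use the pointwise remainder bound $|\delta^{\rm st}-r^{\rm raw}|\le Q^2/d_{*,n}$ from the $\tanh$ inequality, take expectations on the selected event for the size bound, use nonexpansiveness of $\Pi_{[0,1]}$ (which fixes $\delta^{\rm st}$) for the comparison with $\delta^{\rm bc}$, and apply the same-$U$ coupling with the triangle inequality for agreement with the hard decision.
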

\begin{proof}
Equation~\eqref{ho:stable-remainder} gives $|\delta^{\rm st}-r^{\rm raw}|\le Q^2/d_{*,n}$; expectation on $E_{n,s}$ gives the size bound. Projection fixes $\delta^{\rm st}\in[0,1]$, so nonexpansiveness gives $|\delta^{\rm st}-\delta^{\rm bc}|\le|\delta^{\rm st}-r^{\rm raw}|$. Conditional on the observations, decisions coupled by the same independent $U$ disagree with probability $|\delta^{\rm st}-\delta^{\rm bc}|$. The triangle inequality with the hard decision proves vanishing joint disagreement.
\end{proof}
Set $\varrho_n=n^{-1}$ for iid samples or $\varrho_n=\sum_jn_j^{-1}$ for independent regions. Equation~\eqref{ho:margin} permits $d_{*,n}=\gamma_n\alpha_0(1-\gamma_n\alpha_0)$; include bounds on $A_n$ and $K_n/d_{*,n}$ in the envelope. For the log--log sequence, $d_{*,n}^{-1}=O(t_n)$ preserves the $n^{o(1)}$ envelope, iid $o(n^{-1/2})$ joint error, and true-target hard-decision agreement. For a fixed regional count, index by $n_*$ and apply Corollary~\ref{ho:vanishing}'s regional rate condition. At fixed tuning, Proposition~\ref{ho:stable-prop} gives the joint bound $C\varrho_n$.


\begingroup
\makeatletter
\let\section\old@section
\def\section@prefix{}
\makeatother



\endgroup

\clearpage
\makeatletter
\let\section\arxivArticleSection
\let\subsection\arxivArticleSubsection
\let\subsubsection\arxivArticleSubsubsection
\let\addcontentsline\arxivArticleAddcontentsline
\let\section@prefix\arxivArticleSectionPrefix
\let\section@numbersep\arxivArticleSectionNumbersep
\let\section@cnt@size\arxivArticleSectionCntSize
\let\@thesection\arxivArticleThesection
\makeatother

\makeatletter
\apptocmd{\normalsize}{%
  \abovedisplayskip=6pt plus 2pt minus 1pt
  \belowdisplayskip=\abovedisplayskip
  \abovedisplayshortskip=4pt plus 1pt
  \belowdisplayshortskip=4pt plus 1pt}{}{}
\normalsize
\def\section@size{\normalsize\MakeTextUppercase}
\renewcommand\section{\@startsection{section}{1}{\z@}%
  {6pt plus 2pt minus 1pt}{3pt plus 1pt}%
  {\centering\TO@fl\section@size}}
\renewcommand\subsection{\@startsection{subsection}{2}{\z@}%
  {6pt plus 2pt minus 1pt}{2pt plus 1pt}%
  {\normalsize\itshape\centering\TO@fl}}
\setlength{\medskipamount}{8pt plus 2pt minus 1pt}
\setlength{\floatsep}{10pt plus 2pt minus 1pt}
\setlength{\textfloatsep}{10pt plus 2pt minus 1pt}
\setlength{\intextsep}{10pt plus 2pt minus 1pt}
\def\figurecaption@skip{6pt}
\def\figurecaption@size{\def\baselinestretch{1.304}\fontfamily{ptm}\normalsize}
\renewcommand{\note}[1]{\par\vspace{2pt}{\def\baselinestretch{1.304}\fontfamily{ptm}\normalsize\raggedright\noindent Note: #1\par}}

\makeatother

\setcounter{section}{0}
\setcounter{subsection}{0}
\setcounter{subsubsection}{0}
\setcounter{equation}{0}
\setcounter{theorem}{0}
\setcounter{assumption}{0}
\setcounter{remark}{0}
\setcounter{table}{0}
\setcounter{figure}{0}
\setcounter{footnote}{0}
\renewcommand{\thesection}{S\arabic{section}}
\renewcommand{\theequation}{S\arabic{equation}}
\renewcommand{\thetheorem}{S\arabic{theorem}}
\renewcommand{\theassumption}{S\arabic{assumption}}
\renewcommand{\theremark}{S\arabic{remark}}
\renewcommand{\thetable}{S\arabic{table}}
\renewcommand{\thefigure}{S\arabic{figure}}
\def\theHsection{supp.\arabic{section}}
\def\theHequation{supp.\arabic{equation}}
\def\theHtheorem{supp.\arabic{theorem}}
\def\theHassumption{supp.\arabic{assumption}}
\def\theHremark{supp.\arabic{remark}}
\def\theHtable{supp.\arabic{table}}
\def\theHfigure{supp.\arabic{figure}}
\def\theHfootnote{supp.\arabic{footnote}}
\renewcommand{\bibsection}{\section*{References}}
\runtitle{Supplemental Material}
\makeatletter
\markboth{\@runauthor}{\@runtitle}
\makeatother
\pdfbookmark[0]{Supplemental Material}{supplemental-material}
\begin{center}
{\bfseries Supplemental Material to:\par
Valid Inference after Control Selection\par
in Difference-in-Differences\par}
\medskip
Takahiro Hoshino and Ryoya Nakano
\end{center}
\medskip

\thispagestyle{plain}
References without an S prefix refer to the main article.

\section{Exact loss representations and interval computation}\label{supp:reduction}
\subsection{Polynomial atoms and structural ties}\label{poly:piecewise-supp}
Use nonzero polynomial piece guards and within-piece score differences as atoms. Apply priority to identically zero comparisons and evaluate the original loss on boundaries. Let $D$ bound their degrees; fixed numbers of candidates, discrepancies and pieces bound $M_0$.

For L1, signs $\epsilon_{jt}\in\{-1,1\}$ use guards $\epsilon_{jt}r_{jt}>0$ and score $Q_{j,\epsilon}=\sum_t\omega_t\epsilon_{jt}r_{jt}$. Under priority $1<\cdots<J$, candidate $s$ requires $Q_j-Q_s>0$ for $j<s$ and $\ge0$ for $j>s$. An identically zero comparison empties a strict branch and leaves a weak branch unrestricted. Huber pieces at $r=-\kappa_H,\kappa_H$ are $-\kappa_Hr-\kappa_H^2/2$, $r^2/2$, and $\kappa_Hr-\kappa_H^2/2$; their comparisons have degree at most two. Structural ties on open sets are resolved by priority, while nonzero guards determine boundary exceptions.

\subsection{Full-fiber inversion and recorded inputs}\label{supp:full-fiber-inversion}
Using the estimated full covariance, compute $\widehat b_s,\widehat R_s$ and all real roots of $p_k(\widehat R_s+\widehat b_st)$. Evaluate the recorded selector and priorities between roots and on boundaries, merge accepted intervals, integrate their normal masses, and invert both endpoint equations on the full union.

Equivalent selectors give the same nonexceptional first-order interval; higher-order smoothing uses the recorded atoms and tuning. At each quadrature point, at most one accepted sign-word product in \MainEqRef{ho:weight} is positive.

\subsection{Exact invariance under linear reduction}\label{supp:linear-reduction}
\begin{proposition}[A known linear reduction]\label{rev:linear-reduction}
Let $D\in\R^{q\times d}$ be fixed and full row rank. Suppose $S(y)=S^*(Dy)$, with corresponding atoms $p_k(y)=p_k^*(Dy)$ and the same sign-to-label rule, and $c_s=D'\gamma_s\ne0$. For every $W\succ0$, the nonrandomized full-fiber pivot and interval computed from $(Y,W,c_s)$ equal those computed from $(DY,DWD',\gamma_s)$. This includes an estimated covariance transformed by exactly the same $D$.
\end{proposition}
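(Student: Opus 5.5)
The plan is to show that every ingredient of the full-fiber pivot transforms covariantly under $y\mapsto Dy$. Then the normalized restricted-normal CDF \eqref{eq:pivot}, viewed as a function of the trial value $\theta$, is literally the same function in both coordinate systems.

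First, fix $W\succ0$ and put $W^*=DWD'$. Full row rank of $D$ gives $W^*\succ0$, so the reduced computation is well defined and lies off the exceptional convention whenever the original does.

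Next compare the scalar quantities. Using $c_s=D'\gamma_s$, the contrast is $T_s=c_s'Y=\gamma_s'DY=T_s^*$. The variance is $v_s=c_s'Wc_s=\gamma_s'DWD'\gamma_s=v_s^*$, which is positive because $c_s\ne0$ and $W\succ0$. The direction satisfies $Db_s=DWD'\gamma_s/v_s=W^*\gamma_s/v_s^*=b_s^*$, and hence the residual maps as $DR_s=DY-Db_sT_s=R_s^*$.

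The key step is to compare the fibers. For every trial value $t$, linearity gives $D(R_s+b_st)=R_s^*+b_s^*t$. Therefore each atom restricted to the conditioning line agrees with its reduced counterpart:
\[
 p_k(R_s+b_st)=p_k^*(R_s^*+b_s^*t)\qquad\text{for every }t\in\R .
\]
The sign vectors along the line coincide pointwise, including at roots. Since the sign-to-label rule and the tie priorities are the same, $\mathcal T_s(R_s;W)=\mathcal T_s^*(R_s^*;W^*)$ as subsets of $\R$. The same pointwise identity at $t=T_s$ shows that the boundary event $\mathcal B_n$ and the selection event $S(Y)=S^*(DY)$ agree, so the exceptional convention (pivot $1/2$, interval $\R$) is triggered in one system exactly when it is triggered in the other.

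It follows that $F_\theta(T_s;v_s,\mathcal T_s)$ equals its reduced counterpart for every $\theta$. The equal-tail inversion \eqref{eq:selci} therefore produces identical sets, and in particular identical endpoints. For an estimated covariance, take $W=\widehat\Sigma_n$ and $W^*=D\widehat\Sigma_nD'$. The eigenvalue-raising regularization is not claimed to commute with $D$, which is why the statement asks that the covariance be transformed by exactly the same $D$.

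The main obstacle is mild: checking that the set equality holds on the recorded fibers, not merely almost everywhere. This is handled by the pointwise identity of atom values along the line, which gives exact equality of sign patterns everywhere, including on boundaries. No measure-zero argument is needed.
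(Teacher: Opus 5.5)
Your proposal is correct and follows the paper's proof essentially step by step. You show $v_s^*=v_s$, $b_s^*=Db_s$ and $R_s^*=DR_s$, hence $R_s^*+b_s^*t=D(R_s+b_st)$, so the contrast, variance, full fiber and boundary exceptions coincide, and you note that full row rank gives $DWD'\succ0$. Your explicit remark that the atom values agree pointwise along the line, so the fibers are equal as sets and not merely almost everywhere, spells out a detail the paper leaves implicit.
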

\begin{proof}
Write $v=c_s'Wc_s$, $b=Wc_s/v$, and $R=Y-bc_s'Y$. Reduction gives $v^*=v$, $b^*=Db$, and $R^*=DR$, hence $R^*+b^*t=D(R+bt)$. The contrast, variance, complete fiber and atom-boundary exceptions coincide, while full row rank makes $DWD'$ positive definite. Thus the pivots and their inversions agree identically.
\end{proof}


\subsection{Independent influence units and the joint covariance}\label{supp:influence-units}
The independent sampling units may be individuals or clusters.

\begin{proposition}[Independent influence-unit implementation]\label{rev:influence-units}
With known deterministic nonsingular $L_n$ as in
Assumption~\MainRef{ass:array}, suppose in fixed dimension
\[
 L_n^{-1}(Y_n-\mu_n)=\sum_{g=1}^{G_n}\xi_{ng}+e_n,
 \qquad
 V_n=\sum_g\E(\xi_{ng}\xi_{ng}'),
\]
where the $\xi_{ng}$ are independent and centered. Uniformly over the
maintained distribution class, assume
$\underline v I\preceq V_n\preceq\overline v I$,
$e_n=o_P(1)$,
\[
 \sum_g\E\|\xi_{ng}\|^4\longrightarrow0,
 \qquad
 \sum_g\|\widehat\xi_{ng}-\xi_{ng}\|^2=o_P(1).
\]
Then Assumption~\MainRef{ass:array} holds with $\widehat V_n=\sum_g\widehat\xi_{ng}\widehat\xi_{ng}'$. Recomputing the variance, projection direction, residual and complete fiber from $L_n\widehat V_nL_n'$ gives Theorem~\MainRef{thm:plugin}'s guarantees for the recorded selector.
\end{proposition}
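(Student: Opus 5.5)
We verify the two parts of Assumption~\MainRef{ass:array} separately, then invoke Theorem~\MainRef{thm:plugin}, which needs no further structure.

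\emph{Uniform Gaussian approximation.} Write $Z_n=L_n^{-1}(Y_n-\mu_n)=S_n+e_n$ with $S_n=\sum_g\xi_{ng}$. For any sequence $P_n$ in the class, compactness of $\{\underline vI\preceq V\preceq\overline vI\}$ gives a subsequence with $V_n\to V\succ0$. For a fixed direction $a$, the Lyapunov ratio is bounded by $\sum_g\E|a'\xi_{ng}|^4/(a'V_na)^2\le\|a\|^4\sum_g\E\|\xi_{ng}\|^4/(\underline v\|a\|^2)^2\to0$. Cram\'er--Wold then gives $S_n\Rightarrow N_d(0,V)$, and Slutsky gives the same limit for $Z_n$, since $e_n=o_P(1)$. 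The triangle inequality with $d_{\mathrm{BL}}\{N(0,V_n),N(0,V)\}\le\E\|(V_n^{1/2}-V^{1/2})G\|\to0$ contradicts any assumed violation of \eqref{eq:uniform-clt}. This is the same subsequence argument used in the proof of Corollary~\MainRef{cor:regional}.

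\emph{Covariance consistency.} Split
\[
\widehat V_n-V_n=\Big(\sum_g\xi_{ng}\xi_{ng}'-V_n\Big)+\sum_g(\widehat\xi_{ng}\widehat\xi_{ng}'-\xi_{ng}\xi_{ng}').
\]
The first term has independent centered summands. Its mean-square Frobenius norm is at most $\sum_g\E\|\xi_{ng}\|^4\to0$, so Chebyshev gives uniform consistency. For the second term, write $\widehat\xi\widehat\xi'-\xi\xi'=(\widehat\xi-\xi)(\widehat\xi-\xi)'+\xi(\widehat\xi-\xi)'+(\widehat\xi-\xi)\xi'$. Its norm is bounded by $\sum_g\|\widehat\xi_{ng}-\xi_{ng}\|^2+2(\sum_g\|\xi_{ng}\|^2)^{1/2}(\sum_g\|\widehat\xi_{ng}-\xi_{ng}\|^2)^{1/2}$ by Cauchy--Schwarz. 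Since $\E\sum_g\|\xi_{ng}\|^2=\operatorname{tr}V_n\le d\,\overline v$, the factor $\sum_g\|\xi_{ng}\|^2$ is uniformly tight by Markov. The assumed $o_P(1)$ bound then makes the whole term uniformly $o_P(1)$, establishing \eqref{eq:uniform-cov}. Symmetry of $\widehat V_n$ holds by construction.

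\emph{Conclusion.} With Assumption~\MainRef{ass:array} verified for $\widehat\Sigma_n=L_n\widehat V_nL_n'$, including the eigenvalue-raising convention, Theorem~\MainRef{thm:plugin} applies directly to the recorded selector satisfying (S). The only delicate point is uniformity. Every $o_P$ statement must hold uniformly over the class, and I interpret the hypotheses in that sense. The subsequence-and-contradiction device then handles the CLT step, and the explicit moment bounds handle the covariance step without further work.
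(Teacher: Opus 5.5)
Your proposal is correct and follows essentially the paper's own proof. The CLT step is a fourth-moment Lyapunov bound, which the paper phrases as Lindeberg via $\epsilon^{-2}\sum_g\E\|\xi_{ng}\|^4$, combined with a covariance-compactness subsequence argument and Slutsky for $e_n$; the consistency step uses the same split of $\widehat V_n-V_n$, with the mean-square Frobenius bound $\sum_g\E\|\xi_{ng}\|^4$ and the Cauchy--Schwarz bound on the estimation difference. Your Markov argument via $\operatorname{tr}V_n\le d\,\overline v$ simply makes explicit the tightness of $\sum_g\|\xi_{ng}\|^2$ that the paper asserts.
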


\begin{proof}
The uniform fourth-moment condition implies Lindeberg, since $\sum_g\E[\|\xi_{ng}\|^2\ind\{\|\xi_{ng}\|>\epsilon\}]\le\epsilon^{-2}\sum_g\E\|\xi_{ng}\|^4$. Covariance compactness and the multivariate CLT give uniform bounded-Lipschitz convergence by a subsequence argument; $e_n=o_P(1)$ preserves it. Independence gives
\[
 \E\left\|\sum_g\{\xi_{ng}\xi_{ng}'-\E(\xi_{ng}\xi_{ng}')\}\right\|_F^2
 \le\sum_g\E\|\xi_{ng}\|^4\to0.
\]
Since $\sum_g\|\xi_{ng}\|^2=O_P(1)$, set $a_{ng}=\widehat\xi_{ng}-\xi_{ng}$. The outer-product estimation difference is bounded uniformly by $2(\sum_g\|\xi_{ng}\|^2)^{1/2}(\sum_g\|a_{ng}\|^2)^{1/2}+\sum_g\|a_{ng}\|^2=o_P(1)$.
\end{proof}

For ratios, retain numerator, denominator and normalization terms, summing linked contributions before outer products. For shared shocks, use joint influence contributions or a justified dependence covariance, then recompute the full selective reference law.

\subsection{Calibration with dependent regional fields}\label{supp:dependent-fields}
Each sample contains $G$ independent Gaussian fields $(V_0,H_0,V_1,H_1,V_2,H_2)'$ at regional positions $(x_0,x_1,x_2)=(0,1,2)$. The kernel is $K_{ij}=\exp(-|x_i-x_j|/b)$, $b=.3,1,3$, with within-region pre--post correlations $(.9,-.9,-.9)$. Selection compares squared pre-fit and all ATTs are zero. The same $G$ fields supply means and full covariance. Across nine designs with $G=50,200,800$ and 900,000 simulated samples, known-full marginal and candidate-conditional coverage remain within six MCSEs of $95\%$. At $G=800$, estimated-full marginal coverage is $94.932$--$95.031\%$; setting cross-region covariance to zero gives $89.610\%$ at $b=3$.


\section{Data construction and covariance estimation}\label{app:empirical-details}
Using the police data of \citet{supp:DS2004} and the Washington construction in Online Appendix~C of \citet{supp:CDLZ2019}, we construct outcomes and apply our recorded control-selection rules.

\subsection{Repeated observations and the joint approximation}\label{app:empirical-repetition-v15}
Let $H$ collect the realized regional calendar paths and auxiliary inputs held fixed in the primary analyses. Conditional on $H$, the mean is $\mu_n(H)$ and the target is $c_s'\mu_n(H)$, with the article's maintained ATT interpretation. Household or block sampling estimates $\operatorname{Var}(Y_n\mid H)$. Repeating paths also requires the second term in
\[
 \operatorname{Var}(Y_n)=\E\{\operatorname{Var}(Y_n\mid H)\}
       +\operatorname{Var}\{\E(Y_n\mid H)\}.
\]
Include this covariance when shared path components affect selection statistics or ATT contrasts.

For police, let $X_{gb}\in\mathbb R^9$ be block $b$'s vector in group $g$, with mean $\mu_g(H)$. Let $e_g$ be its unit group coordinate vector and $M$ the pooling matrix in Section~\ref{app:police-construction}. Conditional on $H$, its contribution to pooled means is
\[
 \zeta_{ngb}=(M\otimes I_9)
       \{e_g\otimes(X_{gb}-\mu_g(H))\}/n_g,
 \qquad Y_n-\mu_n(H)=\sum_{g,b}\zeta_{ngb}.
\]
Independent block vectors within and across the four groups, uniformly bounded $2+\delta$ moments, group covariance spectra uniformly bounded above and away from zero, and $\min_g n_g\to\infty$ connect the empirical covariance below to Corollary~\MainRef{cor:regional}, after the fixed invertible pooling map. For Washington, the population ratio derivatives give the analogous component contributions $\zeta_{nsk}$; the estimated linked contributions are $u_{sk}$ below. The joint ratio expansion has
\[
 L_n^{-1}(Y_n-\mu_n(H))
   =\sum_{s,k}L_n^{-1}\zeta_{nsk}+r_n,\qquad r_n=o_P(1).
\]
Proposition~\ref{rev:influence-units} connects independent centered standardized ratio contributions to Assumption~\MainRef{ass:array}. Population-normalized denominators bounded away from zero and negligible maximum household contributions control the ratio remainder.

\subsection{Police outcomes, nested pools, and sampling units}\label{app:police-construction}
We retain 876 street segments between intersections over nine recorded months (7,884 block--month cells), allocating one quarter of each corner incident to each of its four adjoining segments as in the original data. The four disjoint groups contain 37 protected blocks and 161, 226, and 452 unprotected blocks by distance. April--July supplies four pre-treatment observations, with July restricted to July 1--17; August--December supplies five post-treatment observations.

Let $\bar X^b$ stack the nine-period means of the four disjoint groups and let
\[
 M=\begin{pmatrix}
 1&0&0&0\\
 0&161/839&226/839&452/839\\
 0&0&226/678&452/678\\
 0&0&0&1
 \end{pmatrix}.
\]
The treated mean and nested candidate means are
\begin{equation}\label{app:overlap}
 Y=(M\otimes I_9)\bar X^b,\qquad
 \widehat\Sigma=(M\otimes I_9)
   \operatorname{blockdiag}_{g=0}^3(\widehat\Gamma_g/n_g)
   (M'\otimes I_9).
\end{equation}
Here $\widehat\Gamma_g$ is the unbiased covariance of group $g$ block vectors. Blocks are independent with within-block temporal dependence; nonsingular $M$ retains covariance from overlapping pools. 

The menu fixes A/B/C priority and scale $.1$ thefts per block per recorded month, averaging squared, absolute or Huber loss of the centered scaled gap. Huber thresholds are $.025$, $.05$, and $.1$ thefts; every independent-block rule selects A.

\subsection{Police spatial covariance inputs}\label{appv3:spatial-details}
For block $b$ in group $g(b)$, define the 36-dimensional contribution
\[
 \widehat\psi_b=
 \frac{e_{g(b)}\otimes(X_b-\bar X_{g(b)})}
 {\sqrt{n_{g(b)}(n_{g(b)}-1)}}.
\]
With these contributions as rows of $\widehat\Psi$, a symmetric geographical matrix $K$ gives
\[
 \widehat\Sigma_K=(M\otimes I_9)\widehat\Psi'K\widehat\Psi
                  (M'\otimes I_9).
\]
This includes all group--time cross-products and equals \eqref{app:overlap} for $K=I$; each input recomputes the complete selective line.

The main spatial sensitivity uses a product-triangular kernel in local tangent-plane coordinates,
\[
 K_h(b,b')=(1-|u_b-u_{b'}|/h)_+
             (1-|v_b-v_{b'}|/h)_+,
 \qquad h=100,200,400\text{m}.
\]
Each triangular factor is a moving-indicator inner product, so the product kernel is positive semidefinite. The 23 unresolved blocks remain in the means and diagonal covariance contributions, with their unresolved off-diagonal weights set to zero. For asymptotic inference with these spatial inputs, impose Assumption~\MainRef{ass:array} under the stated dependence model.

Table~\ref{tab:police-full-grid-v15} reports all four L2 covariance inputs and the independent and 200m L1 comparisons. L2 conventional intervals exclude zero and selective intervals include it, while the 200m L1 selective interval excludes zero. All four pooled covariance inputs are positive definite. All 20 cells across the five losses use positive-definite covariance inputs. Shared-endpoint and two-hop graph covariance inputs are indefinite, and historical-geography specifications remain unresolved.

\begin{table}[tbp]\centering\submissionTableStyle\setlength{\tabcolsep}{3pt}
\caption{Police: covariance sensitivity and the L1 exception}\label{tab:police-full-grid-v15}
\begin{tabular}{llrllr}\toprule
Covariance & Loss & SE & Conventional 95\% & Selective 95\% & $p_{\rm sel}$\\\midrule
Independent & L2 & $0.0238$ & $[-0.1242,-0.0309]$ & $[-0.1227,0.0427]$ & $0.1810$\\
HAC 100m & L2 & $0.0239$ & $[-0.1244,-0.0307]$ & $[-0.1229,0.0437]$ & $0.1847$\\
HAC 200m & L2 & $0.0244$ & $[-0.1253,-0.0298]$ & $[-0.1239,0.0386]$ & $0.1784$\\
HAC 400m & L2 & $0.0248$ & $[-0.1262,-0.0289]$ & $[-0.1236,0.0119]$ & $0.2544$\\
Independent & L1 & $0.0238$ & $[-0.1242,-0.0309]$ & $[-0.1167,0.2115]$ & $0.7120$\\
HAC 200m & L1 & $0.0244$ & $[-0.1253,-0.0298]$ & $[-0.1183,-0.0205]$ & $0.0046$\\\bottomrule
\end{tabular}
\note{All rows select A, with estimate $-0.07753$ thefts per block per recorded month and loss scale $.1$. The minimum eigenvalues of the four pooled 36-coordinate inputs are $3.169,3.215,2.821,1.763$ times $10^{-6}$. Spatial inputs use current partial geometry.}
\end{table}

\subsection{Washington employment outcomes and household contributions}\label{appv3:cps-full}
Following Online Appendix~C of \citet{supp:CDLZ2019}, we retain Washington and 39 comparison jurisdictions, including the District of Columbia, in 1996--1998 and 2000--2004. We sum the original $\$5$--$\$13$ wage bins and divide by estimated adult population, omitting the original normalization by Washington's pre-treatment employment-to-population rate. Administrative/CPS uses Washington administrative counts and CPS counts normalized to QCEW totals elsewhere; all CPS also uses CPS for Washington. QCEW totals, administrative counts and wage-bin mapping are fixed. Priority follows jurisdiction code; with loss scale $.01$ in proportion units, Huber thresholds are $.25$, $.5$ and 1 percentage point.

For annual outcome $\widehat m_{st}=\widehat N_{st}/\widehat P_{st}$, let household $h$ in quarter $q$ contribute weighted wage-band count $C_{h,q}$ and weighted valid-wage count $B_{h,q}$. Write $C_q=\sum_h C_{h,q}$, $B_q=\sum_h B_{h,q}$, and let $E_q$ denote the fixed QCEW employment total. The annual numerator includes $\frac14\sum_qE_qC_q/B_q$. Its household derivative is
\[
 dN_{h,q}=\frac{E_q}{4B_q}
    \left(C_{h,q}-\frac{C_q}{B_q}B_{h,q}\right).
\]
For Washington administrative counts this derivative is zero; uncertainty in its estimated population denominator still enters.

For month $m$ in quarter $q$, let $N_m$ count person records and $S_m$ sum population weights. Put $N_q=\sum_mN_m$ and $P_q=\sum_mN_mS_m/N_q$. If household $h$ contributes $n_h$ records with total population weight $w_h$, its annual-denominator derivative is
\[
 dP_{h,q}=\frac{N_mw_h+n_h(S_m-P_q)}{4N_q},\qquad
 a_{h,st}=\frac{dN_{h,q}}{\widehat P_{st}}
       -\frac{\widehat m_{st}}{\widehat P_{st}}dP_{h,q}.
\]
We center $a_{h,st}$ within state--year--month cells and multiply by $\sqrt{H_{stm}/(H_{stm}-1)}$, with $H_{stm}$ the household-record count in that cell. We then sum the time-vector contributions over each validated linked-household component $k$, giving $u_{sk}$. The covariance for state $s$ is $\sum_ku_{sk}u_{sk}'$; the primary working covariance sets cross-jurisdiction blocks to zero. This construction retains numerator normalization, population estimation, and linked repeat observations.

Validated links use household and person identifiers, interview rotation, sex, coarse race, and age consistency, dropping ambiguous links. The primary annual match permits age changes from $-1$ to 3; a strict variant requires one.

The primary specification treats linked components and jurisdictions as independent. The available CPS observations do not identify strata, primary sampling units or replicate weights. We construct covariance from linked household components; inference over repeated regional paths adds the covariance described in Section~\ref{app:empirical-repetition-v15}.

Strict household links retain the all-CPS adjustment; removing links gives equal conventional and selective endpoints at reported precision. Relative to linked contributions, removing links changes the direction $\widehat\Sigma c_s/(c_s'\widehat\Sigma c_s)$ by $7.54\%$ in Euclidean norm; strict links change it by $.19\%$. Accepted-line Gaussian mass at ATT zero is $.26259$, $1.00000$, and $.26250$ for linked, unlinked, and strict links. Each mass integrates the Gaussian law over the accepted line at the observed residual; $p_{n,s}$ denotes repeated-sampling selection probability.

Table~\ref{tab:cps-covariance-v15} compares the linked, strict-link, and unlinked L2 covariance inputs for all-CPS Washington.

\begin{table}[tbp]\centering\submissionTableStyle\setlength{\tabcolsep}{3pt}
\caption{All-CPS Washington: household linkage under L2}\label{tab:cps-covariance-v15}
\begin{tabular}{lrll}\toprule
Covariance & SE & Conventional 95\% & Selective 95\%\\\midrule
Linked & $0.873$ & $[-0.807,2.613]$ & $[-7.234,2.424]$\\
Strict link & $0.872$ & $[-0.805,2.611]$ & $[-7.211,2.423]$\\
Unlinked & $0.777$ & $[-0.620,2.426]$ & $[-0.620,2.426]$\\\bottomrule
\end{tabular}
\note{Units are percentage points in wage-band employment divided by adult population. All rows select Utah with estimate $.903$. Fixed population gives SE $.883$ and selective interval $[-7.554,2.435]$. Administrative Washington with fixed population has structural rank 312/320, so its nondegenerate 320-coordinate calculation is excluded. The complete five-loss covariance grid reports 35 computed cells and five structural exclusions, with no numerical ridge for excluded cells.}
\end{table}

\subsection{Selection rules and reporting events}\label{appv3:nested}
We record pools, periods, outcomes, wage band, loss scales and thresholds, priorities, covariance inputs and reporting events. Source-based periods, pools and outcomes are fixed; we recorded the five-loss menu after initial L2 results but before further calculations. Spatial and linkage sensitivities recompute each line under stated covariances, without effect-based covariance choice. Fixed-rule reports condition on $S_r(Y)=s$; effect-directed rule choice conditions on $\{A(Y)=r,S_r(Y)=s\}$.

\section{Simulation design and evaluation}\label{sec:simulation-supp}
\subsection{First-order regional designs}\label{sim:full-results}
Generate independent pre-treatment changes $V$ and post-minus-pre outcomes $H$ across individuals and regions using
\[
 (V_{ji},H_{ji})'=m_j+C_je_{ji},\quad C_jC_j'=G_j,
\]
where $C_j$ is lower-triangular Cholesky and
\begin{equation}\label{eq:sim-cov}
 G_0=\begin{pmatrix}.9&.65\\.65&1\end{pmatrix},\quad
 G_1=\begin{pmatrix}1.2&.30\\.30&1.4\end{pmatrix},\quad
 G_2=\begin{pmatrix}.7&.50\\.50&.8\end{pmatrix}.
\end{equation}
Coordinates of $e_{ji}$ are independent standard normal or $(G-2)/\sqrt2$ with $G\sim\mathrm{Gamma}(2,1)$, where 1 is the scale. Every $H$ mean is zero. The four settings are
\[
\begin{array}{lll}
\mathrm{T0}:&\E V=(0,0,0),&(n_0,n_1,n_2)=(r,2r,3r);\\
\mathrm{T+}:&\E V=(0,.5,-.5),&(n_0,n_1,n_2)=(r,2r,3r);\\
\mathrm{S}:&\E V=(0,.12,.65),&(n_0,n_1,n_2)=(r,2r,3r);\\
\mathrm{U}:&\E V=(0,.4r^{-1/4},-.4r^{-1/4}-.5r^{-1/2}),
 &(n_0,n_1,n_2)=(r,\operatorname{round}(r^{1.3}),2r).
\end{array}
\]
The initial 24 cells use $r=25,100,400$, both innovations and 5,000 draws each; eight independent $r=25$ confirmations use 40,000 each. Same-sample unbiased regional covariances divided by $n_j$ estimate the covariance of the means.

The additional grid uses pre-treatment means $(0,.25,1)/\sqrt r$,
$(0,.5,-.5)/\sqrt r$, and U, both innovations, $r=25,50,100,200$,
and 100,000 draws in each of 24 cells. Two independent
separated-normal/positive-fit-Gamma confirmations use 200,000 each.

We compare conventional, estimated-covariance selective and true-covariance selective 95\% intervals. True-covariance inference is exact for normal observations and Gaussian-reference for Gamma. A selective interval covers when its true-target pivot lies in $[.025,.975]$. Marginal coverage averages over labels; conditional coverage uses draws selecting $s$. For estimated coverages $\widehat p,\widehat p_s$, total draws $R$ and selected draws $N_s$, the MCSEs are $\sqrt{\widehat p(1-\widehat p)/R}$ and $\sqrt{\widehat p_s(1-\widehat p_s)/N_s}$. Coverage is undefined for $N_s=0$.

For selected $s$ and other candidate $j$, put $d_k=\bar V_0-\bar V_k$ and $v=\widehat\Var(\bar H_0)+\widehat\Var(\bar H_s)$. The slopes of $d_s$ and $d_j$ along the ATT conditioning line are
\[
 b_s=\frac{\widehat\Cov(\bar V_0,\bar H_0)+\widehat\Cov(\bar V_s,\bar H_s)}v,
 \qquad b_j=\frac{\widehat\Cov(\bar V_0,\bar H_0)}v.
\]
At displacement $\Delta$ from $T_s$, acceptance is
\[
 [(d_j-d_s)+(b_j-b_s)\Delta]
 [(d_j+d_s)+(b_j+b_s)\Delta]\geq0.
\]
Real roots enumerate the complete accepted union; constant factors add no roots, and the recorded priority resolves boundary ties. Same-tail normal differences and interval sums use log probabilities and log-sum-exp. The true-covariance reference recomputes both direction and accepted set.

\subsection{Endpoint errors and interval-length dispersion}\label{app:endpoint-details}
The endpoint experiment uses T0 and U under both innovations at $r=25,100,400$, with 1,000 draws per cell and 12,000 draws in total. Both selective pivots use every component of the possibly disconnected accepted union and endpoint CDF levels $.975$ and $.025$.

Main Section~\MainRef{sec:endpoint-sim} reports decreasing endpoint error and persistent selective-to-conventional length ratios for normal complete fit. Gamma T0 at $r=25$ gives coverage $94.88\%$ and $94.39\%$ in independent 5,000- and 40,000-draw runs.

For normal complete fit, median selective-to-conventional length ratios at $r=25,100,400$ are $1.536,1.563,1.556$, while their 95th percentiles are $9.156,12.307,11.133$ and 99th percentiles $26.110,120.595,44.521$. The fractions above five are $11.2\%,12.9\%,12.0\%$. Both endpoints are finite in all 1,000 draws of each normal complete-fit cell. These empirical quantiles display upper-tail dispersion.


\subsection{Four-period losses and effect-directed rule search}\label{simv3:acd-full}
The four-period loss-comparison and low-dimensional rule-search designs use a fixed orthonormal zero-sum basis
\[
 A=\begin{pmatrix}
 1/\sqrt2&1/\sqrt6&1/\sqrt{12}\\
 -1/\sqrt2&1/\sqrt6&1/\sqrt{12}\\
 0&-2/\sqrt6&1/\sqrt{12}\\
 0&0&-3/\sqrt{12}
 \end{pmatrix},\quad
 \Gamma=\begin{pmatrix}I_3&b\\b'&1\end{pmatrix},\quad b=(.55,-.30,.20)'.
\]
Generate $(D_{ji}',H_{ji})'=\mu_j+C_\Gamma e_{ji}$ with lower-triangular $C_\Gamma C_\Gamma'=\Gamma$ and sample sizes $(r,2r,3r)$. For an auxiliary baseline $B$, the panel's pre-treatment path is $B\mathbf1_4+AD$, and untreated post-treatment outcome is $B+H$. Every $H$ mean is zero. The local profiles are $v_0=0$, $v_1=(1.5,-1.5,0,0)'/\sqrt r$, and $v_2=(.8,.8,-.8,-.8)'/\sqrt r$, giving $\mu_{j,D}=A'v_j$. The unaveraged L2 population scores are $4.5/r$ and $2.56/r$, while L1 scores are $3/\sqrt r$ and $3.2/\sqrt r$.

For loss comparisons, apply losses to $A(\bar D_0-\bar D_j)$. We use L2, L1, and three Huber losses with thresholds $1/\sqrt r$, $.1/\sqrt r$, and 1 in original units. Complete fit/local profiles, normal/$\xi_{.5}$ innovations, and $r=50,200$ give eight designs of 40,000 draws, where
\[
 \xi_\lambda=\frac{Z+\lambda(Z^2-1)/\sqrt2}{\sqrt{1+\lambda^2}},\qquad Z\sim N(0,1).
\]
L2/L1 choose different candidates in $6.26$--$6.87\%$ of draws. Local-skewed candidate-2 feasible coverage across L2, L1 and Huber with threshold $1/\sqrt r$ is $94.34$--$94.36\%$ at $r=50$ and $94.89$--$94.95\%$ at $r=200$.

\subsection{The complete event for recorded rule search}\label{appv3:rule-search-full}
With priorities $1<\cdots<R$ and fixed $\varsigma\in\{-1,1\}$, the event for reporting rule $r$ and its control $s$ is
\[
 E_{r,s}=\{S_r(Y)=s\}\cap
 \bigcap_{u<r}\{\varsigma T_r(Y)>\varsigma T_u(Y)\}\cap
 \bigcap_{u>r}\{\varsigma T_r(Y)\ge\varsigma T_u(Y)\}.
\]
For each tuple $(s_1,\ldots,s_R)$ with $s_r=s$, intersect $\{S_u(Y)=s_u\}$ over all rules and substitute $T_u(Y)=c_{s_u}'Y$ in the comparisons. Each intersection is defined by polynomial signs because effect comparisons are linear. The finite union over tuples and accepted loss pieces gives the complete event, with fixed complexity for a fixed menu. Retain every union component on the reported contrast's conditioning line.

The known-covariance Gaussian local experiment uses Section~\ref{simv3:acd-full}'s profiles and covariance, $r=50$, and L2, L1, and three Huber losses with thresholds $.25/\sqrt r$, $.5/\sqrt r$, $1/\sqrt r$. Each search direction has 40,000 paired draws. For the larger-effect L1/candidate-2 result in Main Section~\MainRef{simv3:search}, full-event minus within-rule coverage is $19.37$ percentage points (paired MCSE $1.97$). An independent repetition gives within-rule/full-event coverage of $76.28\%$/$95.94\%$ among 468 reports. Smaller-effect L1/candidate-1 coverage is $81.86\%$ ($1.37$)/$94.71\%$ ($.79$) among 794 reports. The six-design grid includes simultaneous-menu comparisons; some Huber selection counts are below 500.



\section{Comparisons of inferential guarantees}\label{cmpv2:supp}
Write $E_j=\{S(Y)=j\}$ for control choice, $A_j$ for candidate $j$'s own
pretest acceptance, and $N_j=\ind\{\tau\notin I_j(Y)\}$. All candidates identify
the same ATT $\tau$ under parallel trends. For $p_j=P(E_j)>0$,
\begin{align}
 P(N_j=1\mid E_j)&=P(N_j=1)+\frac{\Cov(N_j,\ind_{E_j})}{p_j},
 \label{cmpv2:conditional-gap}\\
 P\{\tau\notin I_{S(Y)}(Y)\}&=\alpha+\sum_j\Cov(N_j,\ind_{E_j}),
 \quad\text{if }P(N_j=1)=\alpha\text{ for every }j.
 \label{cmpv2:marginal-gap}
\end{align}
Expand $\E(N_j\ind_{E_j})$ and sum over the partition to obtain these identities. Conditioning on $A_j$, for $P(E_j\cap A_j)>0$, gives
\[
 P(N_j=1\mid E_j,A_j)=P(N_j=1\mid A_j)+
 \frac{\Cov(N_j,\ind_{E_j}\mid A_j)}{P(E_j\mid A_j)}.
\]
Thus control choice can change coverage even after correcting for the pretest.

\noindent\textit{Roth: pretest acceptance and candidate selection.}
\citet[Propositions~4--5]{supp:Roth2018Pretest} conditions on the polyhedral event $A\widehat\beta\le b$, for a fixed comparison's pretest;
\citet{supp:Roth2022} is the published version. Our event also compares candidate fits through quadratic or polynomial inequalities. Correcting for $A_j$ alone omits $E_j$. Write $\phi,\Phi$ for
the standard normal density and CDF.

\begin{proposition}[Pretest correction does not remove control selection]
\label{cmpv2:own-pretest}\label{prop:pool_pretest_nontransfer}
In \MainEqRef{cmpv2:latent}, let $S=\arg\min_{j=1,2}V_j^2$ with fixed tie
priority, $A_j=\{|V_j|\le c\}$, $c>0$, and $z=\Phi^{-1}(1-\alpha/2)$.
For the event-study vector
$(\widehat\beta_{\mathrm{post},j},\widehat\beta_{\mathrm{pre},j})
=(T_j+V_j/2,V_j)$, the contrast $(1,-1/2)'$ has estimate $T_j$.
Its Gaussian pretest-conditional interval $I_j^{\mathrm{pre}}=T_j\pm z$
has coverage $1-\alpha$ given $A_j$, but
\begin{align}
 P\{\tau\in I_j^{\mathrm{pre}}\mid S=j,A_j\}
 &=\frac{2\int_0^\infty\phi(v)H(\min\{v,c\})g_\rho(v;z)\,dv}
 {a_c-a_c^2/2}<1-\alpha,\label{eq:pretest_pool_coverage}\\
 H(v)&=2\Phi(v)-1,\qquad a_c=H(c),\notag\\
 g_\rho(v;z)&=\Phi\!\left(\frac{z-\rho v}{\sqrt{1-\rho^2}}\right)
 -\Phi\!\left(\frac{-z-\rho v}{\sqrt{1-\rho^2}}\right).\notag
\end{align}
At $(\alpha,c,\rho)=(.05,z,.9)$, coverage is $0.9077037785$ and
$P(S=j,A_j)=.49875$. For $c=z$,
\[
 \lim_{\rho\uparrow1}P\{\tau\in I_j^{\mathrm{pre}}\mid S=j,A_j\}
 =\frac{1-\alpha}{1+\alpha}.
\]
Conditioning on the full event gives exact coverage.
\end{proposition}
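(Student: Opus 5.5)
The plan is to reduce everything to explicit one-dimensional normal integrals by exploiting the independence structure of \eqref{cmpv2:latent}, treating $j=1$ throughout; the case $j=2$ follows by the symmetry $Z_1\leftrightarrow Z_2$, $e_1\leftrightarrow e_2$.

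\emph{Pretest-conditional interval.} Since $R_1=\rho Z_2+\sqrt{1-\rho^2}e_1$ and $V_1=Z_1$, the contrast estimate $T_1=(1,-1/2)(\widehat\beta_{\mathrm{post},1},\widehat\beta_{\mathrm{pre},1})'$ has $\Cov(T_1,V_1)=0$. Hence Roth's conditioning direction is orthogonal to the pretest coordinate, so his truncation set for the polyhedral event $A_1=\{-c\le V_1\le c\}$ is all of $\R$. His interval is therefore the ordinary $T_1\pm z$. Independence of $T_1$ and $V_1$ gives coverage $1-\alpha$ given $A_1$.

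\emph{Coverage given $S=1$ and $A_1$.} Outside a null tie, $S=1$ iff $|Z_1|<|Z_2|$. Condition on $Z_2=v$. Then $R_1\sim N(\rho v,1-\rho^2)$ independently of $Z_1$, so the probability that $|R_1|\le z$ is $g_\rho(v;z)$. The event $\{|Z_1|<|v|,\ |Z_1|\le c\}$ has probability $H(\min\{|v|,c\})$. Integrating over $v$ and using evenness of $g_\rho$ and $H(\min\{|\cdot|,c\})$ gives the numerator $2\int_0^\infty\phi(v)H(\min\{v,c\})g_\rho(v;z)\,dv$. For the denominator, condition instead on $Z_1=u$, which gives
\[
P(S=1,A_1)=\int_{-c}^{c}\phi(u)\{1-H(|u|)\}\,du=a_c-\int_0^c H\,dH=a_c-a_c^2/2 .
\]
This proves \eqref{eq:pretest_pool_coverage} as an identity. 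At $c=z$, $\alpha=.05$ it also gives $P(S=j,A_j)=.95-.95^2/2=.49875$.

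\emph{Strict inequality.} Taking $c=\infty$ in the denominator computation shows that $w(v)=2\phi(v)H(\min\{v,c\})/(a_c-a_c^2/2)$ is a probability density on $(0,\infty)$. Against the half-normal density $2\phi(v)$ on the same range, $w$ has likelihood ratio proportional to $H(\min\{v,c\})$, which is nondecreasing and nonconstant. The function $g_\rho(\cdot;z)$ is strictly decreasing on $(0,\infty)$, because the mass of a fixed symmetric interval under a normal law falls as its center $\rho v$ moves away from $0$. Chebyshev's association inequality (equivalently, first-order stochastic dominance of $w$ over the half-normal) then gives
\[
\int w\,g_\rho<\int_0^\infty 2\phi\,g_\rho=P(|R_1|\le z)=1-\alpha .
\]

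\emph{Numerical value and the limit.} The value $0.9077037785$ at $(.05,z,.9)$ is a one-dimensional quadrature of the displayed formula. For $\rho\uparrow1$ with $c=z$, $g_\rho(v;z)\to\ind\{v<z\}$ for $v\ne z$, so dominated convergence applies. The numerator tends to $2\int_0^z\phi H\,dv=\int_0^z H\,dH=H(z)^2/2=(1-\alpha)^2/2$. The ratio is
\[
\frac{(1-\alpha)^2/2}{(1-\alpha)-(1-\alpha)^2/2}=\frac{1-\alpha}{1+\alpha}.
\]

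\emph{Exactness under full-event conditioning.} This is the argument at the end of Appendix~\ref{app:selection-example}. In the known-covariance Gaussian model, the residual of $T_1$ contains $V_1$ exactly, because $\Cov(T_1,V_1)=0$. Hence $A_1$ is a function of the residual, and $E_1\cap A_1$ restricted to the conditioning line is the same full fiber as $E_1$ alone on residuals with $|V_1|\le c$. Lemma~\ref{lem:gaussian} applied to $E_1\cap A_1$ then gives exact coverage.

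The main obstacle is the strict inequality. The only delicate points are verifying strict monotonicity of $g_\rho$ and recognizing the normalizing identity that turns the comparison into a stochastic-dominance statement. The remaining steps are routine conditioning calculations.
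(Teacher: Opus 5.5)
Your proposal is correct and follows essentially the same route as the paper. Both arguments condition on the other candidate's pre-treatment error to obtain the numerator and compute $P(S=1,A_1)=a_c-a_c^2/2$. Strict undercoverage then comes from a negative-covariance (Chebyshev) argument that pairs the increasing factor $H(\min\{v,c\})$ with the strictly decreasing $g_\rho$, against $\E g_\rho(|V_2|;z)=1-\alpha$. The $\rho\uparrow1$ limit uses dominated convergence, and exactness uses residual-measurability of $A_1$ together with the full fiber.

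One small slip: the normalization $\int_0^\infty 2\phi(v)H(\min\{v,c\})\,dv=a_c-a_c^2/2$ comes from your numerator computation with the coverage factor dropped ($g_\rho\equiv1$), since both sides then equal $P(S=1,A_1)$. It does not come from setting $c=\infty$, which only gives $P(S=1)=1/2$.
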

\begin{proof}
Independent standard normal $R_j,V_j$ give covariance
$\left(\begin{smallmatrix}5/4&1/2\\1/2&1\end{smallmatrix}\right)$:
the contrast has variance one and zero covariance with $V_j$, so pretest correction leaves $T_j$ untruncated.
Appendix~\MainRef{app:selection-example} verifies parallel trends and positive definite outcome covariance.

Given $V_2=v$, $\{S=1,A_1\}=\{|V_1|\le\min(|v|,c)\}$ has probability $H(\min(|v|,c))$ and coverage $g_\rho(|v|;z)$, yielding \eqref{eq:pretest_pool_coverage} because
$P(S=1,A_1)=\frac12P\{\min(|V_1|,|V_2|)\le c\}=a_c-a_c^2/2$.
For $0<\rho<1$, $g_\rho(v;z)$ strictly decreases and $H(\min(v,c))$
increases on $(0,c)$. Their negative covariance and $E g_\rho(|V_2|;z)=1-\alpha$ give strict undercoverage. At $c=z$, the numerator tends to
$P\{|V_1|\le|V_2|\le z\}=(1-\alpha)^2/2$, while the denominator is
$(1-\alpha^2)/2$, giving the limit.

The Gaussian residual fixes $V_1$ and $r=V_2-\rho T_1$, making $A_1$
residual-measurable. The complete fiber in Appendix~\MainRef{app:selection-example} and
Lemma~\MainRef{lem:gaussian} give exact coverage conditional on the residual and
$(S=1,A_1)$. Exchangeability gives candidate 2's claims.
\end{proof}
Here $P(S=1\mid V_1=x)=2\{1-\Phi(|x|)\}\in(0,1)$ almost surely:
control choice cannot be specified from candidate 1's own pretest alone.

\noindent\textit{Mikhaeil--Harshaw: separation from a test threshold.}
\citet[Theorem~5.1]{supp:MikhaeilHarshaw2026} construct ATT intervals for a given
comparison under conditional extrapolation and, among other assumptions,
$S_{\mathrm{pre}}\le M-s_n$, $\sqrt n\,s_n\to\infty$. Since
$\widehat S_{\mathrm{pre}}-S_{\mathrm{pre}}=O_p(n^{-1/2})$, their acceptance
event $A_n=\{\widehat S_{\mathrm{pre}}<M\}$ satisfies
\[
 P(A_n^c)\le P\{|\widehat S_{\mathrm{pre}}-S_{\mathrm{pre}}|\ge s_n\}\to0,
 \qquad |P(B_n\mid A_n)-P(B_n)|\le P(A_n^c)
\]
for every event $B_n$ when $P(A_n)>0$. Acceptance thus becomes asymptotically certain. At perfect fit,
$\widehat d_j=Z_j/\sqrt n$ gives
$S=\arg\min_j\widehat d_j^2=\arg\min_j Z_j^2$.
For two independent standard normal $Z_j$, $P(E_j)=1/2$ for every $n$.
Ranking remains random as precision increases. Their intervals estimate covariance and a trend-bias bound; their theorem does not condition on this candidate-pool choice.

\noindent\textit{Rambachan--Roth: trend restrictions and control choice.}
Write $T_j=\tau+\Delta_j+\varepsilon_j$, with DiD bias $\Delta_j$ from
nonparallel untreated trends. \citet{supp:RR2023} restrict the trend component in
$\beta_j=(0,\tau_{\mathrm{post}}')'+\delta_j$ to $\delta_j\in\mathcal D$,
accounting for identification uncertainty and sampling error in inference on
$\ell'\tau_{\mathrm{post}}$. We require the law of $\varepsilon_j\mid E_j$:
even with every $\Delta_j=0$, $\Cov(N_j,\ind_{E_j})$ can be nonzero
(Proposition~S3). Their conditional moment-inequality test conditions on an
optimizing dual vertex and orthogonal residual, not control choice $E_j$;
its guarantee does not establish ATT coverage conditional on $E_j$.

\noindent\textit{AKM: Gaussian winner scores and simultaneous projection.}
\citet{supp:AKM2024} do not study fit-based DiD control selection. We examine their procedures in our ATT problem. Applying their conditional winner
procedure requires a joint Gaussian approximation for selection scores and target estimates. At perfect fit, squared-fit scores prevent direct Gaussian substitution.

\begin{proposition}[Oracle score moments do not justify Gaussian winner inference]
\label{cmpv2:nongaussian}\label{prop:oracle_score_nonGaussian}
Let $\sqrt n(Y_n-\mu_n)\Rightarrow Z\sim N(0,V)$, with fixed
$B_j$, $\sqrt nB_j\mu_n\to h_j$, and $B_jVB_j'\ne0$. Then
\[
 n\|B_jY_n\|^2\Rightarrow\|h_j+B_jZ\|^2,
 \qquad \sqrt n\|B_jY_n\|^2\to_p0.
\]
The first limit is nonconstant and non-Gaussian. In \MainEqRef{cmpv2:latent}, let
$\widehat V_{n,j}=V_j/\sqrt n$, $\widehat T_{n,j}=\tau+R_j/\sqrt n$,
and define the winning scores $X_{n,j}=-n\widehat V_{n,j}^2=-V_j^2$.
For every $n$ and $i,j$,
\[
 \operatorname{Cov}(X_{n,i},R_j)=0,\qquad
 \sup_x\left|P\!\left\{\frac{X_{n,j}+1}{\sqrt2}\le x\right\}-\Phi(x)\right|
 \ge1-\Phi(1/\sqrt2)>0.2397.
\]
The Gaussian surrogate with exact means and covariance of
$(X_{n,1},X_{n,2},R_1,R_2)$ makes scores independent of targets.
Its winner interval $\widehat T_{n,S}\pm z/\sqrt n$ has actual marginal
and candidate-conditional coverage $C_\rho<1-\alpha$ from
\MainEqRef{cmpv2:quadrature}. Moreover,
\[
 E(R_j^2\mid S=j)=1+\frac{2\rho^2}{\pi}>1,
 \qquad \lim_{\rho\uparrow1}C_\rho=(1-\alpha)^2.
\]
\end{proposition}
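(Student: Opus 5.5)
The plan has four parts: the two limit statements for the squared-fit scores, the non-Gaussianity of the winning scores in the latent example, the independence of scores and targets, and the coverage of the surrogate winner interval. The argument stays in the latent example \MainEqRef{cmpv2:latent}, where $V_j=Z_j$ and $R_j$ depend on $(Z_1,Z_2)$ only linearly.

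\emph{Limit statements.} Write $\sqrt nB_jY_n=\sqrt nB_j\mu_n+B_j\sqrt n(Y_n-\mu_n)$. The continuous mapping theorem and Slutsky's theorem give $\sqrt nB_jY_n\Rightarrow h_j+B_jZ$. The continuous map $x\mapsto\|x\|^2$ then gives the first limit. The second follows because $\sqrt n\|B_jY_n\|^2=n^{-1/2}\cdot n\|B_jY_n\|^2=n^{-1/2}O_p(1)$.

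For the limit law:
\begin{itemize}
\item It is nonconstant because $B_jVB_j'\neq0$, so $h_j+B_jZ$ is a nondegenerate Gaussian vector in some direction.
\item It is supported on $[0,\infty)$.
\item A nonconstant Gaussian has full support on $\R$, so the limit cannot be Gaussian.
\end{itemize}

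\emph{The winning scores.} Here $X_{n,j}=-V_j^2$ with $V_j\sim N(0,1)$, so $X_{n,j}$ has mean $-1$ and variance $2$. Then $(X_{n,j}+1)/\sqrt2\le1/\sqrt2$ almost surely, so its CDF equals one at $x=1/\sqrt2$. The Kolmogorov distance is therefore at least $1-\Phi(1/\sqrt2)$, and a numerical evaluation gives $1-\Phi(0.7071)\approx0.2398>0.2397$.

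\emph{Independence of scores and targets.} The covariance claim uses Gaussian odd moments. $\operatorname{Cov}(V_i^2,R_j)$ is a linear combination of $\E(Z_i^2Z_k)$ and $\E(Z_i^2e_j)$, all of which vanish. The Gaussian surrogate with these exact means and covariances has a block-diagonal covariance between scores and targets. Joint Gaussianity then makes the surrogate scores independent of $(R_1,R_2)$. The conditional winner interval under the surrogate therefore has no truncation and reduces to $\widehat T_{n,S}\pm z/\sqrt n$.

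\emph{Actual coverage of the surrogate interval.} Its actual coverage is the probability that $|R_S|\le z$. This is exactly the computation of Appendix~\MainRef{app:selection-example}, giving $C_\rho$ in \MainEqRef{cmpv2:quadrature}.

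Candidate-conditional coverage equals the same value. The map $(Z_1,Z_2,e_1,e_2)\mapsto(Z_2,Z_1,e_2,e_1)$ swaps the labels, so each label has probability $1/2$ and the same conditional coverage.

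Strict undercoverage $C_\rho<1-\alpha$ is obtained by writing $R_S=\rho V_{3-S}+\sqrt{1-\rho^2}\,e$ and using three facts:
\begin{itemize}
\item $|V_{3-S}|=\max(|Z_1|,|Z_2|)$ stochastically dominates $|N(0,1)|$.
\item The conditional coverage $g_\rho(x;z)$ is strictly decreasing in $x>0$.
\item At a standard normal argument, $\E g_\rho(|N|;z)=1-\alpha$.
\end{itemize}
Strict dominance then gives $C_\rho<1-\alpha$.

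\emph{Variance inflation and the $\rho\uparrow1$ limit.} For the variance, $\E(R_S^2\mid S=j)=\rho^2\E\max(Z_1^2,Z_2^2)+(1-\rho^2)$. The needed moment is $\E\max(Z_1^2,Z_2^2)=1+\E|Z_1^2-Z_2^2|/2$. Writing $Z_1^2-Z_2^2=(Z_1-Z_2)(Z_1+Z_2)=2UW$ with $U,W$ independent $N(0,1)$ gives $\E|UW|=2/\pi$. Hence $\E\max(Z_1^2,Z_2^2)=1+2/\pi$, and $\E(R_S^2\mid S=j)=1+2\rho^2/\pi$. The limit $\lim_{\rho\uparrow1}C_\rho=(1-\alpha)^2$ comes from dominated convergence in \MainEqRef{cmpv2:quadrature}, as in the Appendix.

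The main obstacle is stating the surrogate winner procedure carefully enough that its interval is exactly the untruncated one. This requires showing that independence under the surrogate makes AKM's conditional law for the target a plain normal, regardless of how the scores rank the candidates.
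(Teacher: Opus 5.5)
Your proposal is correct and follows essentially the same route as the paper: Slutsky and continuous mapping give the score limits, and the bound $(1-V_j^2)/\sqrt2\le 1/\sqrt2$ gives the Kolmogorov distance. Vanishing Gaussian odd moments give zero score--target covariance and hence no truncation in the surrogate; exchangeability handles candidate-conditional coverage; factoring $(Z_1-Z_2)(Z_1+Z_2)$ into independent normals gives $E\max(Z_1^2,Z_2^2)=1+2/\pi$; and $\rho\uparrow1$ yields $(1-\alpha)^2$.

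The only cosmetic difference is in proving $C_\rho<1-\alpha$: you use the dominance $\max(|Z_1|,|Z_2|)\ge|Z_1|$ together with strict monotonicity of $g_\rho$, whereas the paper phrases the same fact as a negative covariance between the decreasing $g_\rho$ and the increasing selection weight $H$.
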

\begin{proof}
Slutsky's theorem gives the score limits. A nonconstant squared Gaussian norm is nonnegative and cannot be Gaussian. Vanishing centered
Gaussian third moments give zero score--target covariance. The score $(1-V_j^2)/\sqrt2\le1/\sqrt2$ gives the CDF-distance bound. In the Gaussian
surrogate, zero covariance implies independence and no target truncation.

Under selection, $S=1$ weights the density of $V_2$ by $H(|V_2|)$; $P(S=1)=1/2$ gives \MainEqRef{cmpv2:quadrature} and the preceding
monotonic-covariance argument. Also,
\[
 E\{V_2^2\mid |V_1|\le|V_2|\}=E\max(V_1^2,V_2^2)
 =1+\tfrac12E|(V_1-V_2)(V_1+V_2)|=1+\frac2\pi.
\]
The factors are independent $N(0,2)$; independence of $e_1$ yields
the second moment. As $\rho\uparrow1$, the selected absolute error tends
to $\max(|V_1|,|V_2|)$, giving the limit. Exchangeability gives candidate 2's coverage, and standardization removes $n$.
\end{proof}
Our full-fiber inference conditions on the actual polynomial event in the underlying Gaussian vector.

\begin{proposition}[AKM-based simultaneous projection and conditional ATT coverage]\label{cmpv2:projection-counterexample}
Modify \MainEqRef{cmpv2:latent} only by setting $V_1=\varepsilon Z_1$, $\varepsilon>0$, and retain $V_2=Z_2$, $T_j=\tau+R_j$, and $S=\arg\min_jV_j^2$. For the simultaneous projection procedure of \citet[Section V.A]{supp:AKM2024}, put
\[
 p_2=\frac2\pi\arctan\varepsilon,\qquad
 c_P=\Phi^{-1}\!\left(\frac{1+\sqrt{1-\alpha}}2\right).
\]
The intervals $I_j^P=T_j\pm c_P$ cover simultaneously with probability $1-\alpha$, hence have selected-target marginal coverage at least $1-\alpha$. At $\alpha=.05$,
\[
 \lim_{\varepsilon\downarrow0}\lim_{\rho\uparrow1}
 P\{\tau\in I_2^P\mid S=2\}
 =1-e^{-c_P^2/2}=.9179899838<.95.
\]
Conditional undercoverage therefore occurs at fixed nonsingular parameters with $p_2>0$.
\end{proposition}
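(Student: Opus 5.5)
The plan is to reduce everything to explicit Gaussian calculations in the modified latent model \eqref{cmpv2:latent}. Since $Z_1,Z_2,e_1,e_2$ are independent, $R_1=\rho Z_2+\sqrt{1-\rho^2}e_1$ and $R_2=\rho Z_1+\sqrt{1-\rho^2}e_2$ are independent standard normals. Setting $V_1=\varepsilon Z_1$ does not affect the $R_j$.

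\emph{Marginal claims.} Simultaneous coverage is
\[
P(|R_1|\le c_P,\ |R_2|\le c_P)=\{2\Phi(c_P)-1\}^2=1-\alpha,
\]
by the definition of $c_P$. The selected-target event $\{\tau\in I^P_S\}$ contains the simultaneous-coverage event, so marginal coverage is at least $1-\alpha$. For the selection probability, $S=2$ exactly when $Z_2^2<\varepsilon^2Z_1^2$, apart from the null tie, that is, when $|Z_2|<\varepsilon|Z_1|$. By rotational invariance of $(Z_1,Z_2)$, this is a pair of opposite angular sectors of total angle $4\arctan\varepsilon$. Hence $p_2=(2/\pi)\arctan\varepsilon>0$.

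\emph{Inner limit $\rho\uparrow1$, with $\varepsilon$ fixed.} The conditioning event $\{|Z_2|<\varepsilon|Z_1|\}$ does not depend on $\rho$ and has fixed positive probability $p_2$. As $\rho\uparrow1$, $R_2\to Z_1$ almost surely, and $P(|Z_1|=c_P)=0$. So $\ind\{|R_2|\le c_P\}\to\ind\{|Z_1|\le c_P\}$ almost surely, and bounded convergence gives
\[
\lim_{\rho\uparrow1}P(\tau\in I_2^P\mid S=2)
=\frac{\E[\ind\{|Z_1|\le c_P\}\,\pi_\varepsilon(Z_1)]}{\E\,\pi_\varepsilon(Z_1)},
\qquad \pi_\varepsilon(z)=2\Phi(\varepsilon|z|)-1 .
\]

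\emph{Outer limit $\varepsilon\downarrow0$.} This is the step needing care, because both the numerator and the denominator vanish. Divide both by $\varepsilon$. Pointwise, $\pi_\varepsilon(z)/\varepsilon\to2\phi(0)|z|$. The mean-value bound $\pi_\varepsilon(z)/\varepsilon\le2\phi(0)|z|$ supplies an integrable dominating function under $\phi$. Dominated convergence then shows that the ratio converges to
\[
\frac{\int_{|z|\le c_P}|z|\phi(z)\,dz}{\int|z|\phi(z)\,dz}=1-e^{-c_P^2/2}.
\]
This holds because $|z|\phi(z)$ integrates in closed form to $\sqrt{2/\pi}\,(1-e^{-c^2/2})$ over $|z|\le c$.

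\emph{Numerical check.} At $\alpha=.05$, $c_P=\Phi^{-1}(.98734\ldots)\approx2.237$, so $1-e^{-c_P^2/2}\approx.918<.95$. Finally, since the double limit is below $.95$, some fixed $\varepsilon>0$ and $\rho<1$ already give conditional coverage below $1-\alpha$. At these parameters the covariance is nonsingular, since $V_1=\varepsilon Z_1$ only rescales a coordinate in Appendix~\ref{app:selection-example}'s construction, and $p_2>0$.
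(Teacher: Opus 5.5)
Your proposal is correct and follows the same route as the paper. Both arguments use independence of $R_1,R_2$ for simultaneous coverage, angular symmetry for $p_2$, $R_2\to Z_1$ as $\rho\uparrow1$, and the small-$\varepsilon$ limit of the conditional density $\phi(x)\{2\Phi(\varepsilon|x|)-1\}/p_2$ toward $|x|e^{-x^2/2}/2$ under Gaussian domination; your division by $\varepsilon$ with the mean-value bound just makes explicit the domination step the paper states briefly.
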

\begin{proof}
Independent standard normal errors $R_1,R_2$ give simultaneous coverage, which implies coverage of the selected interval. Angular symmetry gives $p_2$ for $|Z_2|<\varepsilon|Z_1|$. The conditional density of $Z_1=x$ is $\phi(x)\{2\Phi(\varepsilon|x|)-1\}/p_2$. As $\rho\uparrow1$, $R_2\to Z_1$; as $\varepsilon\downarrow0$, this density tends to $|x|e^{-x^2/2}/2$. Gaussian domination permits integration over $[-c_P,c_P]$, giving the limit. Continuity yields fixed nonsingular parameters.
\end{proof}
For finite parameters, conditional coverage is
\begin{equation}\label{cmpv2:projection-quadrature}
 \frac2{p_2}\int_0^\infty\!\phi(x)\{2\Phi(\varepsilon x)-1\}
 \left\{\Phi\!\left(\frac{c_P-\rho x}{\sqrt{1-\rho^2}}\right)
 -\Phi\!\left(\frac{-c_P-\rho x}{\sqrt{1-\rho^2}}\right)\right\}dx.
\end{equation}
Lemma~\MainRef{lem:gaussian} gives full-event conditional coverage $1-\alpha$ in the same model.

Main Table~\MainRef{cmpv3:table-evidence}, panel B, compares selected-target marginal and candidate-conditional coverage at $\varepsilon=.1$, $\rho=.99$, using 500,000 draws.

\noindent\textit{Pretest comparison with nonzero own covariance.}
Main Table~\MainRef{cmpv3:table-evidence}, panel C, uses Section~2.5's design with nonzero covariance between each candidate's own pretest and ATT estimate, known covariance, and ATT zero, $S=\arg\min_jV_j^2$, and reports $|V_S|\le z_{.975}$. Pretest-only intervals condition on acceptance; full-event intervals also condition on control choice. It uses 1,000,000 draws.


\section{Implementation and calibration of the joint correction}\label{ho:computation}\label{simv3:B-details}
For each trial ATT and selected control, evaluate \MainEqRef{ho:test} or \MainEqRef{ho:stable-test}. Invert \MainEqRef{ho:set} with the same $U$ across trial values, retaining every component.

Draw $X_i^*$ iid from the empirical distribution and use unbiased covariances. Put $Z^*=\sqrt n(\bar X^*-\bar X)$ and $\Delta V^*=\widehat V^*-\widehat V$. For the unprojected centered third moment $\widetilde\kappa$,
\begin{equation}\label{ho:bootstrapmoments}
 \E_*Z_a^*Z_b^*Z_c^*=\widetilde\kappa_{abc}/\sqrt n,
 \qquad \E_*Z_a^*\Delta V_{bc}^*=\widetilde\kappa_{abc}/\sqrt n.
\end{equation}
Thus the exact paired-bootstrap representation is
\begin{equation}\label{ho:bootstrapop}
 n^{-1/2}\mathcal A_{\widetilde\kappa}g_s
 =\tfrac16\E_*D_x^3g_s[Z^*,Z^*,Z^*]
       +\E_*D_xD_Wg_s[Z^*,\Delta V^*].
\end{equation}
The iid centered $B_i=X_i^*-\bar X$ satisfy $\widehat V^*=(n-1)^{-1}(\sum_iB_iB_i'-n\bar B\bar B')$ and $\E_*\bar B_a\widehat V^*_{bc}=\widetilde\kappa_{abc}/n$. Subtracting $\widehat V$ contributes zero. Evaluate the empirical identities algebraically, then project the tensor.

The regional benchmark uses $(n_0,n_1,n_2)=(r,2r,3r)$, $r=25,50,100,200$, covariance matrices \eqref{eq:sim-cov}, and zero candidate ATTs. Pre-treatment means are zero (B0) or $(0,.25,1)/\sqrt r$ (B1). Innovations are $\xi_\lambda=\{Z+\lambda(Z^2-1)/\sqrt2\}/\sqrt{1+\lambda^2}$, $Z\sim N(0,1)$, $\lambda=0,.5$. Each of sixteen designs uses 10,000 draws. R0 uses $\alpha+g_s$; Rsk, Rmc and Rfull subtract the skewness, mixed and combined terms. Rst uses \MainEqRef{ho:stable-test}; Hard is the first-order full-fiber procedure. Preset A is $(h,\eta,\tau,\gamma)=(.15,.005,.10,.05)$. Randomized rejection probabilities integrate out $U$.

The original benchmark uses a known population covariance center. For B1 candidate 1 and $\xi_{.5}$, the scaled joint mixed coefficient $.02975$ exceeds the skewness coefficient $-.00946$ in absolute value; their sum is $.02029$. B0 coefficients vanish by symmetry. At $r=200$, Rfull rejection ranges from $8.66$ to $18.06\%$, versus R0's $5.08$--$5.21\%$, and $18.92$--$43.61\%$ of raw probabilities fall outside $[0,1]$. These results motivate stabilization. The tensor radius is $K=1+M^{3/8}$, using $\E\|Ce\|^8\le\|C\|_{\rm op}^8d^4\E|e_1|^8$.

Write $p_s^0=\alpha+g_s$ and $q_s^{\rm raw}=p_s^0-Q_s$ as in \MainEqRef{ho:stable-test}. Exactly,
\begin{align}\label{ho:clipping-decomposition}
 \E[\ind_{E_s}\{\Pi_{[0,1]}(q_s^{\rm raw})-\alpha\}]
 &=\E[\ind_{E_s}(p_s^0-\alpha)]-\E[\ind_{E_s}Q_s]\notag\\
 &\quad+\E[\ind_{E_s}(-q_s^{\rm raw})_+]
       -\E[\ind_{E_s}(q_s^{\rm raw}-1)_+].
\end{align}
The identity $\Pi_{[0,1]}(q)=q+(-q)_+-(q-1)_+$ gives the decomposition. Under Proposition~\MainRef{ho:stable-prop}'s normal-sampling conditions, residual reflection gives $\E[\ind_{E_s}Q_s]=0$, so the clipped--baseline difference equals lower minus upper clipping amounts. Record their magnitudes and frequencies; divide joint errors by $P(E_s)$ for conditional errors.

Uniform rejection-probability error $\epsilon_{\rm num}$ adds at most $p_s\epsilon_{\rm num}$ jointly and $\epsilon_{\rm num}$ conditionally under the same-$U$ coupling, where $p_s=P(E_s)$.

\subsection{Fixed spectral-map diagnostic}\label{ho:feasible-map-diagnostic}
The fixed smooth spectral map acts on each two-coordinate covariance block, with range $[.01,10]$ and identity interval $[.02,5]$ containing the generating spectra. Derivative checks cover active saturation, off-diagonal directions and repeated eigenvalues.

\subsection{Observed police mass gate and fixed reduction}\label{appv3:police-ho}
At ATT zero, the observed police smooth mass $1.24135\times10^{-16}<\eta=.005$ gives $g_s=Q_s=0$ and both corrected rejection probabilities $.05$. Hard A-event mass on the same observed-residual line is $.00617671$. Proposition~\ref{rev:linear-reduction} preserves all losses, contrasts and hard fibers under a fixed full-row-rank reduction from dimension 36 to 16; all eight predetermined null presets retain mass below their cutoff.

The empirical-population check resamples iid nine-period block vectors within groups, equalizing mean changes under the null. Across 18,000 draws at original/doubled/quadrupled sizes, Hard null rejection is $6.85\%$ (MCSE $.56$ point), $5.90\%$ and $5.55\%$; original-size fixed-covariance Gaussian-reference size is $4.70\%$ ($.47$). A three-candidate-A-standard-error treated shift gives Hard/R0-A power $70.75\%$/$14.14\%$ in this resampling population.


\subsection{Independent validation of probability stabilization}\label{ho:new-validation}
Independent B1 validation of the population-centered benchmark uses samples $(200,400,600)$, with 50,000 normal and 100,000 skewed draws. Preset G was screened and fixed before validation. Methods share data; randomized rejection probabilities integrate out $U$. For preset A, stabilized skewed-label rejection rates are $5.23\%$ and $4.96\%$, versus Rfull's $12.53\%$ and $19.91\%$. Preset G's worst conditional error is $.74$ percentage point, versus A's $.23$. A separate 5,000-draw treated-shift experiment gives Hard/R0-A/Rst-A power $66.44\%$/$48.49\%$/$48.12\%$.

\paragraph*{Independent fixed-preset spectral-map validation.}
Before independent null draws at $(r,2r,3r)$, $r=50,200,800$, we fixed A and the population-free map above. Each normal or negative-skew $\xi_{-.5}$ cell uses 5,000 simulated samples, shared by Hard, R0 and Rst. Table~\ref{tab:ho-feasible-fresh} reports $r=200$. At $r=50$, negative-skew label-2 Rst size is $5.95\%$ (MCSE $.35$ point); at $r=800$, the four label-specific Rst rates are $4.87$--$5.11\%$. Saturation occurs in one $r=50$ negative-skew draw and nowhere else; tensor projection is inactive. All 30,000 simulated samples use this fixed specification.

For paired differences $d_b=q_{b,\mathrm{st}}-q_{b,0}$, report $\widehat\Delta_s=\overline d_s$ and $\widehat{\operatorname{MCSE}}=\operatorname{sd}(d_b:S_b=s)/\sqrt{N_s}$, $N_s\ge2$. Integrate out randomization using shared tuning, covariance map, trial value and data. Compare null sizes with $\alpha$; under alternatives, $\widehat\Delta_s$ measures power change.

\begin{table}[tbp]\centering\small
\caption{Independent fixed-A validation with the feasible spectral map at $r=200$}
\label{tab:ho-feasible-fresh}
\begin{tabular}{llrrrr}\toprule
Innovation & Label & $N_s$ & Hard & R0 & Rst\\\midrule
Normal & 1 & 2,996 & 4.54 (0.38) & 4.88 (0.24) & 4.95 (0.26)\\
Normal & 2 & 2,004 & 5.29 (0.50) & 5.32 (0.30) & 5.21 (0.31)\\
$\xi_{-.5}$ & 1 & 2,916 & 4.42 (0.38) & 4.66 (0.24) & 4.78 (0.25)\\
$\xi_{-.5}$ & 2 & 2,084 & 4.70 (0.46) & 4.69 (0.25) & 4.64 (0.28)\\\bottomrule
\end{tabular}
\par\smallskip\parbox{.96\linewidth}{\footnotesize
Null rejection rates are percentages; MCSEs are in percentage points. Results use 5,000 simulated samples per cell, fixed A and the map above. R0/Rst integrate out randomization and use sample-estimated covariance and third moments.}
\end{table}


\section{Covariance-estimator-specific mixed tensors}\label{ho:restored}\label{ho:generalcov-section}
\begin{proposition}[Influence representation]\label{ho:generalcov}
Under Assumption~\MainRef{ho:assumption}, suppose $(\varepsilon_i,\Psi_i)$ are iid within each triangular-array row, with symmetric $\Psi_i$, and the following bounds hold uniformly over the maintained class:
\begin{equation}\label{ho:influence}
 \begin{gathered}
 \widehat V-V=n^{-1}\sum_i\Psi_i+R_n,\quad
 \E\Psi_i=0,\quad \E\|\Psi_i\|^4\le M_\Psi,\\
 \E\|R_n\|^2\le Cn^{-2},\quad
 \E\|\widehat V-V\|^4\le Cn^{-2}.
 \end{gathered}
\end{equation}
Let $\mathcal M=\E(\varepsilon_i\otimes\Psi_i)$. Assume $\widehat\kappa,\widehat{\mathcal M}$ are deterministically bounded by fixed constants and each has mean-square estimation error at most $C/n$. Then Theorem~\MainRef{ho:main} holds with $\mathcal A_\kappa$ replaced by
\begin{equation}\label{ho:generaloperator}
 \mathcal A_{\kappa,\mathcal M}f=\tfrac16\kappa:D_x^3f+
                                  \mathcal M:D_xD_Wf
\end{equation}
and $\mathcal A_{\widehat\kappa}$ in \MainEqRef{ho:test} replaced by $\mathcal A_{\widehat\kappa,\widehat{\mathcal M}}$.
\end{proposition}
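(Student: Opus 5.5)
The argument follows the proof of Theorem~\ref{ho:main} and changes only the two places where the sample covariance entered: the joint mean--covariance expansion (Lemma~\ref{ho:jointlemma}) and the feasible correction estimate. Nothing else needs to be redone. The smoothed adjustment $g_s$, its support and derivative bounds in Lemma~\ref{ho:derivatives}, Gaussian centering (Proposition~\ref{ho:center}), the margin \eqref{ho:margin}, and the clipping argument all involve only $(x,W)$ and the regularization map $\mathcal R$. They are therefore unaffected by which covariance estimator is used.

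\emph{Step 1: expansion.} I would repeat the proof of \eqref{ho:jointexp} with the sample-covariance moment bounds \eqref{ho:samplebounds} replaced by \eqref{ho:influence}.
\begin{itemize}
\item Taylor expand in $W$ at $V$. The quadratic remainder is at most $C\E[\|\widehat V-V\|^2(1+|a'Z|^2)]$. By Cauchy--Schwarz, $\E\|\widehat V-V\|^4\le Cn^{-2}$, and $\E|a'Z|^4\le C$, this is $O(n^{-1})$.
\item Replace $\widehat V-V$ by $n^{-1}\sum_i\Psi_i$. The cost is $\E[\|R_n\|(1+|a'Z|)]\le Cn^{-1}$, again by Cauchy--Schwarz.
\item For each $i$, write $Z=Z_{-i}+\varepsilon_i/\sqrt n$ and expand $D_Wg(m+Z,V)[\Psi_i]$ in the mean argument. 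The zeroth-order term vanishes because $Z_{-i}$ is independent of $\Psi_i$ and $\E\Psi_i=0$. The first-order term is $n^{-1/2}\E D_xD_Wg(m+Z_{-i},V)[\varepsilon_i,\Psi_i]$. Its expectation factors through $\E(\varepsilon_i\otimes\Psi_i)=\mathcal M$, because $Z_{-i}$ is independent of the pair $(\varepsilon_i,\Psi_i)$; this is the single place where joint iid-ness of $(\varepsilon_i,\Psi_i)$ is used.
\item The second-order remainder is at most $Cn^{-1}\E[\|\Psi_i\|\|\varepsilon_i\|^2(1+|a'Z_{-i}|+\|\varepsilon_i\|/\sqrt n)]$. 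By H\"older, $\E\|\Psi_i\|^4\le M_\Psi$ and $\E\|\varepsilon_i\|^8\le M$ bound this uniformly.
\item Summing over $i$ and replacing $m+Z_{-i}$ by $m+G_0$ through the weighted $C^3$ comparison of Lemma~\ref{ho:meanlemma} costs $O(n^{-1/2})$ inside an $n^{-1/2}$ term.
\end{itemize}
Adding the unchanged skewness term from \eqref{ho:meanexpansion} gives
\[
\E g_s(x,\widehat V)=\E g_s(m+G_0,V)+n^{-1/2}\E\mathcal A_{\kappa,\mathcal M}g_s(m+G_0,V)+O(n^{-1}).
\]
By Proposition~\ref{ho:center}, the leading term is zero.

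\emph{Step 2: feasible correction and clipping.} The operator $\mathcal A_{\kappa,\mathcal M}$ is linear in $(\kappa,\mathcal M)$. Combining this with \eqref{ho:derivativebound} and the mean-square error bounds for $\widehat\kappa$ and $\widehat{\mathcal M}$ gives $\E|\mathcal A_{\widehat\kappa,\widehat{\mathcal M}}g-\mathcal A_{\kappa,\mathcal M}g|\le C/\sqrt n$. Here I use Cauchy--Schwarz against the quadratic growth in $|a'(x-m)|$. This error is $O(n^{-1})$ once multiplied by $n^{-1/2}$.

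The remaining replacements, of $\widehat V$ by $V$ and of $x$ by $m+G_0$ inside the correction, repeat those in the proof of Theorem~\ref{ho:main}. For these I use $\E\|\widehat V-V\|^2\le Cn^{-1}$, which follows from the fourth-moment bound. Deterministic boundedness of $\widehat\kappa$ and $\widehat{\mathcal M}$ gives $\E Q^2\le C/n$. The clipping bound $\epsilon_*^{-1}\E Q^2$ and the support argument on $E_{n,s}$ then carry over verbatim and yield \eqref{ho:jointbound}.

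The main obstacle is Step~1: making sure the influence remainder $R_n$ and the coupling between $\Psi_i$ and $\varepsilon_i$ do not produce any additional $n^{-1/2}$ term. The expansion must be organized so that only $\E(\varepsilon_i\otimes\Psi_i)$ survives at order $n^{-1/2}$, with all cross-$i$ terms removed by independence. The remaining steps are bookkeeping identical to the sample-covariance case, in which $\Psi_i=\varepsilon_i\varepsilon_i'-V$ and $\mathcal M=\kappa$.
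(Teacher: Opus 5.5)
Your proposal is correct and follows essentially the same route as the paper. The paper reruns the covariance expansion of Lemma~\ref{ho:jointlemma} with $\Psi_i$ in place of $\varepsilon_i\varepsilon_i'-V$, exactly as you do, and then reuses the tensor-accuracy, boundedness, support and clipping arguments of Theorem~\ref{ho:main}; your version spells out the bookkeeping (the quadratic remainder, the $R_n$ term by Cauchy--Schwarz, the vanishing leave-one-out zeroth-order term, the contraction with $\mathcal M$, and the H\"older bounds) that the paper's proof compresses into a few lines.
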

\begin{proof}
Equation~\eqref{ho:influence} bounds the quadratic covariance Taylor remainder by $C/n$; Cauchy--Schwarz bounds the linear $R_n$ term by $C/n$. In Lemma~\MainRef{ho:jointlemma}, replace $\varepsilon_i\varepsilon_i'-V$ by $\Psi_i$. Since $\E\Psi_i=0$, the leave-one-out zeroth-order term vanishes; the linear term contracts with $\mathcal M$. H\"older's inequality bounds the leave-one-out remainders using $\E\|\Psi_i\|\|\varepsilon_i\|^2$ and $\E\|\Psi_i\|\|\varepsilon_i\|^3$ under the fourth and eighth moments. The stated tensor accuracy, boundedness, and the original support and clipping arguments complete the $C/n$ bound. For sample covariance, $\Psi_i=\varepsilon_i\varepsilon_i'-V$ and $\mathcal M=\kappa$.
\end{proof}


\section{Interval behavior and conventional special cases}\label{supp:length}
Almost surely finite $O_p(n^{-1/2})$ restricted-normal intervals can have infinite conditional expected length. For example, let $Z_n\sim N(\theta,n^{-1})$ and condition on $Z_n>0$ at $\theta=0$. Infinite mean length also occurs for bounded quadratic fibers \citep[Proposition 1]{supp:KL2021}. Section~\ref{app:endpoint-details} reports length medians and upper quantiles; Section~\ref{v17rev:supp:localization} gives bounded localization with an explicit additional coverage-error budget.

\subsection{When conventional intervals are recovered}\label{rev:no-adjustment}
If the event selecting $s$ is the intersection of $p_k(y)=y'A_ky+d_k'y+e_k\ge0$, $A_k=A_k'$, and
\begin{equation}\label{rev:eq:invariance}
 A_kb_s=0,\qquad d_k'b_s=0\quad\text{for every }k,
\end{equation}
then $p_k(r+b_st)=p_k(r)$ for every $t$. Thus, for every selected residual $r$, the full fiber is $\R$ and the Gaussian selective and conventional intervals coincide. For centered squared-fit comparisons with at least two candidates, $A_{j,s}\Sigma c_s=0$ for all $j\ne s$ is equivalent to $B_j\Sigma c_s=0$ for all candidates: putting $\zeta_j=B_j\Sigma c_s$, each $\zeta_j$ is centered across pre-treatment periods, so the region-$j$ and region-$s$ blocks of $B_j'\zeta_j-B_s'\zeta_s$ force $\zeta_j=\zeta_s=0$; the converse follows immediately. Under Theorem~\MainRef{rev:endpoints}, this reference-covariance condition makes the feasible and conventional endpoints differ by $o_p(1)$ standard errors for the reported label.

The same endpoint convergence holds if $P(E_{n,s})\to1$. Along convergent subsequences of normalized atoms and covariance, any different limiting label away from polynomial boundaries would occupy an open set of positive Gaussian mass, contradicting this limit. Hence almost every limiting reference fiber is $\R$ apart from finitely many roots; Lemma~\MainRef{rev:inverse} and plug-in endpoint equivalence give the claim. Strict limiting population ranking with $\mu_n\to\mu$ and $\|L_n\|\to0$ is a sufficient case.

\subsection{Bounded localization of selective confidence sets}
\label{v17rev:supp:localization}
\begin{proposition}[Localization]
\label{v17rev:prop:localization}
Fix a finite label family $\mathcal S$, $0<\alpha,\beta_n<1$, and
$\epsilon_n\geq0$. For $P(E_s)=p_s>0$, suppose measurable random sets
$C_s,J_s$ satisfy
\[
 \left|P(E_s,\theta_s\in C_s)-(1-\alpha)p_s\right|\leq\epsilon_n,
 \qquad P(\theta_s\in J_s\ \forall s)\geq1-\beta_n.
\]
Then $C_s\cap J_s$ has joint coverage error at most
$\epsilon_n+\beta_n$. Its convex hull has conditional coverage at least
$1-\alpha-(\epsilon_n+\beta_n)/p_s$ and, if $J_s$ is an interval,
diameter at most $\operatorname{diam}(J_s)$.
\end{proposition}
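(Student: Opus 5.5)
Write $F_s=\{\theta_s\in C_s\cap J_s\}$. The plan is to compare the joint event $E_s\cap F_s$ with $E_s\cap\{\theta_s\in C_s\}$, using the hypothesis on $C_s$ to pin down the latter and the simultaneous hypothesis on $J_s$ to control the difference. Then we pass to conditional probabilities and to convex hulls.

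\emph{Lower bound for the joint coverage.} Let $G=\{\theta_{s'}\in J_{s'}\ \forall s'\}$. Since $\{\theta_s\in C_s\}\setminus F_s\subseteq\{\theta_s\notin J_s\}\subseteq G^c$, we get
\[
P(E_s,\theta_s\in C_s)-P(E_s\cap F_s)\le P(E_s\cap G^c)\le\beta_n.
\]
Combined with the hypothesis on $C_s$, this gives
\[
P(E_s\cap F_s)\ge(1-\alpha)p_s-\epsilon_n-\beta_n .
\]

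\emph{Upper bound and the "joint coverage error" claim.} Intersection only shrinks the set, so
\[
P(E_s\cap F_s)\le P(E_s,\theta_s\in C_s)\le(1-\alpha)p_s+\epsilon_n.
\]
Hence $\lvert P(E_s\cap F_s)-(1-\alpha)p_s\rvert\le\epsilon_n+\beta_n$. The simultaneous form of the hypothesis on $J$ is only needed for the lower bound, and only through $\theta_s\in J_s$ on $E_s$. Uniformity over the finite family $\mathcal S$ is automatic, because the same $\beta_n$ bounds every $s$.

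\emph{Conditional coverage and diameter of the hull.} Let $\operatorname{conv}(C_s\cap J_s)$ denote the convex hull. Since it contains $C_s\cap J_s$, the hull's joint coverage is at least $(1-\alpha)p_s-\epsilon_n-\beta_n$. Dividing by $p_s>0$ gives conditional coverage at least $1-\alpha-(\epsilon_n+\beta_n)/p_s$. If $J_s$ is an interval, it is convex, so $\operatorname{conv}(C_s\cap J_s)\subseteq J_s$ and the diameter bound follows immediately.

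The only point needing care is measurability of the hull's coverage event. On the real line, the hull of a set $A$ is the interval between $\inf A$ and $\sup A$. Both are measurable when $C_s\cap J_s$ is a finite union of intervals with measurable endpoints, as in our constructions. Otherwise one can use inner probability, which suffices for the lower bound. I expect this to be the main (minor) obstacle; the probabilistic bounds themselves are elementary.
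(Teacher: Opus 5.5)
Your proposal is correct and follows essentially the same argument as the paper. Intersecting with $J_s$ loses coverage only on $\{\theta_s\notin J_s\}$, which has probability at most $\beta_n$, while inclusion gives the upper side. The hull contains $C_s\cap J_s$, so dividing by $p_s$ gives the conditional bound, and convexity of an interval $J_s$ bounds the hull's diameter. The paper additionally adopts the convention that an empty hull has diameter zero and does not raise your measurability point.
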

\begin{proof}
Intersection loses coverage only on $\{\theta_s\notin J_s\}$, whose probability is at most $\beta_n$, giving the joint error bound. Inclusion and division by $p_s$ give the hull's lower coverage bound; containment in the interval $J_s$ bounds its diameter, taking the empty hull's diameter as zero. No independence between the sets and selection is required.
\end{proof}

For iid $X_i$, suppose $\theta_s=c_s'EX_i$ and known deterministic
bounds satisfy $\operatorname{Var}(c_s'X_i)\leq v_s$, $v_s>0$.
Choose odd $B\leq n$ with $B\geq8\log(|\mathcal S|/\beta_n)$.
For $b=\lfloor n/B\rfloor$, let $M_s$ be the median of the $B$
disjoint block means from the first $Bb$ observations, and set
\[
 J_s=[M_s-2\sqrt{v_s/b},\ M_s+2\sqrt{v_s/b}].
\]
Chebyshev bounds each block mean's failure probability by $1/4$; Hoeffding bounds each median's failure probability by $e^{-B/8}$, and a union bound over labels gives $|\mathcal S|e^{-B/8}\leq\beta_n$. With fixed $\mathcal S$ and
$\beta_n=n^{-2}$, choose $B=O(\log n)$, admissible eventually, obtaining
$\operatorname{diam}(J_s)=O(\sqrt{v_s\log n/n})$. Selection and effect estimation still use the full sample.

Theorem~\MainRef{thm:plugin} supplies $\epsilon_n=2D_n$; under their iid conditions and fixed tuning, Theorem~\MainRef{ho:main} and Proposition~\MainRef{ho:stable-prop} supply $\epsilon_n=O(n^{-1})$. The intersection is conditionally calibrated when $\epsilon_n+\beta_n=o(p_s)$; its convex hull retains the lower coverage bound. The set argument also applies to regional samples with a localization set valid for their sampling scheme.






\end{document}